\DeclareRobustCommand{\VAN}[3]{#2}
\let\VANthebibliography\thebibliography
\def\thebibliography{\DeclareRobustCommand{\VAN}[3]{##3}\VANthebibliography}
\title[Quiescent Radio Occurrence Rate for UCDs]{The Occurrence Rate of Quiescent Radio Emission for Ultracool Dwarfs  using a Generalized Semi-Analytical Bayesian Framework}
\author[Kao \& Shkolnik]{
Melodie M. Kao,$^{1,2,3,4}$\thanks{E-mail: melodie.kao@ucsc.edu}
Evgenya L. Shkolnik$^{1}$
\\
% List of institutions
$^{1}$Arizona State University, School of Earth and Space Exploration, 550 E Tyler Mall PSF 686, Tempe, AZ 85287\\
$^{2}$University of California Santa Cruz, Department of Astronomy amd Astrophysics, 1156 High Street, Santa Cruz, CA 95064\\
$^{3}$NASA Hubble Postdoctoral Fellow\\
$^{4}$Heising-Simons 51 Pegasi b Fellow\\
}
\date{Accepted 2023 June 28. Received 2023 June 28; in original form 2022 April 11}
\begin{document}
\label{firstpage}
\pagerange{\pageref{firstpage}--\pageref{lastpage}}
\maketitle

\begin{abstract}
We present a generalized analytical Bayesian framework for calculating the occurrence rate of steady emission (or absorption) in astrophysical objects.  As a proof-of-concept, we apply this framework to non-flaring quiescent radio emission in ultracool ($\leq$ M7) dwarfs.  Using simulations, we show that our framework recovers the simulated radio occurrence rate to within 1--5\% for sample sizes of 10--100 objects when averaged over an ensemble of trials and simulated occurrence rates for our assumed luminosity distribution models. In contrast, existing detection rate studies may under-predict the simulated rate by 51--66\% because of sensitivity limits.  Using all available literature results for samples of 82 ultracool M dwarfs, 74 L dwarfs, and 23 T/Y dwarfs, we find that the maximum-likelihood quiescent radio occurrence rate is between  $15^{+4}_{-4}$ -- $20^{+6}_{-5}$\%, depending on the luminosity prior that we assume. Comparing each spectral type, we find occurrence rates of $17^{+9}_{-7}$ -- $25^{+13}_{-10}$\%  for M dwarfs, $10^{+5}_{-4}$ -- $13^{+7}_{-5}$\% for L dwarfs, and $23^{+11}_{-9}$ -- $29^{+13}_{-11}$\%  for T/Y dwarfs.  We rule out potential selection effects and speculate that age and/or rotation may account for tentative evidence that the quiescent radio occurrence rate of L dwarfs may be suppressed compared to M and T/Y dwarfs and phenomenon.  Finally, we discuss how we can harness our occurrence rate framework to carefully assess the possible physics that may be contributing to observed occurrence rate trends.

\end{abstract}

\begin{keywords}
brown dwarfs --- planets and satellites: magnetic fields --- radio continuum: stars --- stars: magnetic fields
\end{keywords}
%planets and satellites: aurorae ---

%% Keywords should appear after the \end{abstract} command. The uncommented
%% example has been keyed in ApJ style. See the instructions to authors
%% for the journal to which you are submitting your paper to determine
%% what keyword punctuation is appropriate.

%% Authors who wish to have the most important objects in their paper
%% linked in the electronic edition to a data center may do so in the
%% subject header.  Objects should be in the appropriate "individual"
%% headers (e.g. quasars: individual, stars: individual, etc.) with the
%% additional provision that the total number of headers, including each
%% individual object, not exceed six.  The \objectname{} macro, and its
%% alias \object{}, is used to mark each object.  The macro takes the object
%% name as its primary argument.  This name will appear in the paper
%% and serve as the link's anchor in the electronic edition if the name
%% is recognized by the data centers.  The macro also takes an optional
%% argument in parentheses in cases where the data center identification
%% differs from what is to be printed in the paper.

%%%%%%%%%%%%%%%%%%%%%%%%%%%%%%%%%%%%%%%%%%%%%%%%%%%%%%%%%%%%
%%%%%%%%%%%		    	    INTRO  				%%%%%%%%%%%%
%%%%%%%%%%%%%%%%%%%%%%%%%%%%%%%%%%%%%%%%%%%%%%%%%%%%%%%%%%%%
\section{Introduction}\label{sec.intro}
Observations of gigahertz radio emission from ultracool dwarfs (M7 and later spectral type) demonstrate that these objects emit two distinct radio components.  The first is periodically flaring and highly circularly polarized electron cyclotron maser emission that traces kiloGauss magnetic fields \citep{Hallinan2007ApJ...663L..25H, Hallinan2008ApJ...684..644H}.  This emission is the radio manifestation of aurorae \citep{Hallinan2015Natur.523..568H,Kao2016ApJ...818...24K, Pineda2017ApJ...846...75P} and occurs in ultracool dwarfs at least as late as T6.5 \citep{RouteWolszczan2012ApJ...747L..22R, RouteWolszczan2016ApJ...821L..21R, Williams2015ApJ...808..189W, Kao2016ApJ...818...24K, Kao2018ApJS..237...25K}. 

The second component is quasi-steady and weakly circularly polarized ``quiescent" emission.  This nonthermal and incoherent radio emission accompanies all known detections of ultracool dwarf aurorae \citep{Pineda2017ApJ...846...75P}.  While the source of the magnetospheric plasma responsible for quiescent radio emission in ultracool dwarfs is unknown, the emission itself has been attributed to optically thin (gyro)synchrotron emission \citep[e.g.][]{Berger2005ApJ...627..960B, Osten2006ApJ...637..518O, Williams2015ApJ...815...64W, Lynch2016MNRAS.457.1224L, Kao2023}. It persists for months to years \citep[e.g.][]{Berger2008ApJ...676.1307B, Kao2016ApJ...818...24K, Kao2018ApJS..237...25K}, pointing to long-lived trapped populations of relativistic electrons in ultracool dwarf magnetospheres that cannot be accelerated via flare processes \citep{Kao2023}.  Instead, recent resolved imaging at 8.4~GHz and 4.9~GHz show that such emission may trace radiation belt analogs around ultracool dwarfs \citep{Kao2023, Climent2023arXiv230306453C}.

The presence of incoherent nonthermal radio emission requires both a source of radiating electrons and magnetic fields to accelerate these electrons. Thus, the absence of quiescent radio emission from a single object cannot distinguish between a lack of strong magnetic fields or magnetospheric plasma.  In contrast, examining the occurrence rates of quiescent radio emission can potentially reveal how ultracool dwarf radio activity depends on various fundamental properties such as $T_{\mathrm{eff}}$, mass,  rotation rate, and age. Statistical studies of the occurrence rate of quiescent ultracool dwarf radio emission may yield insights into the nature of this emission. 

However, no such analyses currently exist. Instead, volume-limited radio surveys yield detection rates between $\sim$5--10\% for late-M, L, and T ultracool dwarfs \citep{Antonova2013AA...549A.131A, Lynch2016MNRAS.457.1224L, RouteWolszczan2016ApJ...830...85R}.  Examining detections as a function of $v\sin i$ measurements shows that the detection fraction increases for more rapidly rotating objects \citep{Pineda2017ApJ...846...75P}.  These studies do not distinguish between flaring versus quiescent radio emission nor do they separately treat binaries from single objects.   Furthermore, they likely underpredict the occurrence rate of ultracool dwarf radio emission, since detection rates cannot account for observation effects like sensitivity and time variability.  Consequently, the degree to which detection fractions underpredict the true radio occurrence rate is unknown.  Finally, as sample sizes increase, the cumulative impact of unaccounted-for systematics grows.  One consequence is that confidence intervals on detection rates are less robust at recovering emission rates and therefore act as increasingly poor estimators of occurrence rates.  In other words, more data can increase the precision of detection rates at the cost of decreasing their accuracy.  Thus, ultracool dwarf magnetic activity trends identified with detection rate studies that do not account for completeness \citep[e.g.][]{Pineda2016ApJ...826...73P, Pineda2017ApJ...846...75P, Schmidt2015AJ....149..158S} must be treated with caution.

A different approach to studying magnetic activity on ultracool dwarfs is to quantify the occurrence rate of flaring behavior.  Using a Monte Carlo simulation to model the population of radio flaring ultracool dwarfs, \citet{Route2017ApJ...845...66R} find that $\sim$4.6\% of ultracool dwarfs are magnetically active, which agrees with detection rates from volume-limited surveys.  However, the rarity of ultracool dwarf radio detections together with the complex nature of flaring ultracool dwarf radio emission require their framework to rely on simplifying assumptions that are not self-consistent.  In particular, without measured ultracool dwarf radio flare luminosity distributions, they assume luminosity distributions that follow power law indices from extreme UV and X-ray flare studies of mid-M dwarfs. This assumption is at odds with observations that demonstrate that auroral activity is distinct from stellar flares  \citep{Hallinan2015Natur.523..568H, Kao2016ApJ...818...24K, Pineda2017ApJ...846...75P}.  Indeed, X-ray activity in confirmed and candidate auroral objects departs dramatically from that of M dwarfs  \citep{Berger2010ApJ...709..332B, Williams2014ApJ...785....9W, Callingham2021NatAs...5.1233C} and a transition from stellar-like to planetary-like magnetic activity appears to occur over the ultracool dwarf regime \citep{Pineda2017ApJ...846...75P}. Additionally, radio aurorae are beamed rather than isotropically emitted \citep{Zarka1998JGR...10320159Z}, but \citet{Route2017ApJ...845...66R} does not account for the geometric effects of beaming. Finally, they assume that off-target phase calibration scans do not affect the observability of periodically flaring radio aurorae. Thus, their framework cannot capitalize on the majority of ultracool dwarf radio observations that are available in the literature.  

Quantifying ultracool dwarf magnetic activity using their quiescent rather than periodically flaring auroral radio emission has several benefits.  First, detections of quiescent emission do not depend strongly on rotational phase coverage or serendipitously timed calibration scans.  This is in contrast to periodically flaring ultracool dwarf aurorae, which can have short flare lifetimes. Second, we do not have to account for beaming effects. Third, we show in this paper that calculating the quiescent radio occurrence rate can be done analytically from first principles with few physical assumptions.

\section{Method}\label{sec.Occurrence_Rate}
\newcommand{\rms}{\sigma_{\mathrm{rms}}}
\newcommand{\rmsi}{\sigma_{\mathrm{rms_i}}}
\newcommand{\derr}{\sigma_{\mathrm{d}}}
\newcommand{\derri}{\sigma_{\mathrm{d_i}}}
\newcommand{\pierri}{\sigma_{\mathrm{\pi_i}}}
\newcommand{\pierr}{\sigma_{\mathrm{\pi}}}
\newcommand{\detect}{\mathrm{detect}}
\newcommand{\thr}{\mathrm{thr}}
\newcommand{\prob}{\mathbb{P}}
\newcommand{\pdf}{\mathcal{P}}
\newcommand{\normal}{\mathcal{N}}
\newcommand{\vin}{V_{\mathrm{in}}}
\newcommand{\vout}{V_{\mathrm{out}}}
\newcommand{\vtot}{V_{\mathrm{tot}}}
%\newcommand{\Lnu}{L_{\nu}}

%%%%%%%%%%%%%%%%%%%%%%%%%%%%%%%%%%%%%%%%%%%%%%%%%%%%%%%%%%
%%%%%%%%%%%%%%% SUBSECTION:  DERIVATION  %%%%%%%%%%%%%%%%%
\subsection{$\pdf(\theta \mid D)$: the probability density distribution that the occurrence rate is $\theta$ given observed data $D$}

Let $\theta$ be the occurrence rate of radio emission for which we seek a probability density.  
We first derive an analytic expression for the probability $\prob(D \mid \theta)$ of observing our data $D$ given a fixed occurrence rate $\theta = \Theta$.  Using data compiled from the literature, we numerically calculate $\prob(D \mid \theta  = \Theta)$ for $\theta \in [0,1]$.  We then apply Bayes' Theorem to calculate the probability density distribution $\pdf(\theta \mid D)$.  

Our data $D$ consists of a set of $N$ observations that we have compiled from all available radio observations of ultracool dwarfs in the literature, summarized in Table \ref{table:literatureBD}.  Let the $i^\mathrm{th}$ observation $D_i = \{ \detect_i, \rmsi, d_i, \derri, L_{\nu, i} \}$ consist of the values for the following random variables: 
\begin{enumerate}
\item detect (0 if undetected, 1 if detected)
\item $\rms$ (rms noise for each object image)
\item $d$ or $\pi$ (distance to or parallax of each object)
\item $\derr$ or $\pierr$ (error for the measured $d$ or $\pi$)
\item $L_{\nu}$ (assumed specific luminosity for quiescent radio emission).
\end{enumerate}
We use parallax measurements when available or other distance estimates when not.

We assume that the data from each observation are independent from other observations, such that
\begin{equation} \label{eqn:p(D|theta)}
\prob(D \mid \theta=\Theta) = \prod_{i=1}^N \prob(D_i \mid \theta=\Theta) \quad .
\end{equation}
In contrast to an interpretation that a constant fraction of ultracool dwarfs emit radio emission, our assumption states that every individual ultracool dwarf independently has some intrinsic probability $\theta$ of emitting radio emission. We discuss how to construct a dataset that satisfies our assumption of independence in \S \ref{sec.DataInclusion}.

For each observation, $D_i$ consists of the intersection of observed random variables:
\begin{equation} 
	\prob(D_i \mid \theta=\Theta)  \,\, =  \,\, \prob(\detect = \detect_i \cap \rms = \rmsi \cap \derr = \derri \cap d = d_i \cap L_{\nu} = L_{\nu,i} \mid \theta=\Theta) \quad .
\end{equation} 
For the sake of brevity, we omit the random variables and instead write:
\begin{equation} \label{eqn:p_Di_short}
 	\prob(D_i \mid\theta= \Theta) = \prob(\detect_i \cap \rmsi \cap \derri \cap d_i \cap L_{\nu, i} \mid \theta=\Theta) \quad.
\end{equation}\label{eqn:conditioning}
Conditioning on each characteristic and simplification by independence yield
\begin{equation} \label{eqn:p_Di_conditioned_simplified}
 \prob(D_i \mid \theta = \Theta) =   \prob(\rmsi)\times \prob(\derri)  \times\prob(d_i) \times \prob(L_{\nu,i}) 
 									 \times \prob(\detect_i \mid \theta = \Theta, \rmsi, \derri,  d_i, L_{\nu,i} ) 
				    	      \quad .% \mid \theta=\Theta ) \quad .
\end{equation}
For objects with parallax measurements, $\pi = \pi_i $ replaces $d = d_i$ for the remainder of the discussion, except where noted.

Eq. \ref{eqn:p_Di_conditioned_simplified} asserts that distances and intrinsic luminosities are independent, which is physically true. However, little is known about the radio emission that we aim to study, other than that it is non-thermal, not auroral, and can trace synchrotron radiation belts \citep{Kao2023, Climent2023arXiv230306453C}. As such, we cannot use physical arguments to select an intrinsic luminosity.  Instead, we only know if an observation corresponds to a detection or not, and this \textit{does} depend on distance and rms noise. We can account for observational realities by using the law of total probability to combine the last two quantities in the right-hand product of  Eq. \ref{eqn:p_Di_conditioned_simplified} and consider the mutually exclusive cases where the luminosity of an object is and is not detectable:
\begin{equation} \label{eqn:p_Di_conditioned_totalprob}
\begin{split}
 \prob(D_i \mid & \theta=\Theta) = \prob(\rmsi) \times  \prob(\derri)  \times\prob(d_i) \times \\
  \Bigg[\bigg( & \prob(\detect_i \mid \theta= \Theta, \rmsi, \derri, d_i, L_{\nu,i} \geq L_{\mathrm{thr, i}} ) 
					   \prob(L_{\nu,i} \geq L_{\mathrm{thr, i}} \mid \rmsi, \derri, d_i)\bigg) \\
				     + \bigg(&\prob(\detect_i \mid \theta= \Theta, \rmsi,\derri, d_i, L_{\nu,i} < L_{\mathrm{thr, i}} )  
				    \times \prob(L_{\nu,i} < L_{\mathrm{thr, i}} \mid  \rmsi,\derri, d_i) \bigg)\Bigg] \quad ,
\end{split}
\end{equation}
for which  $L_{\mathrm{thr, i}}$ is the minimum detectable luminosity.  It is a function of the given $\rmsi$, an assumed minimum signal-to-noise ratio threshold $\thr$ for a detection, and an assumed finite distance $x_i$: 
\begin{equation}\label{eqn:L_thri}
 L_{\mathrm{thr, i}}  =    L_{\mathrm{thr, i}}(x = x_i \mid \rmsi)  = \thr \times 4\pi x_i ^2  \times \rmsi  \quad .
\end{equation}
Here, $x_i$ is a random variable from the probability density distribution $\pdf(x_i \mid d_i, \derri)$ that describes the object's location in space given its measured $d_i$ and $\derri$. The rms noise in interferometric images are not Gaussian, and 3$\rms$ detections are considered tentative. Bootstrapping demonstrates that 4$\rms$ corresponds to a $<$1\%  false positive rate \citep[e.g.][]{Kao2018ApJS..237...25K}. Therefore, we assume a constant value for $\thr = 4$ such that $\detect_i = 1$ for objects with flux densities  $\geq$4$\rms$. 

Rather than assuming a single intrinsic luminosity for a given observation, we can now incorporate our ignorance by
assuming a distribution  $\pdf(L_{\nu})$  of possible luminosities to calculate the probability $\prob(L_{\nu,i} \geq L_{\mathrm{thr, i}} \mid \rmsi, \derri, d_i)$  that an object's luminosity is detectable.  We can then marginalize over all possible intrinsic luminosities and distances  (\S\ref{sec.lum}).  This choice allows one to explore the impact of different luminosity distributions as well as incorporate the current state of knowledge in a given science application (\S \ref{sec.modelValidation}).  

The detectability of an object also depends on whether or not it is emitting, so we introduce a new random variable $e \in \{0,1\}$, where $e=0$ is not emitting and $e=1$ is emitting.  The probability that an object is or is not emitting $\prob(e \mid \theta= \Theta)$ depends only on the occurrence rate:
\begin{align} \label{eqn:emitting}
 		\prob &(e = 1 \mid \theta= \Theta) = \Theta \\
 		\prob &(e = 0 \mid \theta= \Theta) = 1-\Theta \quad.
\end{align}
We then invoke the law of total probability once more to consider both cases:
\begin{equation} \label{eqn:p_Di_conditioned_totalprob_emit}
\begin{split}
 \prob(D_i \mid  \theta=\Theta) = \prob(\rmsi) & \times   \prob(\derri)  \times\prob(d_i) \times \\
 \sum_{e \in  \{0,1\}} 
 \prob(e \mid \theta= \Theta) \times 
 \Bigg[\bigg(  &\prob(\detect_i \mid \theta= \Theta, \rmsi, \derri, d_i, L_{\nu,i} \geq L_{\mathrm{thr, i}}, e ) 
					 \times \prob(L_{\nu,i} \geq L_{\mathrm{thr, i}} \mid \rmsi, \derri, d_i, e)\bigg) \\
				     + \bigg(&\prob(\detect_i \mid \theta= \Theta, \rmsi,\derri, d_i, L_{\nu,i} < L_{\mathrm{thr, i}},e )  
				    \times \prob(L_{\nu,i} < L_{\mathrm{thr, i}} \mid  \rmsi,\derri, d_i,e) \bigg)\Bigg] \quad .
\end{split}
\end{equation}

An object is detectable if and only if it is emitting and its luminosity is at least the minimum detectable luminosity.  
Conversely, an object is undetectable if it is emitting and its luminosity is less than the minimum detectable luminosity, as well as if it is not emitting:
\begin{align} \label{eqn:detect}
	\prob(\detect_i = 1 \mid \theta=\Theta, \rmsi,\derri, d_i, L_{\nu,i} \geq  L_{\mathrm{thr, i}}, e=1 )  & = 1 \\
	\prob(\detect_i = 0 \mid \theta=\Theta, \rmsi,\derri, d_i, L_{\nu,i} <  L_{\mathrm{thr, i}}, e=1 ) & = 1 \\
	\prob(\detect_i = 0 \mid \theta=\Theta, \rmsi,\derri, d_i, L_{\nu,i} \geq  L_{\mathrm{thr, i}}, e=0 ) & = 1 \\
	\prob(\detect_i = 0 \mid \theta=\Theta, \rmsi,\derri, d_i, L_{\nu,i} <  L_{\mathrm{thr, i}}, e=0 ) & = 1 \\
\end{align}

Combining Eq. \ref{eqn:emitting},  Eq.  \ref{eqn:p_Di_conditioned_totalprob_emit}, and Eq. \ref{eqn:detect} gives the piecewise analytic function for the probability of observing our data given an occurrence rate 
\begin{equation} \label{eqn:p_Di_final}
\prob(D \mid \theta=\Theta)= \displaystyle{\prod_{i=1}^N  }  \prob(\rmsi) \, \prob(\derri) \, \prob(d_i)  \times
\begin{cases} 
      \Theta \, \prob(L_{\nu,i} \geq L_{\mathrm{thr, i}} \mid \rmsi,\derri, d_i, e=1) 	 	&, \,\,   \detect_i = 1 \\
      1-\Theta \,  \prob(L_{\nu,i} \geq L_{\mathrm{thr, i}} \mid \rmsi,\derri, d_i, e=1)      &, \,\, \detect_i = 0 \quad .
\end{cases}
\end{equation}

%%%%%%%%%%%%%%%%%%%%%%%%%%%%%%%%%%%%%%%%%%%%%%%%%%%%%%%%%%%
%%%%%%%%%% SUBSECTION:  RANDOM VARIABLE PROBS %%%%%%%%%%%%%

%%%%%% LUMINOSITY
%\vspace*{20pt}
\subsection{$\prob(L_{\nu,i} \geq L_{\mathrm{thr, i}} \mid  \rmsi,\derri, d_i)$: Calculating the probability that an object's emitted luminosity is detectable 
} \label{sec.lum}

\subsubsection{Expression using distance $d_i$}

Recall that the minimum detectable luminosity $L_{\mathrm{thr, i}}=4\pi x_i^2 \,\, \thr \, \rmsi$ is a function of an assumed distance $x_i$ for the object in our observation (Eq. \ref{eqn:L_thri}).  However, $x_i$ is not known exactly and instead is drawn from the distribution $\pdf(x_i \mid d_i, \derri)$ describing an object's location in space given its measured $d_i$ and $\derri$.  We therefore invoke the law of total probability and integrate over the joint probability distribution that an object's luminosity $L_{\nu,i}$ is detectable at distance $x_i$ and that it is located at distance $x_i$: 
\begin{equation} \label{eqn:Prob_L_nu}
	\prob(L_{\nu,i}  \geq  L_{\mathrm{thr, i}}   \mid \rmsi,\derri, d_i, e=1)   = 
	\displaystyle{\int\limits_{0}^{d_{\mathrm{thr},{\mathrm{max_i}}}}} \quad
	\displaystyle{\int\limits_{L_{\mathrm{thr, i}}}^{ L_{\nu,\max} } }
	\pdf(L_{\nu,i} \cap x_i \mid \rmsi,\derri, d_i, e=1) %\mathbf{1}_{L_{\nu,i} \geq L_{\mathrm{thr, i}}} 	
	\quad  dL_{\nu,i}  \, dx_i   \quad.
\end{equation}

For the outer integral,  we choose a lower limit of $0$ because objects cannot have negative distances.   The cutoff distance for detecting the maximum possible luminosity defines the upper limit
\begin{equation}
		d_{\mathrm{thr},{\mathrm{max_i}}} =  \bigg( \frac{ L_{\nu,\mathrm{max}}}{ 4 \pi \,\, (\thr)  \,  \rmsi} \bigg)^{1/2}  \quad .
\end{equation}
 For the inner integral, the lower limit is by definition the minimum detectable luminosity $L_{\mathrm{thr, i}}$.  The upper limit is the maximum possible luminosity $L_{\nu,\mathrm{max}}$, which we discuss in \S \ref{sec:lum_distrib}.

By conditioning and independence, Eq.  \ref{eqn:Prob_L_nu} separates to:
\begin{equation} \label{eqn:Prob_L_nu_simplified}
		\prob(L_{\nu,i}  \geq  L_{\mathrm{thr, i}}   \mid  \rmsi,\derri, d_i, e=1) =  
		\displaystyle{\int\limits_{0}^{d_{\mathrm{thr},{\mathrm{max_i}}}}}   
			\pdf(x_i \mid \derri, d_i)  
		\displaystyle{\int\limits_{L_{\mathrm{thr, i}}}^{ L_{\nu,\max} } }   
			\pdf(L_{\nu,i} \mid e=1)  \quad dL_{\nu,i}  \,dx_i  \quad,
\end{equation}
and we assume that the errors on the measured distance are Gaussian, such that 
\begin{equation}\label{eqn:gauss_distance}
 \pdf(x_i \mid d_i, \derri)  	= \frac{1}{\derri \sqrt{2\pi}} e^{-\frac{1}{2} \big(\frac{x_i-d_i}{\derri}\big)^2}  
 							 	= \normal(x_i; d_i, \derri)  \quad .  
\end{equation}

Some objects only have estimated distance ranges available.  For these objects, we assume a uniform distribution over the reported distance range $\Delta d_i = d_{i,\mathrm{max}} - d_{i,\mathrm{min}}$ such that
\begin{equation}
\pdf(x_i \mid d_i) = 
\begin{cases} 
	1 / \Delta d_i & \, , \,\,  d_{i,\mathrm{min}} \leq x_i \leq d_{i,\mathrm{max}} \\[5pt]
	0  	& \, , \,\,  \mathrm{otherwise}
\end{cases}
\quad.
\end{equation}

\subsubsection{Defining the luminosity distribution of emitting objects $\pdf(L_{\nu,i} \mid  e=1)$} \label{sec:lum_distrib}
\setlength{\tabcolsep}{0.07in}
	\begin{table*}\centering 
		\begin{ThreePartTable}
			\caption{Specific luminosities for detected quiescent radio emission in single-object ultracool dwarf systems \label{table:quiescent}}
			\begin{tabularx}{\textwidth}{lclr@{\hspace{0.01in}}c@{\hspace{0.01in}}lr@{\hspace{0.01in}}c@{\hspace{0.01in}}lcr@{\hspace{0.01in}}c@{\hspace{0.01in}}lrl}
				\toprule \vspace{2pt}
				%\tabletypesize{\scriptsize}																
Object Name						&	
Other Name						&	
SpT								&	
\multicolumn{3}{c}{$\pi$}		&	
\multicolumn{3}{c}{$d$}			&	
ref								&	
\multicolumn{3}{c}{$F_{\nu}$}	&	
$\log_{10}(L_{\nu})$ 			&
ref								\\[-3pt]		
								&	
								&
								&
\multicolumn{3}{c}{(mas)}		&
\multicolumn{3}{c}{(pc)}		&
								&	
\multicolumn{3}{c}{($\mu$Jy)}	&	
								&	
								\\%		
\midrule  
%\startdata	
2MASS J00242463-0158201	&	BRI 0021-0214			&	M9.5V	&	79.9653		&	$\pm$	&	0.2212	&	12.5054	&	$\pm$	&	0.0346	&	23	&	25		&	$\pm$	&	15	&	12.7	&	2	\\
2MASS J00361617+1821104	&							&	L3.5	&	114.4167	&	$\pm$	&	0.2088	&	8.7400	&	$\pm$	&	0.0159	&	23	&	134		&	$\pm$	&	16	&	13.1	&	3	\\
2MASS J00361617+1821104	&							&	L3.5	&	114.4167	&	$\pm$	&	0.2088	&	8.7400	&	$\pm$	&	0.0159	&	23	&	152		&	$\pm$	&	9	&	13.1	&	3	\\
2MASS J00361617+1821104	&	LSPM J0036+1821			&	L3.5	&	114.4167	&	$\pm$	&	0.2088	&	8.7400	&	$\pm$	&	0.0159	&	23	&	240		&	$\pm$	&	11	&	13.3	&	2	\\
2MASS J00361617+1821104	&							&	L3.5	&	114.4167	&	$\pm$	&	0.2088	&	8.7400	&	$\pm$	&	0.0159	&	23	&	259		&	$\pm$	&	19	&	13.4	&	3	\\
2MASS J01365662+0933473	&	SIMP J013656.5+093347.3	&	T2.5	&	163.6824	&	$\pm$	&	0.7223	&	6.1094	&	$\pm$	&	0.0270	&	23	&	11.5	&	$\pm$	&	1.2	&	11.7	&	16	\\
2MASS J01365662+0933473	&	SIMP J013656.5+093347.3	&	T2.5	&	163.6824	&	$\pm$	&	0.7223	&	6.1094	&	$\pm$	&	0.0270	&	23	&	33.3	&	$\pm$	&	5.9	&	12.2	&	15	\\
2MASS J03393521-3525440	&	LP 944-20				&	L0		&	155.7590	&	$\pm$	&	0.0991	&	6.4202	&	$\pm$	&	0.0041	&	23	&	75		&	$\pm$	&	23	&	12.6	&	1	\\
2MASS J03393521-3525440 &	LP 944-20				&	L0		&	155.7590	&	$\pm$	&	0.0991	&	6.4202	&	$\pm$	&	0.0041	&	23	&	90.7	&	$\pm$	&	17.6&	12.7	&	17	\\
2MASS J03393521-3525440 &	LP 944-20				&	L0		&	155.7590	&	$\pm$	&	0.0991	&	6.4202	&	$\pm$	&	0.0041	&	23	&	137.6	&	$\pm$	&	10.5&	12.8	&	17	\\
2MASS J07464256+2000321	&	LSPM J0746+2000			&	L0.5V	&	80.9		&	$\pm$	&	0.8		&	12.4	&	$\pm$	&	0.1		&	10	&	149		&	$\pm$	&	15	&	13.4	&	7	\\
2MASS J07464256+2000321	&	LSPM J0746+2000			&	L0.5V	&	80.9		&	$\pm$	&	0.8		&	12.4	&	$\pm$	&	0.1		&	10	&	224		&	$\pm$	&	15	&	13.6	&	7	\\
2MASS J10430758+2225236	&							&	L8.5	&				&	--		&			&	16.4	&	$\pm$	&	3.2		&	20	&	16.3	&	$\pm$	&	2.5	&	12.7	&	15	\\
2MASS J10475385+2124234	&							&	T6.5	&	94.73		&	$\pm$	&	3.81	&	10.56	&	$\pm$	&	0.42	&	11	&	7.4		&	$\pm$	&	2.2	&	12.0	&	16	\\
2MASS J10475385+2124234	&							&	T6.5	&	94.73		&	$\pm$	&	3.81	&	10.56	&	$\pm$	&	0.42	&	11	&	9.3		&	$\pm$	&	1.5	&	12.1	&	22	\\
2MASS J10475385+2124234	&							&	T6.5	&	94.73		&	$\pm$	&	3.81	&	10.56	&	$\pm$	&	0.42	&	11	&	16.5	&	$\pm$	&	5.1	&	12.3	&	21	\\
2MASS J10475385+2124234	&							&	T6.5	&	94.73		&	$\pm$	&	3.81	&	10.56	&	$\pm$	&	0.42	&	11	&	17.5	&	$\pm$	&	3.6	&	12.4	&	15	\\
2MASS J10481463-3956062	&	DENIS J1048.0-3956		&	M8.5Ve:	&	247.2157	&	$\pm$	&	0.1236	&	4.0451	&	$\pm$	&	0.0020	&	23	&	131.4	&	$\pm$	&	10.8&	12.4	&	17	\\
2MASS J10481463-3956062	&	DENIS J1048.0-3956		&	M8.5Ve:	&	247.2157	&	$\pm$	&	0.1236	&	4.0451	&	$\pm$	&	0.0020	&	23	&	140		&	$\pm$	&	40	&	12.4	&	8	\\
2MASS J10481463-3956062	&	DENIS J1048.0-3956		&	M8.5Ve:	&	247.2157	&	$\pm$	&	0.1236	&	4.0451	&	$\pm$	&	0.0020	&	23	&	211.9	&	$\pm$	&	8.2	&	12.6	&	17	\\
2MASS J12373919+6526148	&							&	T6.5	&	96.07		&	$\pm$	&	4.78	&	10.41	&	$\pm$	&	0.52	&	11	&	27.8	&	$\pm$	&	1.3	&	12.6	&	16	\\
2MASS J12373919+6526148	&							&	T6.5	&	96.07		&	$\pm$	&	4.78	&	10.41	&	$\pm$	&	0.52	&	11	&	43.3	&	$\pm$	&	7.3	&	12.7	&	15	\\
2MASS J14563831-2809473	&	LHS 3003, LP 914-54		&	M7.0Ve	&	141.6865	&	$\pm$	&	0.1063	&	7.0578	&	$\pm$	&	0.0053	&	23	&	270		&	$\pm$	&	40	&	13.2	&	8	\\
2MASS J15010818+2250020	&	TVLM 513-46546			&	M8.5V	&	93.4497		&	$\pm$	&	0.1945	&	10.7009	&	$\pm$	&	0.0223	&	23	&	190		&	$\pm$	&	15	&	13.4	&	2	\\
2MASS J15010818+2250020	&	TVLM 513-46546			&	M8.5V	&	93.4497		&	$\pm$	&	0.1945	&	10.7009	&	$\pm$	&	0.0223	&	23	&	\multicolumn{3}{c}{$\sim$200$^{a}$}	&	$\sim$13.4	&	14	\\
2MASS J15010818+2250020	&	TVLM 513-46546			&	M8.5V	&	93.4497		&	$\pm$	&	0.1945	&	10.7009	&	$\pm$	&	0.0223	&	23	&	\multicolumn{3}{c}{$\sim$200$^{a}$}	&	$\sim$13.4	&	14	\\
2MASS J15010818+2250020	&	TVLM 513-46546			&	M8.5V	&	93.4497		&	$\pm$	&	0.1945	&	10.7009	&	$\pm$	&	0.0223	&	23	&	208		&	$\pm$	&	18		&	13.5	&	5	\\
2MASS J15010818+2250020	&	TVLM 513-46546			&	M8.5V	&	93.4497		&	$\pm$	&	0.1945	&	10.7009	&	$\pm$	&	0.0223	&	23	&	228		&	$\pm$	&	11$^{b}$	&	13.5	&	18	\\
2MASS J15010818+2250020	&	TVLM 513-46546			&	M8.5V	&	93.4497		&	$\pm$	&	0.1945	&	10.7009	&	$\pm$	&	0.0223	&	23	&	260		&	$\pm$	&	46$^{b}$	&	13.6	&	18	\\
2MASS J15010818+2250020	&	TVLM 513-46546			&	M8.5V	&	93.4497		&	$\pm$	&	0.1945	&	10.7009	&	$\pm$	&	0.0223	&	23	&	284		&	$\pm$	&	13$^{b}$	&	13.6	&	18	\\
2MASS J17502484-0016151	&							&	L5		&	108.2676	&	$\pm$	&	0.2552	&	9.2364	&	$\pm$	&	0.0218	&	23	&	56.4	&	$\pm$	&	5.5	&	12.8	&	19	\\
2MASS J18353790+3259545$^{c}$	&	LSR J1835+3259	&	M8.5V	&	175.8244	&	$\pm$	&	0.0905	&	5.6875	&	$\pm$	&	0.0029	&	23	&	464		&	$\pm$	&	10	&	13.3	&	6	\\
2MASS J18353790+3259545$^{c}$	&	LSR J1835+3259	&	M8.5V	&	175.8244	&	$\pm$	&	0.0905	&	5.6875	&	$\pm$	&	0.0029	&	23	&	525		&	$\pm$	&	15	&	13.3	&	4	\\
%\enddata
\bottomrule
\end{tabularx}
\begin{tablenotes}[]\footnotesize
\item[]\textit{Note} --- This table does not include a measurement for the T6 dwarf WISEP J112254.73+255021.5 and the L2.5 dwarf 2MASS J05233822-1403022.  Flaring radio emission was detected from the T6 dwarf WISEP J112254.73+255021.5 \citep{RouteWolszczan2016ApJ...821L..21R} but \citet{Williams2017ApJ...834..117W} were unable to measure a decisive quiescent radio flux density due to ongoing low-level variability and the possibility of polarization leakage. For the L2.5 dwarf 2MASS J05233822-1403022, \citet{Antonova2007AA...472..257A} report that its radio emission can vary by a factor of $\sim$5 from $\leq$45 to 230$\pm$17 $\mu$Jy with no evidence of short-duration flares during 2-hour observing blocks. Although the low circular polarization of this object rules out coherent aurorae, non-auroral flares at both radio and optical frequencies can persist for at least several hours \citep[e.g.][]{Villadsen2019ApJ...871..214V, Paudel2018ApJ...861...76P}. We exclude this outlier object on the basis of the uncertain nature of its radio emission.
\item[$a$]\citet{Hallinan2006ApJ...653..690H} do not give measured quiescent emission flux densities, but they note that the persistent radio emission does not drop below $\sim$200 $\mu$Jy for TVLM 513-46546.
\item[$b$]\citet{Osten2006ApJ...637..518O} note that low-level variability is present but not strong or periodic flares.
\item[$c$] We do not include quiescent emission flux densities measured from resolved imaging at 8.4~GHz \citep{Kao2023} or 4.9~GHz \citep{Climent2023arXiv230306453C}, since phase errors can reduce integrated measured flux densities.
\item[] \textit{References} ---			
(1)		\citet{Berger2001Natur.410..338B}	;
(2)		\citet{Berger2002ApJ...572..503B}	;
(3)		\citet{Berger2005ApJ...627..960B}	;
(4)		\citet{Berger2006ApJ...648..629B}	;
(5)		\citet{Berger2008ApJ...673.1080B}	;
(6)		\citet{Berger2008ApJ...676.1307B}	;
(7)		\citet{Berger2009ApJ...695..310B}	;
(8)		\citet{BurgasserPutnam2005ApJ...626..486B}	;
(9)		\citet{Dupuy2016ApJ...827...23D}	;
(10)		\citet{Dupuy2017ApJS..231...15D}	;
(11)		\citet{Faherty2012ApJ...752...56F}	;
(12)		\citet{Gauza2015ApJ...804...96G}	;
(13)		\citet{Guirado2018AA...610A..23G}	;
(14)		\citet{Hallinan2006ApJ...653..690H}	;
(15)		\citet{Kao2016ApJ...818...24K}	;
(16)		\citet{Kao2018ApJS..237...25K}	;
(17)		\citet{Lynch2016MNRAS.457.1224L}	;
(18)		\citet{Osten2006ApJ...637..518O}	;
(19)		\citet{Richey-Yowell2020}	;
(20)		\citet{Schmidt2010AJ....139.1808S}	;
(21)		\citet{Williams2013ApJ...767L..30W}	;
(22)		\citet{Williams2015ApJ...808..189W}	;
(23)		\citet{Gaia2018yCat.1345....0G}	
\end{tablenotes}
	\end{ThreePartTable}  
		\end{table*}

We compile all available measurements of quiescent radio emission from single-object ultracool dwarf systems in Table \ref{table:quiescent}, including repeat observations of the same target.  Fifteen single ultracool dwarfs spanning spectral types from M7--T6.5 have been detected at gigahertz radio frequencies, thirteen of which have measurements of quiescent radio emission.  

These low numbers are insufficient for including a well-constrained spectral type dependence for the probability density distribution $\pdf( L_{\nu}  \mid e = 1)$ of an emitting object's luminosity. Furthermore, ultracool dwarf radio emission may depend on other object properties, such as rotation, age, or multiplicity.   No existing studies conclusively document such effects for ultracool dwarfs.  In follow-up papers, we investigate if and how these other factors influence quiescent radio emission.
Here, we consider two distributions  for emitting objects both to demonstrate the generalizability of our framework and to motivate our proof-of-concept science application, quiescent radio emission from ultracool dwarfs.

\begin{figure*}
	\begin{centering}
		\includegraphics[width=\columnwidth]{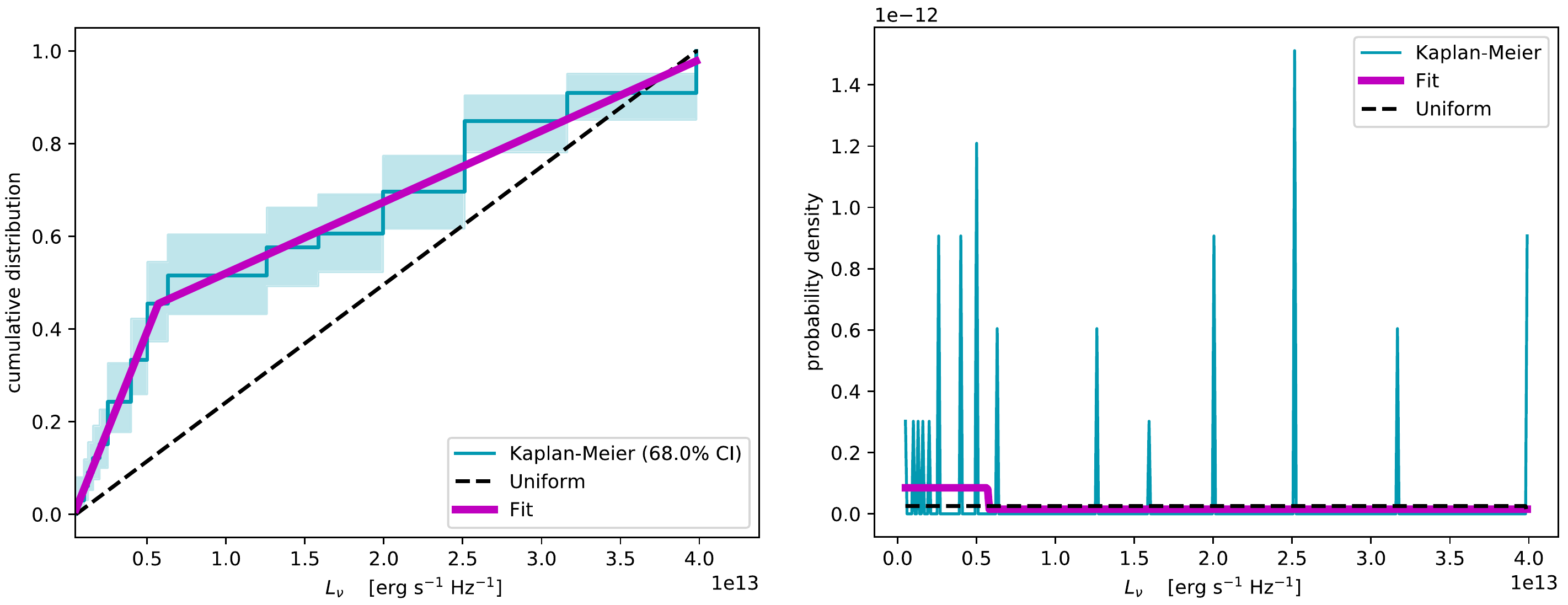}
	\end{centering}
   	\caption{\label{fig:lum_fit} ---Left: The empirical cumulative  distribution function for $L_{\nu}$ for all radio detections of ultracool dwarfs, calculated using the Kaplan-Meier estimator (teal). ---Right: The probability density distribution calculated from the cumulative distribution function. We run simulations for both the distribution fitted by a piecewise linear function (magenta) and a uniform distribution (black).}
\end{figure*}

First, we construct the empirical cumulative distribution function (CDF) for detected single-object luminosities using the Kaplan-Meier estimator. We then fit the empirical CDF with a piece-wise linear function and calculate the probability density distribution by differentiating the fitted CDF.  Figure \ref{fig:lum_fit} shows the CDF and $\pdf( L_{\nu}  \mid e = 1)$.  The Kaplan-Meier estimator is a non-parametric statistic used in survival analysis \citep{Kaplan-Meier1958, Lee_survival}.  For our science case, we can use it to  estimate the probability that an object ``survives" to a given luminosity. The Kaplan-Meier estimator can take into account upper limits by treating such measurements as left-censored data.  However, we do not do so because including non-detections in our calculation implies that all objects are emitting at some level, which may not be the case.  This estimator also captures differences in the unknown spectral behaviors of objects in our sample.  For the purposes of our calculations, employing the luminosity distribution calculated here assumes that emitting objects follow the same luminosity distribution of all \textit{measured} detections.

The small set of detected objects may not sample the true underlying luminosity distribution well, so we also consider a uniform distribution over the minimum and maximum of previously observed levels of quiescent radio emission in single ultracool dwarfs 
	\begin{align}\label{eqn:pdf_L}
		\pdf( & L_{\nu}  \mid e = 1) =  \\
		&
		\begin{cases}
			{\displaystyle \frac{1}{\Delta L} = \frac{1}{L_{\nu,\mathrm{max}} - L_{\nu,\mathrm{min}}} }  \, &, \,\,  L_{\nu,\mathrm{min}} \leq L_{\nu} \leq L_{\nu,\mathrm{max}} \\
			0  \, &, \,\, \text{otherwise} \,\, 
		\end{cases}
	\end{align} 
where $L_{\nu,\mathrm{min}} = 10^{11.7}$ and $ L_{\nu,\mathrm{max}} = 10^{13.6}$~erg~s$^{-1}$~Hz$^{-1}$, respectively.  This prior assumes that we have no information about how $ L_{\nu}$ is distributed within the interval bounds.

Notice that we do not have to make any assumptions about how the probability mass is distributed for  $L_{\nu} \not\in [L_{\nu,\mathrm{min}},   L_{\nu,\mathrm{max}}]$.  Instead, we have defined detect = 1 for objects with  $L_{\nu} \in [L_{\nu,\mathrm{min}},L_{\nu,\mathrm{max}} ]$. Consequently, our framework calculates the occurrence rate of quiescent emission between the assumed luminosity ranges.  

When setting the luminosity range, we must consider the fact that quiescent ultracool dwarf radio emission is not truly steady-state and in fact can vary (Table \ref{table:quiescent} and references therein).   Rather than modeling the time variability of this radio emission, we model it as a range of luminosities that an UCD can have during any given observation.  This approach is analogous to the data that we compile in \S \ref{sec.DataInclusion} for our proof-of-concept application; for each observation, we do not know exactly how its quiescent emission was varying or where in its variability cycle it was.  Instead, we assume that emitting objects have radio luminosities that fall somewhere between our luminosity range of interest, which encompasses all measurements of detected ultracool dwarf quiescent radio emission.  For this assumption to hold, ``on" objects must remain on and ``off" objects remain off. Observations spanning more than 10 years confirm that ultracool dwarf quiescent emission can persist for years-long timescales \citep[e.g.][]{Hallinan2006ApJ...653..690H, gawronski2017,  Hallinan2008ApJ...684..644H, Hallinan2015Natur.523..568H, Osten2009ApJ...700.1750O, Williams2013ApJ...767L..30W, Kao2016ApJ...818...24K, Kao2018ApJS..237...25K, Kao2023, Climent2023arXiv230306453C}.

%%%%%%%% DISTANCE
\subsection{$\prob(d_i)$: Calculating the probability that the object lies at distance $d_i$ }
In the previous section, we employed a probability density distribution of an \textit{individual object's}  measured or estimated distance $\pdf(x_i \mid d_i, \derri)$.  In contrast, this section deals with the distribution that describes the distances of a \textit{population} of objects $\pdf(d)$ to calculate the probability that an individual object lies within some particular distance $\prob(d_i) = \prob(d \in [d_{\min,i},d_{\max,i}])$.

Ultracool dwarf number density studies use Poisson distributions to describe the space density of objects as a population \citep[e.g.][]{Gagne2017ApJS..228...18G, BardalezGagliuffi2019ApJ...883..205B}. Thus, we assume that the population of ultracool dwarfs are radially distributed in a Poisson manner and show in Appendix \ref{appendix:p(d)} that 
\begin{equation}
	\prob(d_i \in [d_{\min,i},d_{\max,i}]) = \vin / \vtot \, . 
\end{equation}

$\vin$ is the spherical shell within which an ultracool dwarf may lie given its distance $d_i$ and the uncertainties on its distance 
$\sigma_{d_{i}}$:
\begin{equation}
	\vin = \frac{4}{3} \pi \bigg( (d_i+ 3\derri)^3 - (d_i - 3\derri)^3 \bigg) \quad. 
\end{equation} 
Uniform distribution over a fixed interval is a fundamental Poisson assumption. However, we previously assumed that the errors on the measured distance are Gaussian. Since $>$99.7\% of the probability mass of a Gaussian distribution is within $\pm3\sigma$, we assume that an object is equally likely to be located anywhere within $\pm3\derri$ of its measured distance $d_i$.     
For objects with measured parallaxes $\pi_i$ and the uncertainty $\sigma_{\pi,i}$:
\begin{equation}
	\vin = \frac{4}{3} \pi \Bigg( \bigg(\frac{1}{\pi_i- 3\sigma_{\pi,i}}\bigg)^3 - \bigg(\frac{1}{\pi_i+ 3\sigma_{\pi,i}}\bigg)^3 \Bigg) \quad. 
\end{equation}

$\vtot$ is the total spherical volume encompassing all ultracool dwarfs under consideration, up to a maximum distance $d_{\mathrm{max}}$: 
\begin{equation}
	\vtot = \frac{4}{3} \pi d_{\mathrm{max}}^3  \,.
\end{equation}

 Current number density surveys extend only to $\sim$20-25 pc with incomplete volume coverage \citep[e.g.][]{Kirkpatrick2012ApJ...753..156K, BardalezGagliuffi2019ApJ...883..205B,Best2020AJ....159..257B}, because surveys are typically magnitude limited for ultracool dwarfs.
Consequently, number density may not remain constant with distance and our framework may need future modifications to account for ultracool dwarf number densities that vary with distance.  %Note that $\prob(d_i) = \prob(\pi_i)$.

%%%%%%%% RMS
\subsection{$\prob(\rmsi)$: Calculating the probability that the sensitivity for an observation is $\rmsi$ }

\begin{figure*}
	\begin{centering}
		\includegraphics[width=\columnwidth]{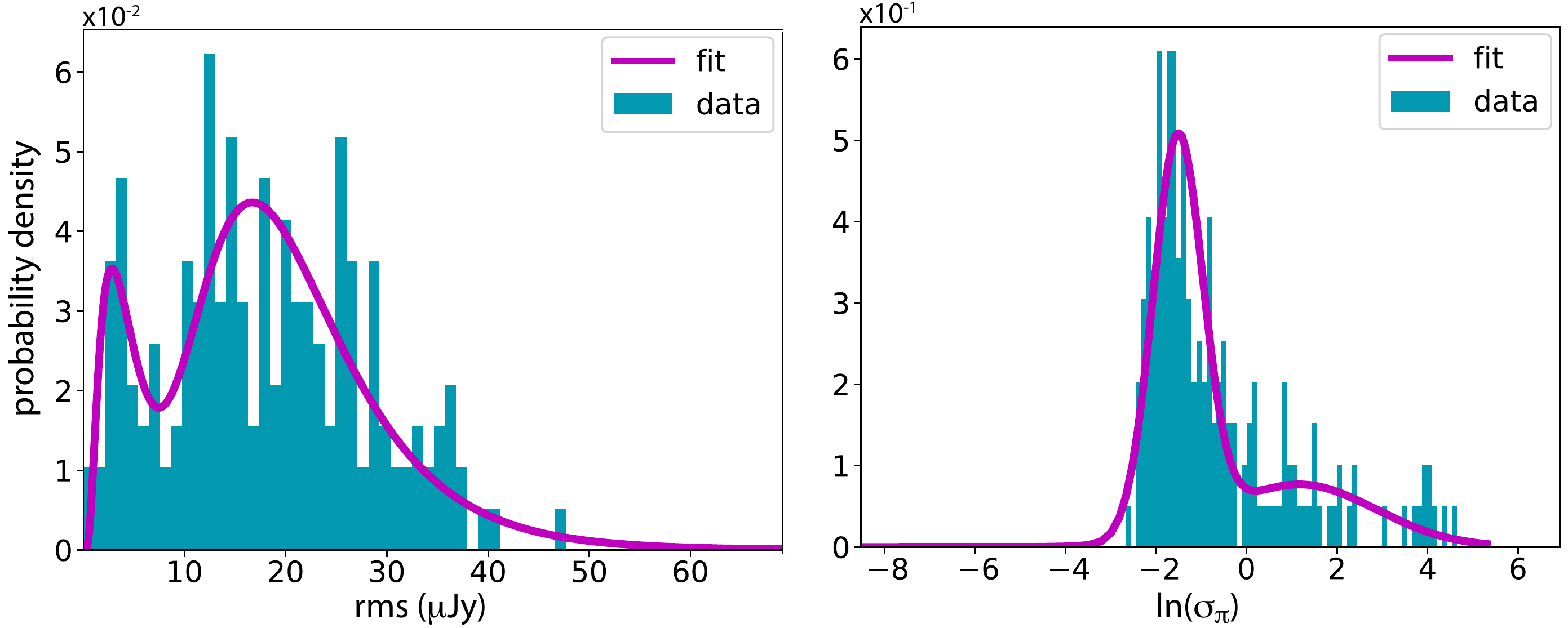}
		\end{centering}
		\caption{\label{fig:rms_fit} ---Left: The empirical probability density distribution for observation sensitivities  $	\pdf(\rms)$ and the fitted double-lognormal for M, L and T/Y ultracool dwarfs in single-object systems.   ---Right: The empirical density distribution for the natural logarithms of measured parallax errors  $\pdf(\ln \derri)$ and fitted Gaussian mixture model for the same population with histogram bin widths corresponding to $b$ as defined in Eq. \ref{eqn:prms}.}
\end{figure*}

We calculate $\prob(\rmsi)$ by first fitting a normalized analytic function $\pdf(\rms)$ to all $\rmsi$ for the full set of observations that we are considering.  The $\rms$ for any image is bounded by instrumental noise, such that $\pdf(\rms \leq 0) = 0$. We model this as a mixture of lognormal distributions to account for heterogenous datasets that may be dominated by several large surveys with individual observations subject to random sources of noise (e.g. telescope failures, weather, calibration errors, etc). A choice of lognormal distributions also enables the use of existing software tools, but readers are free to use other distributions as they see fit.  We therefore take the logarithm of our $\rms$ data and fit it with an n-component Gaussian mixture model
\begin{equation}
	\pdf(\ln \rms) = \sum_{k=1}^{n}  w_k \frac{1}{\sigma_{k} \sqrt{2 \pi} } \exp \bigg( -  \frac{1}{2} \left(\frac{\rms -\mu_k}{\sigma_k}\right)^2  \bigg) \quad,   
\end{equation}
with free parameters weight $w_k$, mean $\mu_k$, and standard deviation $\sigma_k$.
We use the \texttt{GaussianMixture} class from the Python package \texttt{sklearn.mixture} \citep{scikit-learn}, which implements expectation-maximization for fitting.    

Since we include data from many studies that use various telescopes, multiple Gaussian components may best represent our $\rms$ data.  We avoid over-fitting by selecting the number of components that correspond to the first local minimum in the Bayesian inference criterion (BIC)
\begin{equation*}
\mathrm{BIC}  = k \ln(n) - 2 \ln(\hat{L}) \quad, 
\end{equation*}
where $n$ is the number of data points, $k$ is the number of parameters in the Gaussian mixture model, and $\hat{L}$ is the maximized value of the likelihood function that measures the goodness of fit of our model to our $\rms$ data.  In most cases, 2 components are sufficient to fit our data. 

We then integrate over the characteristic length-scale $b$ of the fitted function, centered at $\rmsi$:
\begin{equation} \label{eqn:prms}
	\prob(\rmsi)  =  \int_{\ln \rmsi - b/2}^{\ln \rmsi + b/2}  \pdf(\ln \rms)  \quad d(\ln \rms) \quad.
\end{equation}
We define $b$ as one-fifth of the standard deviation of the narrowest component of our fitted Gaussian mixture model.   Varying the divisor between [1.0, 10.0] does not impact the final calculated maximum-likelihood occurrence rate.

Our fitting procedure chooses a summed double lognormal distribution (Figure \ref{fig:rms_fit}), which corresponds to a p-value of 1.0 for the KS test.  This indicates that the analytical and empirical distributions are indistinguishable. In manual fits, we find that a single lognormal returns a p-value of 0.0 for the KS test, indicating that it is a poor description of our data. This is expected, since we include data from studies before and after the VLA upgrade, which resulted in a factor of 10 increase in sensitivity.

%%%%%%% D_ERR
\subsection{$\prob(\derri)$: Calculating the probability of that the uncertainty on the distance measurement is $\derri$}

We treat $\prob(\derri)$ using the same procedure that we outline for $\prob(\rmsi)$.  77.3\% of objects included in our samples have \textit{Gaia} parallaxes, and 13.3\% have parallaxes measured by \citet{Faherty2012ApJ...752...56F}.  5.8\% of literature objects in our sample do not have parallax measurements and instead have photometric distance estimates. For these objects, we convert estimated distance errors to parallax errors and include them in our distribution fit.  Our fitting procedure chose a summed double lognorm distribution (Figure \ref{fig:rms_fit}).

%%%%%%% Comparisons
\subsection{Comparison to a similar existing Bayesian occurrence rate framework}
The occurrence rate framework that we present here has been independently developed from but is similar in spirit to that presented in \citet{Radigan2014ApJ...793...75R}, with which the authors calculated the occurrence rate of high-amplitude photometric O/IR variability in brown dwarfs spanning the L/T spectral type transition. However, two key differences generalize our framework to cases that do not have strong time dependence and ease of access to reduced data.  

The first difference is our treatment of calculating the probability than an object's  characteristic of interest, ($L_{\nu}$ in this work, variability amplitude in their work) is detectable. Rather than marginalizing over relevant random variables such as $\rms$, they model an empirical joint probability distribution by using injection tests of uniformly distributed synthetic signals into their raw reference star lightcurves and fitting an error function to the recovered detection fraction.  This approach is not feasible for the application that we present in this paper, since reducing hundreds of archival radio observations from multiple radio observatories is computationally intractable for most astronomers. Instead, we choose to marginalize over $\rms$, $d$, and $\derr$ (Eq. \ref{eqn:conditioning}) so that we can individually model each distribution and analytically treat the detectability.  Furthermore, marginalizing over these variables allows us to more easily consider different distributions for $L_{\nu}$, since we do not have to re-model the joint detectability function.  

Second, \citet{Radigan2014ApJ...793...75R} do not explicitly account for object distances.  Instead, they focus on relative variability amplitudes, which are analogous to our use of a steady $L_{\nu}$ that has been converted to flux density.  This amounts to implicit assumptions that all targets included in a sample have detectable fluxes and that the characteristic of interest does not depend on object distance.  While this may have been the case for \citet{Radigan2014ApJ...793...75R},  accounting for distance is essential in any general-purpose occurrence rate calculation. One example use case to which our more general-purpose framework applies is  ultracool dwarf magnetic activity studies that examine how H$\alpha$ detection fractions change with spectral type \citep[e.g.][]{Schmidt2015AJ....149..158S}.  In the future, our framework may also be applied to compiled studies of H$_3^+$ emission from hot Jupiter atmospheres \citep[e.g.][]{Shkolnik2006AJ....132.1267S, Lenz2016AA...589A..99L} as one means of studying the occurrence rate of conditions that are sufficient to produce aurorae in that population of planets.

%%%%%%%% VALIDATION
\section{Model validation} \label{sec.modelValidation}

%----------------------------------------
%----------  SIMULATION Plots   ---------

\subsection{Simulation set-up}\label{sec.sim_setup}
Our framework simplifies to a binomial distribution when observations are infinitely sensitive (i.e. rms~=~0~$\mu$Jy). In reality, sensitivity limits bias observations toward non-detections when emission can fall below such limits.  For faint ultracool dwarf quiescent radio emission, this is especially the case.  Our framework attempts to correct for this bias, so we expect our calculated occurrence rates to be systematically higher than observed detection rates.  To assess the fidelity of our recovered occurrence rate compared to the actual occurrence rate, we simulate suites of experiments with sample sizes of $N=[10, 20, 50, 100]$ observations.  For comparison, our ultracool M dwarf sample has 82 objects, our L dwarf sample has 74 objects, and our T/Y dwarf sample has 23 objects.    

We run 1000 trials for each sample size $N$ at each simulated true quiescent radio occurrence rate, which we vary from $\theta_{\mathrm{true}} = [0.1, 1.0]$ in increments of 0.1.  In a frequentist interpretation, the radio occurrence rate is the empirical emission rate, or the fraction of emitting objects in a given sample $\theta_{\mathrm{emit}} = n_{\text{emit}} / N$. In our framework, however, $\theta_{\text{true}}$ is the probability that an individual object is intrinsically emitting at radio frequencies.  Thus, $\theta_{\mathrm{emit}}$ can vary between samples for the same $\theta_{\text{true}}$. 

We fit distributions to the parallaxes, parallax errors, and observation sensitivities from the literature.  Using these fitted distributions, we randomly draw parameters for each synthetic observation in our simulated sample. We then randomly draw $\theta_{\mathrm{emit}}$ for each sample using a binomial distribution described by $\theta_{\mathrm{true}}$  and randomly select emitting objects in each sample to correspond to this emission rate.  For each emitting object, we randomly draw a radio luminosity from our assumed uniform distribution.  Based on these randomly drawn parameters, we assign a boolean for the detection of each object.

We run two sets of three simulation suites to assess how our occurrence rate model performs under different assumptions, for six total simulation suites.  In the first set of simulation suites, we explore the impact of luminosity priors on calculated occurrence rates by using a luminosity prior that is:
\begin{enumerate}
\item \textbf{KM:} Reconstructed using the Kaplan-Meier survival analysis estimator (Fig. \ref{fig:lum_fit}), where the luminosities vary between the minimum and maximum observed  quiescent radio luminosities for all ultracool dwarfs, such that $10^{11.7} \leq L_{\nu,i} \leq 10^{13.6}$ erg s$^{-1}$ Hz$^{-1}$. 
\item \textbf{Uniform:} A uniform distribution between the minimum and maximum observed quiescent radio luminosities in single-object systems (Eqn. \ref{eqn:pdf_L}). 
\item \textbf{Low luminosity:} A low luminosity distribution that is truncated to $10^{11.7} \leq L_{\nu,i} \leq 10^{12.7}$ erg s$^{-1}$ Hz$^{-1}$.  This luminosity range is consistent with reported quiescent radio luminosities for the set of three detected T dwarfs with measured quiescent radio luminosities (Table \ref{table:quiescent}). This distribution is uniform, since the break in the Kaplan-Meier distribution occurs at a higher luminosity than the maximum luminosity that we use for this case. 
\end{enumerate}
The second set of simulation suites explores a hypothetical dataset that has high observational sensitivities.  For this set, we use these same luminosity distributions but randomly draw from a uniform distribution of $\rmsi = [1.0, 3.0]$ $\mu$Jy and keep the distributions for the other parameters the same as the literature.  The sensitivity range that we choose is representative of typical sensitivities that VLA surveys can achieve today at 4--8 GHz frequencies for $\geq1$ hour observing blocks. 

Intriguingly, a recent 144 MHz detection of coherent emission from a T dwarf suggests that a population of T dwarfs with aurorae more luminous than previously detected at GHz frequencies may exist \citep{Vedantham2020ApJ...903L..33V}. At present, it is unknown if such a population would also produce similarly super-luminous quiescent radio emission. In the event that such a population exists at GHz frequencies, literature data suggest that they may be rare, as objects more luminous than existing detections would have been detected. As such, we do not expect a hypothetical population of super-luminous quiescent GHz radio emitters to meaningfully impact the luminosity priors that we implement.  However, we note that future observations may yet demonstrate that some objects exhibit very luminous quiescent radio emission, at which point the luminosity prior will automatically be updated with the inclusion of new data.

\begin{figure*}
	\begin{centering}
		\includegraphics[width=0.93\columnwidth]{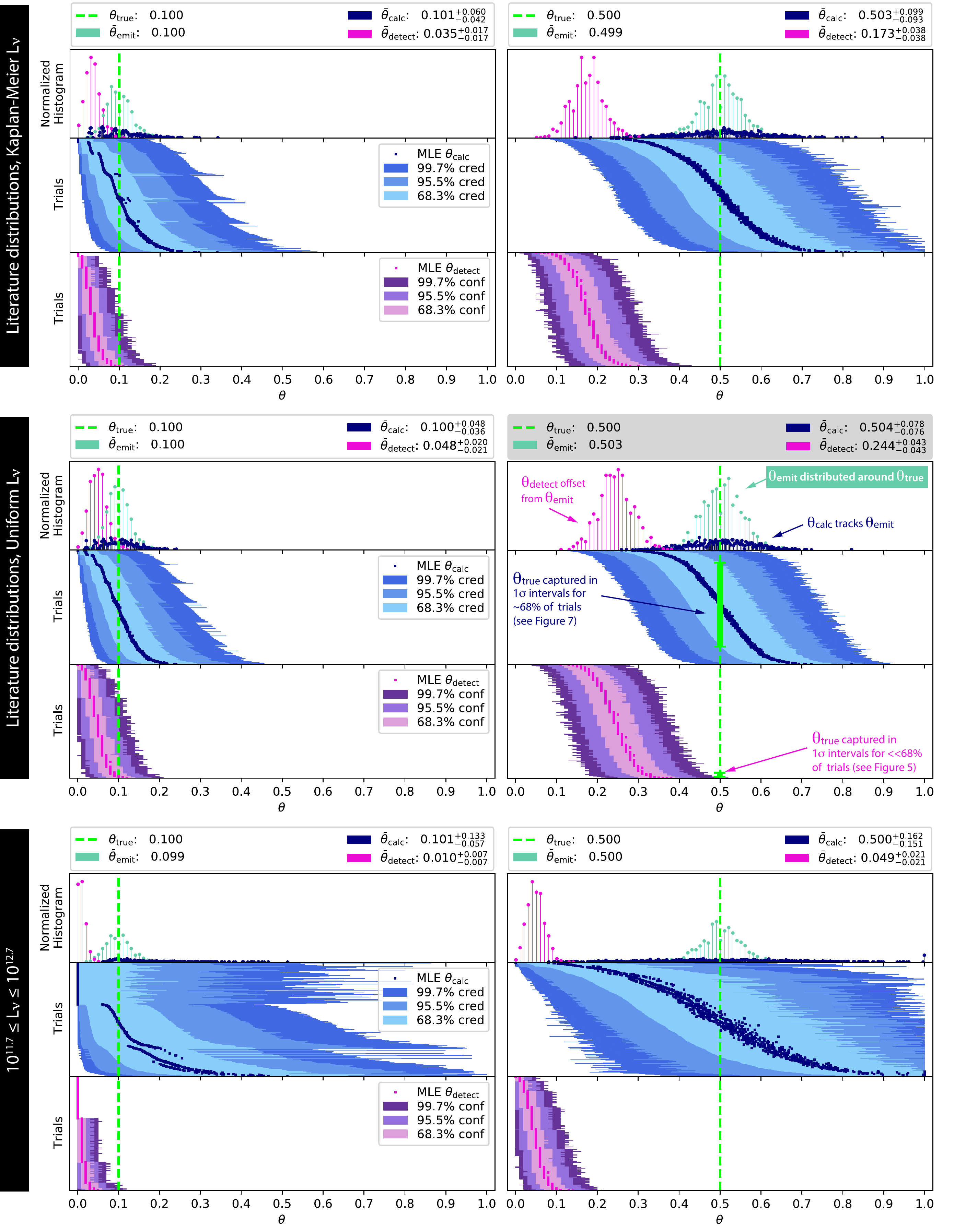}
			\caption{\label{fig:simulation_Lum} Simulations of 1000 trials per true quiescent radio occurrence rate $\theta_{\text{true}}$ with $N=100$ synthetic observations per trial using literature distributions for $d$, $\derri$, and $\rms$. Luminosity distributions vary by row: (top) KM literature distribution defined in \S\ref{sec.sim_setup}; (middle) uniform distribution between  1.0 $\mu$Jy $\leq \rms \leq 3.0$ $\mu$Jy; and (bottom) a low-luminosity distribution between $11.7 \leq [L_{\nu}] \leq 12.7$ 10$^{13.6}$ erg s$^{-1}$ Hz$^{-1}$. \,\,
			\textbf{Individual plot panels:} (top) Histograms and mean values of empirical emission rates $\theta_{\text{emit}}$ (green), detection rates $\theta_{\text{detect}}$ (magenta), and maximum-likelihood calculated occurrence rates $\theta_{\text{calc}}$ (blue).  (middle) $\theta_{\text{calc}}$ and 68.3\%, 95.5\%, and 99.7\% credible intervals for each trial. Trials are ordered by their lower 68.3\% credible intervals to show the distribution of $\pdf(\theta_{\text{calc}} \mid D)$.  (bottom)  $\theta_{\text{detect}}$ and bootstrapped confidence intervals, calculated from 1000 trials each, for each trial.  \textbf{Interpretation key:} To aid the reader in interpreting these plot panels, we annotate the middle-right panel with the grey legend.}  
	\end{centering}
\end{figure*}

\begin{figure*}
	\begin{centering}
		\includegraphics[width=0.93\columnwidth]{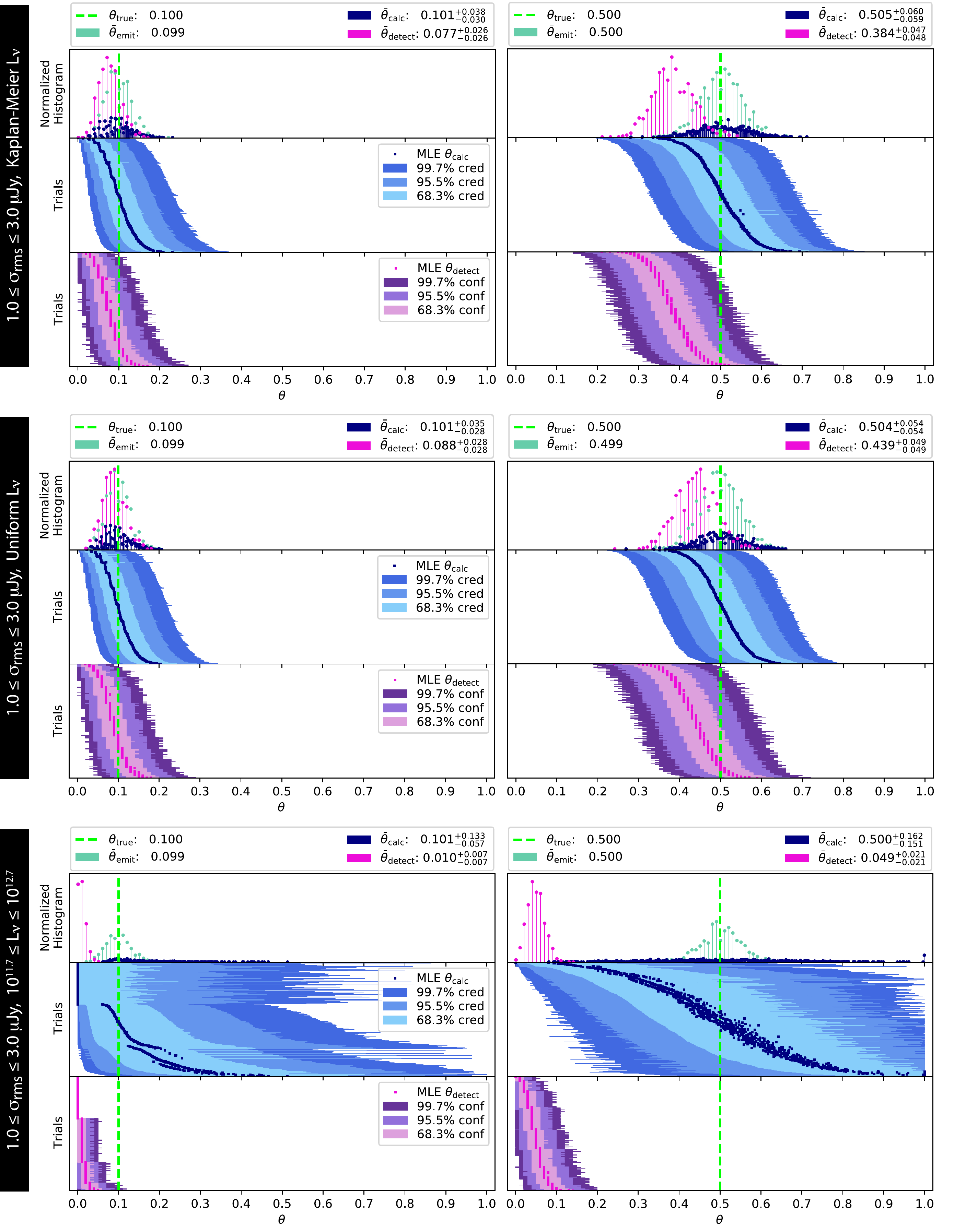}
		%\plotone{FIGURE_simulation_low-rms.pdf}
		\caption{\label{fig:simulation_rms} The same as Figure \ref{fig:simulation_Lum} but for $\rms \in [1.0, 3.0]$ $\mu$Jy.  For an interpretation key, see Figure \ref{fig:simulation_Lum}. }
	\end{centering}
\end{figure*}

\begin{figure*}
	\begin{centering}
		\includegraphics[width=\columnwidth]{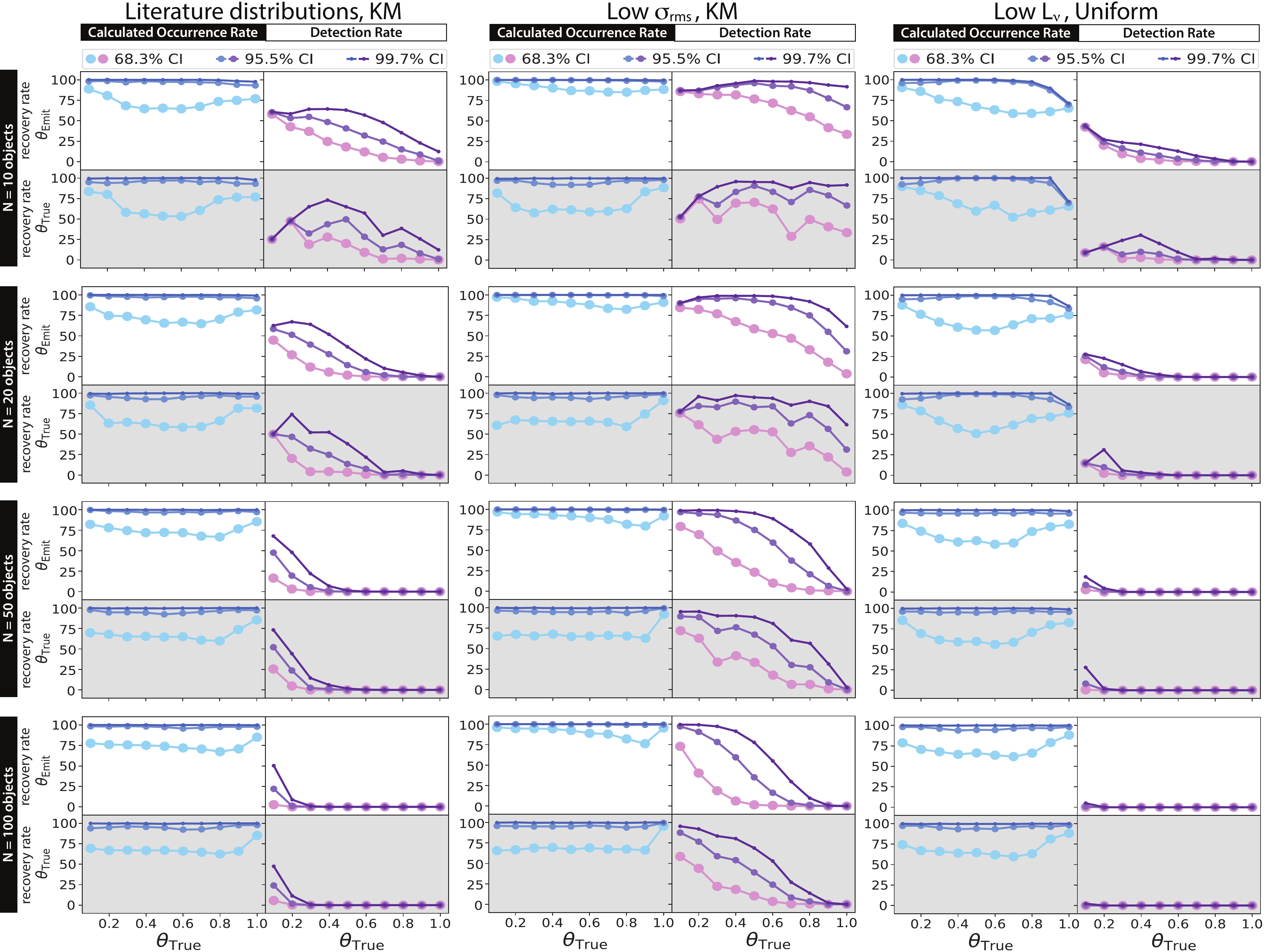}
		\caption{\label{fig:ci_performance} The mean performance of the 68.3\%, 95.5\%, and 99.7\% credible intervals (blue, left panels) and confidence intervals (purple, right panels) of the calculated occurrence and detection rates, respectively, in recovering simulated emission rates ($\theta_{\text{emit}}$, white panels) and the simulated true occurrence rates ($\theta_{\text{true}}$, grey panels) for selected simulation suites.  }
	\end{centering}
\end{figure*}

\setlength{\tabcolsep}{0.075in}
%\begin{center}
	\begin{table*}\centering 
		\begin{ThreePartTable}
			\caption{Summary of mean errors, correction factors, and recovery rates by credible or confidence intervals (68.3\%, 95.5\%, and 99.7\%), averaged over all $\theta_{\mathrm{true}}$ \label{table:mean_CI}}
			\begin{tabularx}{\textwidth}{llc@{\hspace{3pt}}c@{\hspace{3pt}}c@{\hspace{3pt}}cr@{\hspace{1pt}}r@{\hspace{1pt}}rr@{\hspace{1pt}}r@{\hspace{1pt}}rr@{\hspace{1pt}}r@{\hspace{1pt}}rr@{\hspace{1pt}}r@{\hspace{1pt}}r}
				\toprule \vspace{2pt}
%\tabletypesize{\scriptsize}

   	& 
   	& 
   	& 
   	& 
   	& 
   	& 
\multicolumn{6}{l}{ $\theta_{\mathrm{emit}}$ }   	& 
\multicolumn{6}{l}{ $\theta_{\mathrm{true}}$  }   	 \\[-3pt]
Simulation suite  					& 
$N$    								&
$\abs{\epsilon_{\mathrm{calc}}}$   	& 
$\abs{\epsilon_{\mathrm{detect}}}$ 	& 
$f_{\mathrm{calc}}$    				& 
$f_{\mathrm{detect}}$  				& 
\multicolumn{3}{l}{ $\theta_{\mathrm{calc}}$ }   	&
\multicolumn{3}{l}{ $\theta_{\mathrm{detect}}$ }   	&
\multicolumn{3}{l}{ $\theta_{\mathrm{calc}}$ }   	&
\multicolumn{3}{l}{ $\theta_{\mathrm{detect}}$ }   	\\%[-3pt] 
   			& 
   			& 
 (\%)  		& 
 (\%)  		& 
 (\%)  		& 
 (\%)  		& 
\multicolumn{3}{l}{ (\%) } & 
\multicolumn{3}{l}{ (\%) } & 
\multicolumn{3}{l}{ (\%) } & 
\multicolumn{3}{l}{ (\%) }  \\

\midrule 
%\tableline 
 Literature KM 		&  10  	&  5 &  65 &  1.0 &  2.9  	  &   72.5,  &  96.8,  & 99.6    &   19.8,  & 34.0, & 48.6     &    67.3, & 95.2, & 99.6    &  15.2, & 26.9, & 44.0   	\\[2pt] 
 					&  20 	&  3 &  65 &  1.0 &  2.9  	  &   73.1,  &  97.5,  & 99.8    &   9.1,  & 20.0, & 32.3     &    68.2, & 95.2, & 99.6    &  8.3, & 17.4, & 29.9   	\\[2pt] 
  					&  50 	&  1 &  65 &  1.0 &  2.9  	  &   74.8,  &  97.6,  & 99.9    &   2.0,  & 7.4, & 14.5     &    67.8, & 95.4, & 99.7    &  3.1, & 7.9, & 14.2   		\\[2pt] 
  					&  100 	&  1 &  66 &  1.0 &  2.9  	  &   74.5,  &  97.6,  & 99.9    &   0.3,  & 2.5, & 5.8     &    68.1, & 95.1, & 99.7    &  0.6, & 2.6, & 5.9   \\[2pt] 
\midrule% \tableline 
 Literature Uniform &  10  	&  3 &  51 &  1.0 &  2.0  	  &   77.7,  &  97.9,  & 99.9    &   30.7,  & 51.5, & 68.7     &    68.4, & 95.3, & 99.6    &  25.0, & 43.0, & 62.9   \\[2pt] 
  					&  20 	&  2 &  51 &  1.0 &  2.1  	  &   78.4,  &  98.3,  & 99.9    &   16.0,  & 34.1, & 50.7     &    67.6, & 95.5, & 99.6    &  15.1, & 31.1, & 46.9   \\[2pt] 
  					&  50 	&  1 &  51 &  1.0 &  2.0  	  &   79.9,  &  98.7,  & 100.0    &   5.0,  & 15.8, & 27.9     &    69.3, & 95.8, & 99.8    &  6.8, & 15.7, & 25.7   \\[2pt] 
  					&  100 	&  1 &  52 &  1.0 &  2.1  	  &   79.2,  &  98.6,  & 100.0    &   1.4,  & 6.9, & 13.7     &    69.0, & 95.6, & 99.8    &  2.0, & 7.3, & 13.0   \\[2pt] 
\midrule% \tableline 
 Low $L_{\nu}$ Uniform &  10  &  34 &  90 &  1.5 &  10.1  	  &   70.1,  &  93.5,  & 95.6    &   8.0,  & 11.1, & 15.9     &    68.9, & 94.4, & 97.0    &  3.2, & 5.1, & 11.3   \\[2pt] 
  					&  20 	 &  12 &  90 &  1.1 &  10.2  	  &   69.1,  &  94.8,  & 98.4    &   2.7,  & 4.8, & 7.8     &    66.3, & 95.1, & 98.5    &  1.9, & 3.1, & 6.2   \\[2pt] 
  					&  50 	 &  2 &  90 &  1.0 &  10.2  	  &   70.0,  &  96.2,  & 99.7    &   0.3,  & 1.1, & 2.5     &    68.6, & 95.7, & 99.6    &  0.1, & 0.7, & 3.1   \\[2pt] 
  					&  100 	 &  2 &  90 &  1.0 &  10.4  	  &   70.7,  &  96.6,  & 99.8    &   0.0,  & 0.1, & 0.4     &    69.2, & 95.8, & 99.8    &  0.0, & 0.0, & 0.2   \\[2pt] 
\midrule% \tableline 
 Literature + Low $\rms$  KM &  10  &  1 &  24 &  1.0 &  1.3  	  &   89.7,  &  99.5,  & 99.9    &   67.4,  & 87.3, & 94.1     &    68.1, & 95.2, & 99.7    &  52.9, & 76.2, & 87.5   \\[2pt] 
  					&  20 	 &  1 &  23 &  1.0 &  1.3  	  &   89.9,  &  99.6,  & 100.0    &   52.3,  & 80.3, & 91.4     &    68.0, & 95.5, & 99.8    &  43.2, & 72.3, & 87.1   \\[2pt] 
  					&  50 	 &  1 &  23 &  1.0 &  1.3  	  &   90.2,  &  99.7,  & 100.0    &   27.7,  & 56.9, & 74.1     &    68.2, & 95.6, & 99.6    &  27.2, & 51.3, & 69.0   \\[2pt] 
  					&  100 	 &  1 &  23 &  1.0 &  1.3  	  &   90.2,  &  99.7,  & 100.0    &   14.0,  & 38.2, & 55.7     &    70.2, & 95.6, & 99.7    &  15.7, & 35.2, & 51.3   \\[2pt] 
\midrule% \tableline 
 Literature + Low $\rms$ Uniform &  10  &  2 &  12 &  1.0 &  1.1  	  &   95.2,  &  99.8,  & 100.0    &   86.4,  & 95.8, & 97.3     &    68.9, & 95.2, & 99.7    &  66.4, & 83.7, & 89.7   \\[2pt] 
  &  20 	 &  1 &  12 &  1.0 &  1.1  	  &   95.6,  &  99.8,  & 100.0    &   79.3,  & 95.7, & 98.5     &    69.4, & 95.4, & 99.6    &  62.7, & 88.7, & 95.5   \\[2pt] 
  &  50 	 &  1 &  11 &  1.0 &  1.1  	  &   96.2,  &  99.9,  & 100.0    &   59.4,  & 84.8, & 93.2     &    69.4, & 95.6, & 99.7    &  47.9, & 76.7, & 90.2   \\[2pt] 
  &  100 	 &  1 &  12 &  1.0 &  1.1  	  &   96.3,  &  99.9,  & 100.0    &   40.4,  & 72.6, & 84.1     &    69.5, & 95.2, & 99.8    &  36.4, & 64.2, & 79.3   \\[2pt] 
\midrule% \tableline 
 Low $L_{\nu}$ + Low $\rms$ Uniform &  10  &  3 &  46 &  1.0 &  1.9  	  &   80.9,  &  98.6,  & 99.9    &   35.5,  & 58.0, & 74.5     &    68.3, & 95.3, & 99.7    &  29.1, & 49.8, & 69.2   \\[2pt] 
  &  20 	 &  2 &  46 &  1.0 &  1.8  	  &   81.8,  &  99.2,  & 100.0    &   20.1,  & 42.4, & 59.8     &    68.4, & 95.6, & 99.7    &  18.5, & 38.2, & 55.5   \\[2pt] 
  &  50 	 &  2 &  45 &  1.0 &  1.8  	  &   81.8,  &  99.1,  & 100.0    &   7.3,  & 20.8, & 34.7     &    69.0, & 95.6, & 99.7    &  9.1, & 20.5, & 32.3   \\[2pt] 
  &  100 	 &  2 &  45 &  1.0 &  1.8  	  &   81.9,  &  99.1,  & 100.0    &   2.4,  & 10.2, & 18.7     &    69.2, & 95.3, & 99.7    &  3.6, & 10.3, & 18.1   \\[2pt] 
\bottomrule
\end{tabularx}
\end{ThreePartTable}  
\end{table*}

To illustrate our simulations, we show normalized histograms for the empirical detection rate $\theta_{\mathrm{detect}}$ and our calculated maximum likelihood occurrence rate $\theta_{\mathrm{calc}}$ when $N=100$ in Figures \ref{fig:simulation_Lum}--\ref{fig:simulation_rms} compared to $\theta_{\mathrm{emit}}$ and $\theta_{\mathrm{true}}$.  We also show the distribution of 68.3\%, 95.5\%, and 99.7\% highest density credible intervals for each trial.  Finally, we show  $\theta_{\mathrm{detect}}$ and the corresponding confidence intervals for each trial, which we calculate with 1000 bootstrap samples per trial.   We summarize each simulation suite as a function of $\theta_{\mathrm{true}}$ and $N$ in  the Appendix Tables \ref{table:simulation_literature_kaplanMeier}--\ref{table:simulation_low-rms_lowLum}. We summarize the mean recovery rate for credible and confidence intervals in Table \ref{table:mean_CI}.

%%%%%%%%%%%%%%%%%%
\subsection{Simulation results and implications}

The simulations show that our occurrence rate framework recovers the simulated true quiescent radio occurrence rate $\theta_{\mathrm{true}}$ well. 

In Figure~\ref{fig:ci_performance}, we show the fraction of trials for which our credible intervals recover $\theta_{\mathrm{emit}}$ and $\theta_{\mathrm{true}}$ for a subset of simulation suites, and we summarize the performance of all simulation suites averaged over  $\theta_{\mathrm{true}}$ in Table \ref{table:mean_CI}.  The left-most column in Figure \ref{fig:ci_performance} showcases the performance of our model when we assume a KM luminosity distribution and literature distributions for all other parameters, such as we do in the scientific application of our framework in  \S\ref{section:application}. This provides a baseline comparison to the other two columns and also corresponds to case (i) as defined in \S\ref{sec.sim_setup}. The middle column isolates the effect of low rms noise to demonstrate how our framework would perform for various survey sizes with noise floors that are achievable today.  The right-most column corresponds to case (iii) defined in \S\ref{sec.sim_setup} and shows the accuracy of our framework if a population of objects has systematically low luminosities, as may be the case for T/Y dwarfs. We include this column to help the reader assess our findings and discussion (\S\ref{section:application}, \S\ref{sec.Discussion}).

 In simulations drawn from literature distributions, our 68.3\% credible intervals recover $\theta_{\mathrm{emit}}$ in an average of $\geq$72.5\% of trials for sample sizes as small as $N=10$ .  This increases to a $\geq$99.6\% recovery rate for the  99.7\%  credible intervals. Note that credible intervals are distinct from confidence intervals. The latter correspond to the fraction of trials for which the interval contains $\theta_{\mathrm{emit}}$ but do not intrinsically yield information about $\theta_{\mathrm{true}}$.  In contrast, credible intervals correspond to the location of that probability mass fraction for each individual trial. A good Bayesian model, such as ours, will return a posterior distribution with credible intervals that recover $\theta_{\mathrm{emit}}$ more frequently than the corresponding confidence interval.   Consequently, Figure~\ref{fig:ci_performance} demonstrates that our credible intervals generally closely approximate confidence intervals for $\theta_{\mathrm{true}}$, where the 68.3\% credible intervals recover $\theta_{\mathrm{true}}$ in an average of $\geq$67.3\% of our trials.  It is important to note that the recovery rate varies as a function of $\theta_{\mathrm{true}}$, $N$, and the luminosity prior. We tabulate these values in Tables \ref{table:simulation_literature_kaplanMeier}--\ref{table:simulation_low-lum_uniform}.  In contrast, detection rate confidence intervals on average underpredict $\theta_{\mathrm{emit}}$ and thus poorly recover $\theta_{\mathrm{true}}$ at an average rate of no more $\leq$24.7\%, depending on the luminosity prior and the sample size.  %Again, the recovery rates vary as a function of $\theta_{\mathrm{true}}$, $N$, and the luminosity prior

 In simulations drawn from literature parameter distributions, the distribution of calculated maximum-likelihood radio occurrence rates $\theta_{\mathrm{calc}}$ closely track that of emission rates $\theta_{\mathrm{emit}}$.  Since $\theta_{\mathrm{emit}}$ are distributed around $\theta_{\mathrm{true}}$, the mean error of $\theta_{\mathrm{calc}}$ with respect to $\theta_{\mathrm{true}}$ 

\begin{equation}
\abs{\bar{\epsilon}_{\text{calc}} } = \frac{1}{N} \sum_{n=1}^{N}   \frac{  \abs{  \theta_{\mathrm{true}} - \theta_{\mathrm{calc,n}}} }{\theta_{\mathrm{true}}}  
\end{equation}
is within 12\% for all trial sample sizes for the Kaplan-Meier luminosity distribution (Table \ref{table:simulation_literature_kaplanMeier}) and 6\% for the uniform luminosity distribution (Table \ref{table:simulation_literature_uniform}).
We similarly calculate a correction factor 
\begin{equation}
\bar{f}_{\text{calc}}  = \frac{1}{N} \sum_{n=1}^{N}   \frac{ \theta_{\mathrm{true}}}{\theta_{\mathrm{calc,n}}} 
\end{equation}
as well as the analogous quantities for $\theta_{\mathrm{detect}}$.  We find that the mean correction factor for the calculated rate $\bar{f}_{\text{calc}}$ averaged over all occurrence rates is 1.0 for both the Kaplan-Meier and uniform luminosity distributions for all sample sizes.
 
In contrast, the mean detection rate under-predicts $\theta_{\mathrm{true}}$ by up to 66\% or 52\% for Kaplan-Meier and uniform luminosity distributions respectively, such that  $\bar{f}_{\text{detect}} = 2.9$ or $2.0 \leq \bar{f}_{\text{detect}} \leq 2.1$. As expected, our trials show that calculated occurrence rates for individual experiments correlate with the number of detected objects.  For instance, trials with $\theta_{\mathrm{detect}} = 0\%$ detection rate have a maximum likelihood calculated occurrence rate $\theta_{\mathrm{calc}} = 0\%$. Similarly, trials with nonzero $\theta_{\mathrm{detect}}$ have $\theta_{\mathrm{calc}} \geq \theta_{\mathrm{detect}}$.  Finally, as is also expected, the sizes of the credible and confidence intervals decrease with increasing sample size. 

Thus far, our discussion has focused on an ensemble of trials to characterize $\theta_{\mathrm{true}}$.  However, in reality, we have only one real trial: the dataset consisting all existing observations for a population of objects. With well-performing credible intervals, our framework enables meaningful constraints on the true occurrence rate of any such finitely sampled population.  The caveat is that all objects in the population must have radio luminosities drawn from the same distribution for the results of our simulation to hold, since we have not yet modeled how luminosities may depend on properties such as youth, binarity, or rotation rate.  Characterizing the relevant object properties that impact an individual object's luminosity is therefore essential.  In \citet{Kao2020b} and \citet{Kao2020_binaries}, we explore how youth and binarity affect $\pdf(L_{\nu} \mid e=1)$, respectively.  

We expect the detection rates and calculated occurrence rates for individual trials to converge with increasing sensitivity. This is because higher sensitivities rule out a greater portion of the luminosity distribution, thus decreasing the probability that a non-detection is associated with a faint emitter.  We test this with our high sensitivity simulations and show that $\theta_{\mathrm{detect}}$ improves as an estimator for $\theta_{\mathrm{true}}$ at higher sensitivities, with the mean error decreasing from 52--66\% to $\abs{\bar{\epsilon}_{\text{detect}} } =$ 11--24\%, and the correction factor decreasing to  $\bar{f}_{\text{detect}} = 1.3$ or 1.1 for Kaplan-Meier or uniform luminosity priors, respectively. Nonetheless, $\theta_{\mathrm{calc}}$ remains the better estimator, as the mean error is 2--24$\times$ that for the detection rate and  $\bar{f}_{\text{calc}} = 1.0$  (Tables \ref{table:simulation_low-rms_kaplanMeier}--\ref{table:simulation_low-rms_uniform}). Finally, increasing observation sensitivity slightly decreases the sizes of the credible and confidence intervals. 

We caution against concluding that detection rates can serve as a good estimators for occurrence rates in the high sensitivity limit.  Instead, our simulations show that this is the case \textit{only for our specified $L_{\nu}$ range.}  Higher observational sensitivities will likely lead to detections of ultracool dwarf quiescent radio emission at lower luminosities than existing detections.  If this proves true, then detections rates may diverge from occurrence rates for those lower luminosities.  

The high sensitivity simulations demonstrate the accuracy and precision that we can expect with large ultracool dwarf surveys at modern VLA sensitivities. However, such surveys demand significant telescope allocations with current instrumentation, exceeding 200 hours simply to re-observe the existing dataset.  For comparison, the ngVLA will reach $\sim0.22$ $\mu$Jy sensitivities for 1 hour observing blocks\footnote{https://ngvla.nrao.edu/page/performance} when using the full array of 244 antennas. Making use of sub-array capabilities to simultaneously observe multiple objects at our simulated sensitivities will make it feasible to conduct detailed population studies in several tens of hours.

For a population that is fainter than our assumed luminosity distribution, we expected that our framework would over-predict $\theta_{\mathrm{true}}$. However, the low luminosity simulations show more nuanced behaviors. On the one hand, individual trials can significantly over-predict $\theta_{\mathrm{true}}$ when observation sensitivity and object distance cannot rule out large portions of luminosity space. On the other hand, the simulations also show a wide variance in $\theta_{\mathrm{calc}}$, including ones that strongly underpredict $\theta_{\mathrm{true}}$.  This occurs because the combination of low sensitivity and low luminosity can cause numerous observations to be poorly constrained.  When an observation cannot rule out any portion of the luminosity probability space,  that observation does not update the prior.  As a result, the simulations for low luminosity samples suffer from small number statistics, in which the calculated occurrence rate is dominated by the few objects that are detected or entirely ruled out.  This also leads to very wide credible intervals.

Despite larger variance in $\theta_{\mathrm{calc}}$ and the wider credible intervals for faint populations, our framework handily outperforms $\theta_{\mathrm{detect}}$ as an estimator for $\theta_{\mathrm{true}}$. When $N=10$,  $\bar{\theta}_{\mathrm{calc}}$ underpredicts $\theta_{\mathrm{true}}$ by a mean error of $\abs{\bar{\epsilon}_{\text{calc}} } = 34\%$. The performance of  $\bar{\theta}_{\mathrm{calc}}$ increases with increasing $N$, such that  $\bar{\theta}_{\mathrm{calc}}$ underpredicts $\theta_{\mathrm{true}}$ by only a mean error of $\abs{\bar{\epsilon}_{\text{calc}} } = 2\%$ for $N=100$.  This is expected, since larger sample sizes will reduce the effects of small number statistics that we discussed above.  In contrast,  $\bar{\theta}_{\mathrm{detect}}$ underpredicts $\theta_{\mathrm{true}}$ by $\sim$90\% regardless of sample size.   Thus, depending on the sample size,  $\bar{\theta}_{\mathrm{calc}}$ is  $\sim$2.6--45$\times$ more accurate than $\bar{\theta}_{\mathrm{detect}}$.  

Finally, we show in Table \ref{table:simulation_low-rms_lowLum} that increasing the sensitivity of observations to reflect modern VLA sensitivities dramatically improves the performance of   $\bar{\theta}_{\mathrm{calc}}$, such that  $\abs{\bar{\epsilon}_{\text{calc}} } =$ 2--3\%, while the error on detection rates remain high at $\abs{\bar{\epsilon}_{\text{detect}} } =$ 45-46\%. 

The results of these low luminosity simulations highlight that improving observational sensitivity from the current literature distribution will be one key area of constraining both the radio luminosities and occurrence rates for low luminosity populations.  Additionally, we expect the error between $\theta_{\mathrm{true}}$ and $\bar{\theta}_{\mathrm{calc}}$ to improve with the future inclusion of a more realistic radio luminosity function.

%%%%%%%%%%%%%%%%%%%%%%%%%%%%%%%%%%%%%%%%%%%%%%%%%%%%%%%%%%%
%%%%%%%%%%%%%     SUBSECTION:  RESULTS    %%%%%%%%%%%%%%%%%
\section{Applying the framework}  \label{section:application} %: Comparing the radio occurrence rates across spectral type} 
%---------------------------------------------
%----------  Occurrence Rate Plots   ---------

\begin{figure*}
	\begin{centering}
		\includegraphics[width=\columnwidth]{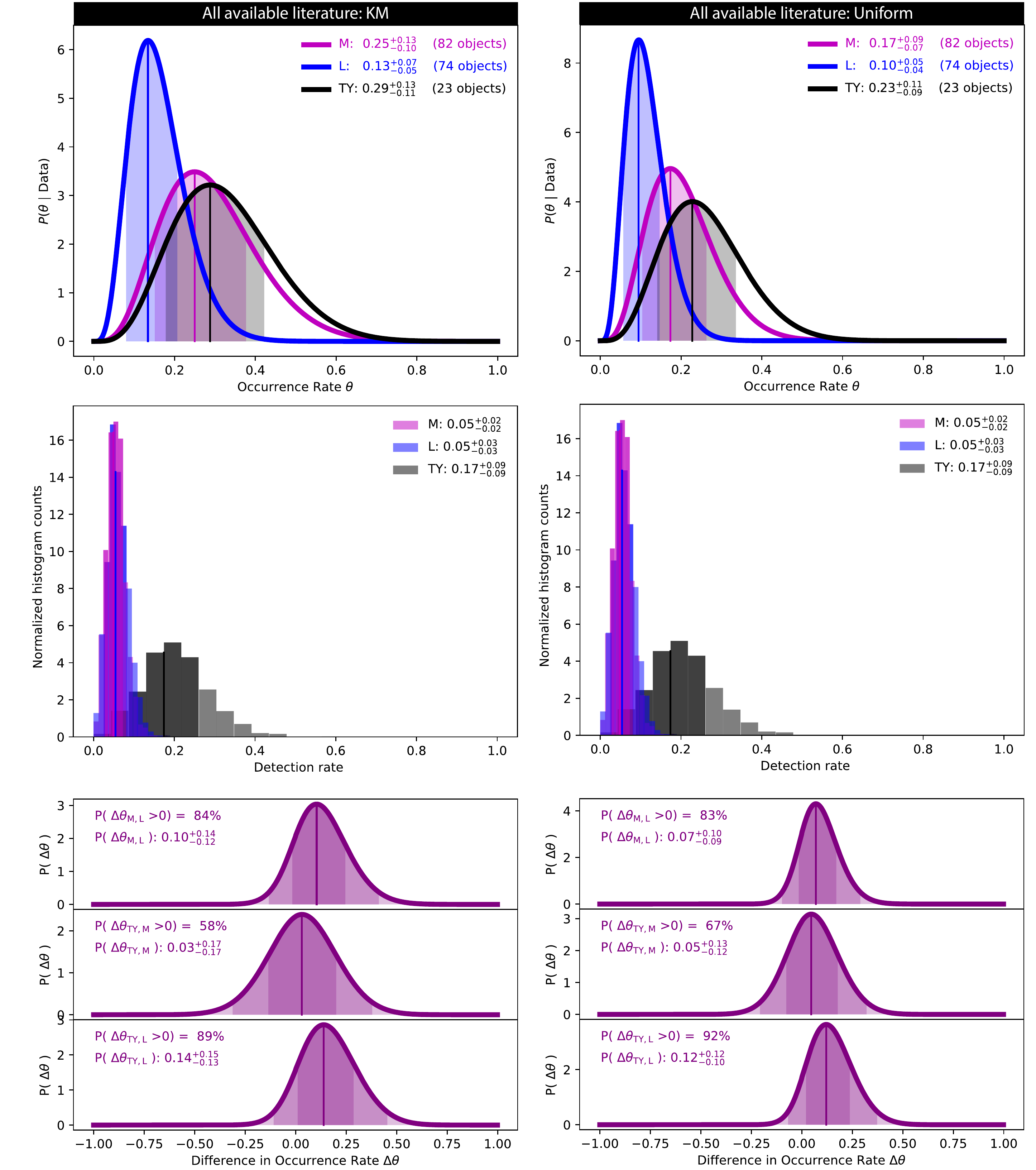}
		\caption{\label{fig:occurrenceRate}	
		Calculations using the full set of data available in the literature for a Kaplan-Meier luminosity prior (left) and a uniform luminosity prior (right).  \\
--- Top: Occurrence rate distributions of quiescent radio emission in M dwarfs (magenta), L dwarfs (blue) and T/Y dwarfs (black). Shaded regions correspond to the 68.3\% credible intervals. The L dwarf radio occurrence rate appears tentatively suppressed compared to M and T/Y dwarfs.  
--- Middle: Detection rate and bootstrapped distributions from 1000 trials.  Shaded regions correspond to the 68.3\% confidence intervals.  Relying on detection rates would not have identified the tentative suppression of the  L dwarf radio occurrence rate.  
--- Bottom: Probability density distributions for differences in occurrence rates $\Delta \theta$ between spectral type samples.  Shaded regions correspond to 68.3\%, 95.5\% and 99.7\% credible intervals.  Existing data cannot rule out the null hypothesis that the occurrence rates are the same for all spectral types but provide tentative evidence that ultracool dwarf magnetism may evolve over the L spectral type. 
  } 
	\end{centering}
\end{figure*}

Radio observations of ultracool dwarfs demonstrated that very strong magnetism can occur in very low mass and cold substellar objects \citep[e.g.]{Berger2001Natur.410..338B, RouteWolszczan2012ApJ...747L..22R, Kao2018ApJS..237...25K} despite predictions from suppressed X-ray emission \citep{Williams2014ApJ...785....9W}.  Such observations have also been instrumental in identifying the onset of a transition in magnetic activity from stellar flaring to more planet-like aurorae in the ultracool dwarf regime \citep{Hallinan2008ApJ...684..644H, Hallinan2015Natur.523..568H, Kao2016ApJ...818...24K, Pineda2017ApJ...846...75P}.  Finally, nonthermal radio emission is currently the only viable direct probe of magnetic fields for the coldest and lowest mass ultracool dwarfs. However, detection rates  from existing surveys remain stubbornly low, between $\sim$5--10\% \citep{Antonova2013AA...549A.131A, RouteWolszczan2013ApJ...773...18R, RouteWolszczan2016ApJ...830...85R, Lynch2016MNRAS.457.1224L, Richey-Yowell2020}. The sole exception is a small sample of 5 objects with which \citet{Kao2016ApJ...818...24K} achieved an 80\% detection rate by selecting targets for possible markers of aurorae. To date, only 20 ultracool dwarf systems, including both single objects and multiples, have been detected at gigahertz radio frequencies.  

Low detection rates severely hamper assessments of ultracool dwarf magnetospheric physics for analyses that rely on detected radio emission. However, our occurrence rate framework provides a path forward both by disentangling systematics from  intrinsic magnetic activity properties as well as by making use of reported non-detections. As a proof-of-concept,  we apply our framework to studying quiescent radio occurrence rates in ultracool dwarfs as a function of spectral type.

 %%%%%%%%%%%%%%%%%%%%%%%%%%%%%%%%%%%%%%%%%%%%%%%%%%%%%%%%%%%
%%%%%%%%%%%%% SUBSECTION:  DATA INCLUSION %%%%%%%%%%%%%%%%%
\subsection{Data inclusion policy}\label{sec.DataInclusion}
   We performed a literature search to compile a list of all radio observations of ultracool dwarfs ($\geq$M7).  Several objects have been observed multiple times, including at multiple frequencies and/or in both flaring and non-flaring states.  We identify detections of quiescent emission by inspecting all published radio detections. Most studies distinguish between flaring and quiescent emission, and when available, we use published timeseries and/or check for low circular polarization fractions characteristic of quiescent radio emission by comparing Stokes I and V flux densities. We note ambiguous cases in  Table \ref{table:literatureBD} and provide details that include our treatment of each ambiguous case to aid the reader in assessing the work that we present here.
  
    While our assumption that each observation is independent from other observations (Eq. \ref{eqn:p(D|theta)}) allows for individual ultracool dwarfs to be re-drawn for new observations, this treatment is only valid if observations are not follow-up observations for previous detections and instead were done in blind surveys.  However, less distant objects are more likely to be re-observed even in surveys and are not randomly selected for multiple observations.  We therefore elect to consider only one observation for each object using the following selection criteria for including data in our calculations:

\begin{enumerate}
\item We include only data from interferometric radio arrays.  Single dishes like Arecibo are confusion limited by background sources and are insensitive to ultracool dwarf incoherent radio emission.  
\item For objects with detections, we include only the observation with the lowest rms noise in which radio emission has been detected. 
\item For objects without detections, we include only the observation with the lowest rms noise. 
\end{enumerate}

Finally, we mitigate the impact of a sloped emission spectrum on the luminosity prior by including only observations centered within a narrow frequency range, between 4--10 GHz.  The instantaneous frequency-dependent emission spectra of GHz quiescent radio emission from single ultracool dwarfs are not well sampled at present \citep[e.g.][]{Williams2015ApJ...815...64W, Kao2018ApJS..237...25K, hughes2021AJ....162...43H} and will likely vary from object to object. However, recent resolved magnetospheric imaging of the ultracool dwarf LSR~J1835+3259 show that GHz quiescent emission can trace synchrotron radiation belts analogous to the Jovian radiation belts \citep{Kao2023, Climent2023arXiv230306453C}. Radiation belts around Jupiter produce relatively flat emission spectra with sharp frequency cut-offs \citep{Zarka2007PSS...55..598Z}, and similar behavior is observed for quiescent radio emission from massive stars proposed to be radiation belts \citep{Leto2021MNRAS.507.1979L, Owocki2022}.

Despite the unknown quiescent radio emission spectra of objects included in our sample, the non-parametric Kaplan-Meier estimator implicitly accounts for sloped emission spectra as it does not require knowledge about the distribution’s underlying specified parameters. This ensures that our framework remains resilient to evolving knowledge about emission spectra by allowing the user to define a data-driven luminosity distribution that improves with time as more observations are reported.

\subsection{Quiescent radio occurrence rate for ultracool dwarfs}
Using the full data set of 179 objects compiled, we find that the occurrence rate of ultracool dwarfs is between  $20^{+6}_{-5}$\% for a Kaplan-Meier luminosity prior and $15^{+4}_{-4}$\% for a uniform luminosity prior. This decrease occurs because the uniform luminosity distribution allows for a higher proportion of brighter objects and fewer dimmer objects, compared to the Kaplan-Meier distribution.  As a result, more objects meet the ``detectable" threshold, so a higher proportion of non-detections correspond to the ``off" state.  For comparison, the ultracool dwarf detection rate is $7\pm2$\%.  Quoted uncertainties correspond to 68.3\% credible intervals and confidence intervals obtained from 1000 bootstrap trials, respectively.

\subsection{Quiescent radio occurrence rates vs. spectral type}
Do radio occurrence rates change across spectral types?  To answer this question, we compare the quiescent radio occurrence rates for M, L and T ultracool dwarfs.  Table \ref{table:literatureBD} summarizes parallax or distance, rms noise, and measured flux densities for each object that we include in our calculations.

Spectral types available in the literature that we use here are identified with a mix of optical and near infrared spectra, which can result in an uncertainty of up to approximately $\pm2$ in spectral sub-type \citep{Kirkpatrick2005ARA&A..43..195K}. Applying our framework to multi-object systems requires additional theoretical development and consideration of assumptions. This application is outside of the scope of this work, so we exclude binaries from our samples and refer the reader to \citet{Kao2020_binaries} for a detailed treatment. This gives sample sizes of 82 M dwarfs, 74 L dwarfs, and 23 T/Y dwarfs. Less than $\sim$10\% of known late-type objects are young \citep{BardalezGagliuffi2014ApJ...794..143B}, so for this study we do not account for possible effects of age.

In Figure \ref{fig:occurrenceRate}, we show quiescent radio occurrence rates and detection rates for each sample. When we assume a Kaplan-Meier luminosity distribution and use the full set of radio observations that are available in the literature, we find that occurrence (detection) rates are $25^{+13}_{-10}$\% ($5\pm2$\%) for M dwarfs, $13^{+7}_{-5}$\% ($5\pm3$\%) for L dwarfs, and $29^{+13}_{-11}$\%  ($17\pm9$\%) for T/Y dwarfs.   When we assume a uniform luminosity distribution, the occurrence rates slightly decrease to  $17^{+9}_{-7}$\%  for M dwarfs, $10^{+5}_{-4}$\%  for L dwarfs, and $23^{+11}_{-9}$\% for T/Y dwarfs. 

Following \citet{Radigan2014ApJ...793...75R}, we also calculate the probability density distribution for differences in occurrence rates $\Delta \theta$ between spectral type samples
\begin{equation}
P(\Delta \theta) = \int_{-1}^{1} P(\theta_1) P(\theta_2 = \theta_1 + \Delta\theta) d \theta_1  \quad.
\end{equation} 
For a Kaplan-Meier (uniform) luminosity distribution, we find that M dwarfs may have a radio occurrence rate that exceeds that of L dwarfs with probability  
$\prob(\theta_{\text{M}} > \theta_{L}) = 84\%$ (83\%). The T/Y dwarf radio occurrence rate exceeds that of M dwarfs with probability 
$\prob(\theta_{\text{T/Y}} > \theta_{M}) = 58\%$ (67\%).  Finally the T/Y dwarf radio occurrence rate exceeds that of L dwarfs with probability 
$\prob(\theta_{\text{T/Y}} > \theta_{L}) = 89\%$ (92\%). For comparison, distributions that are centered around $\Delta \theta = 0$ will give probabilities of 50\%. 

We cannot rule out the null hypothesis that occurrence rates are the same for all spectral types, and we conclude that M and T/Y dwarfs likely have similar occurrence rates.  However, our calculations tentatively suggest that L dwarfs may have a suppressed quiescent radio occurrence rate compared to M and T/Y dwarfs.  We explore the possibility of a suppressed L dwarf radio occurrence rate further by repeating these calculations for additional datasets that we motivate in \S \ref{sec.Discussion}.  Table \ref{table:calculations} tabulates the results of each calculation.

Finally, we note that  Figure \ref{fig:occurrenceRate} illustrates the limitations of detection rate studies.  In particular, our calculated detection rates do not identify the possible L dwarf suppression, and bootstrapped confidence intervals are too narrow to account for uncertainties.

\setlength{\tabcolsep}{0.135in}
\begin{center}
	\begin{table*}\centering 
		\begin{ThreePartTable}
			\caption{Calculated occurrence rates \label{table:calculations}}
			\begin{tabularx}{\textwidth}{llr@{\hspace{0.01in}}lr@{\hspace{0.01in}}lr@{\hspace{0.01in}}lccc}
			\toprule \vspace{2pt}
Dataset 											&	
$\pdf(L_{\nu})$ 									&	
\multicolumn{2}{c}{$\theta_{\mathrm{calc, M}}$} 	&	
\multicolumn{2}{c}{$\theta_{\mathrm{calc, L}}$} 	&	
\multicolumn{2}{c}{$\theta_{\mathrm{calc, T/Y}}$} 	&	
$\prob(\theta_{\text{M}}   > \theta_{L})$			&	  % M L 
$\prob(\theta_{\text{T/Y}} > \theta_{M})$			&	  % T M
$\prob(\theta_{\text{T/Y}} > \theta_{L})$			\\% \\[-7pt]	  % T L
			\midrule  
Literature						&	KM		&	0.25 & $_{-0.10}^{+0.13}$	&	0.13 & $_{-0.05}^{+0.07}$ &		0.29 &  $_{-0.11}^{+0.13}$	& 84\% & 58\% & 89\%		\\[2pt]
Literature						&	Uniform	&	0.17 & $_{-0.07}^{+0.09}$	&	0.10 & $_{-0.04}^{+0.05}$ &		0.23 &  $_{-0.09}^{+0.11}$	& 83\% & 67\% &	92\%		\\[8pt]
Literature, no K16$^{a}$		&	KM		&	0.25 & $_{-0.10}^{+0.13}$	&	0.10 & $_{-0.05}^{+0.07}$ &		0.17 &	$_{-0.09}^{+0.13}$  & 90\% & 34\% &	77\% 		\\[2pt]
Literature, no K16$^{a}$		&	Uniform	&	0.17 & $_{-0.07}^{+0.09}$  	&	0.07 & $_{-0.03}^{+0.05}$ & 	0.13 & 	$_{-0.07}^{+0.10}$ 	& 89\% & 41\% & 80\%		\\[8pt]
$\geq$L4$^{b}$	 				&	KM		&	0.25 & $_{-0.10}^{+0.13}$ 	&	0.10 & $_{-0.05}^{+0.08}$ &		0.29 &  $_{-0.11}^{+0.13}$ 	& 88\% & 58\% &	91\%		\\[2pt]
$\geq$L4$^{b}$					&	Uniform	&	0.17 & $_{-0.07}^{+0.09}$	&	0.07 & $_{-0.04}^{+0.06}$ & 	0.23 &  $_{-0.09}^{+0.11}$	& 86\% & 67\% &	93\%		\\[8pt]
$\geq$L4$^{b}$, no K16$^{a}$	&	KM		&	0.25 & $_{-0.10}^{+0.13}$	&	0.05 & $_{-0.04}^{+0.07}$ &		0.17 &	$_{-0.09}^{+0.13}$  & 94\% & 34\% &	86\%		\\[2pt]
$\geq$L4$^{b}$, no K16$^{a}$	&	Uniform	&	0.17 & $_{-0.07}^{+0.09}$ 	&	0.04 & $_{-0.03}^{+0.05}$ &		0.13 & 	$_{-0.07}^{+0.10}$	& 93\% & 41\% &	87\%		\\[2pt]
			\bottomrule
			\end{tabularx}	
			\begin{tablenotes}[]\footnotesize
				\item[] \textit{Note} --- For the single sample, single objects were randomly drawn from a samples of 82 ultracool M dwarfs, 74 L dwarfs, and 23 T/Y dwarfs to match the spectral type distribution of the individual components in the binary sample.  Listed values are calculated from the mean distribution of 1000 trials.  Reported uncertainties correspond to 68.3\% credible intervals.
				\item[$a$] Excluding observations from \citet{Kao2016ApJ...818...24K,Kao2018ApJS..237...25K} to reduce possible bias from their selection effects. 
				\item[$b$] Excluding L dwarfs with spectral type earlier than L4 to examine the transition from L to T spectral types. 
			\end{tablenotes}
		\end{ThreePartTable}  
	\end{table*}
\end{center}

\section{Discussion}\label{sec.Discussion}

 %---------------------------------------------
%----------  Occurrence Rate Plots   ---------
\begin{figure*}
	\begin{centering}
		\includegraphics[width=\columnwidth]{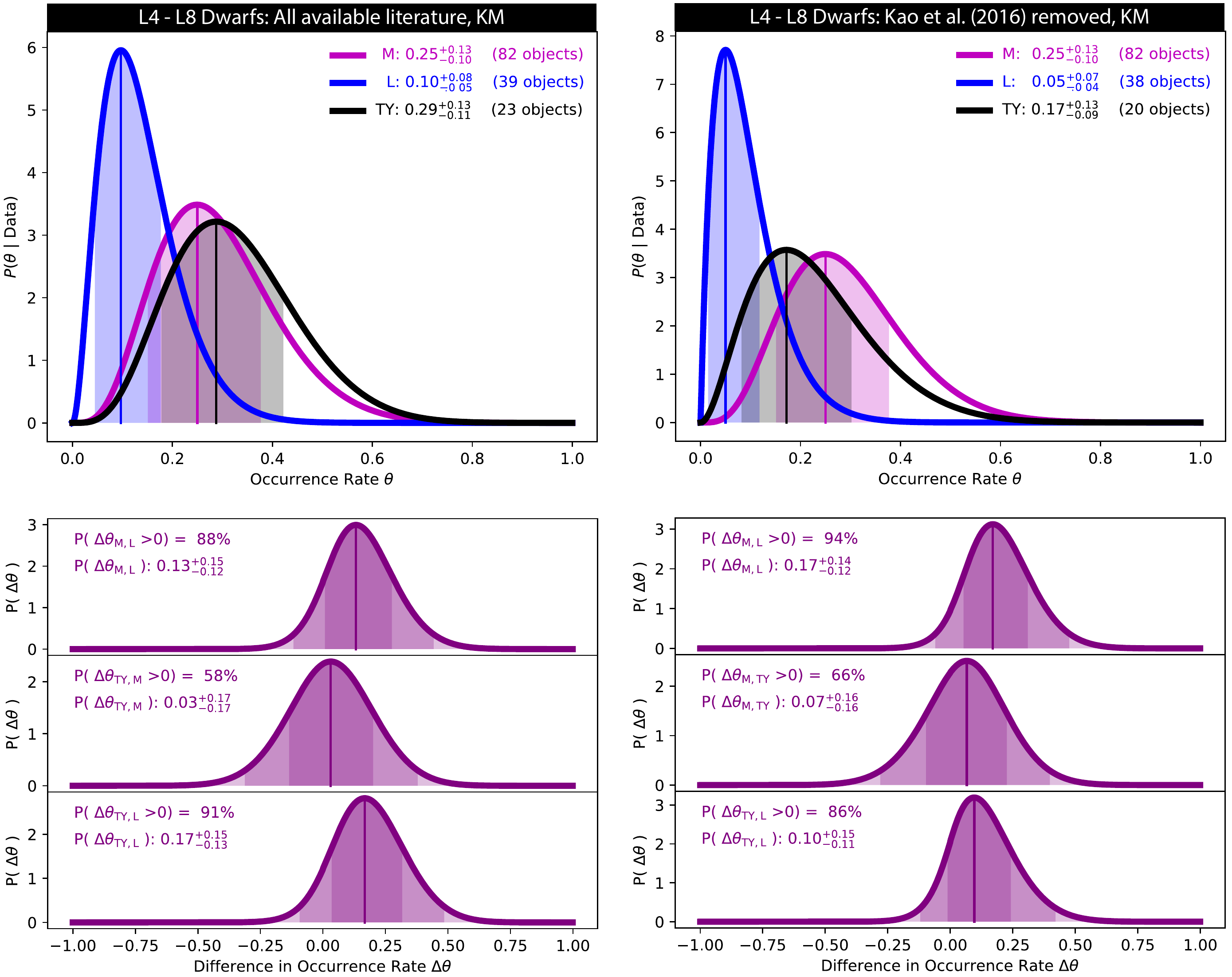}
		\caption{\label{fig:occurrenceRate_L4-L9}
The same as the top and bottom rows of Figure \ref{fig:occurrenceRate}, except L dwarf data include only L4 and later spectral type.  The possible suppression in the L dwarf occurrence rate compared to M and T/Y dwarfs is more pronounced.}
\end{centering}
\end{figure*}

%%%%%%% Science
\subsection{Comparison between quiescent and flaring radio occurrence rates and implications}
Applying our framework to the case presented in \S \ref{section:application} demonstrates the subtle but important science questions that can be identified with the careful analyses afforded by occurrence rate studies.

For example, we find that the $15^{+4}_{-4}$\% and ($20^{+6}_{-5}$)\% quiescent radio occurrence rates obtained using the uniform and Kaplan-Meier luminosity priors, respectively, for the full (not subdivided) sample significantly exceeds the  $\sim$4.6\% flaring radio occurrence rate calculated via Monte Carlo simulations  by \citet{Route2017ApJ...845...66R}, even when they account for time dependence. If the processes for quiescent radio emission and flaring auroral emission are linked \citep[e.g.][]{Pineda2017ApJ...846...75P, Kao2019MNRAS.487.1994K}, this discrepancy may suggest that the luminosity distribution chosen by \citet{Route2017ApJ...845...66R} is inconsistent with the processes occurring in these systems.  Indeed, \citet{Route2017ApJ...845...66R} choose a luminosity prior that reflects flare energy distributions from stellar flares \citep[e.g.][]{Crosby1993AdSpR..13i.179C, Guedel2003ApJ...582..423G, Loyd2018ApJ...867...71L,Paudel2018ApJ...858...55P}.  These flare energy distributions predict that lower energy flares occur more frequently than high energy flares.  However, for all completeness calculations, luminosity priors that up-weight low luminosity emission should result in higher rather than lower inferred occurrence rates.  

Thus, one possible explanation for the discrepancy between modeled quiescent and flaring ultracool dwarf radio occurrence rates is that not all quiescent radio emission traces flaring auroral activity in ultracool dwarfs.  If this is the case, how is the source of quiescent radio emission connected to or distinct from the reservoir of electrons that produce auroral emission?  

Another possibility is that the quiescent radio occurrence rate may plausibly exceed the auroral radio occurrence rate at high frequencies even if the two phenomena are physically connected.  This is because auroral electron cyclotron maser emits at near the fundamental electron gyrofrequency for ultracool dwarfs \citep{Melrose1984, Dulk1985, Treumann2006AARv..13..229T, Hallinan2008ApJ...684..644H}, whereas quiescent gyrosynchrotron processes can emit at high harmonics of this same frequency \citep{Dulk1985}.  Indeed, \citet{Williams2015ApJ...815...64W} detect 95 GHz emission from the auroral ultracool dwarf TVLM 513-46546 at frequencies that well exceed plausible electron cyclotron maser emission. We therefore surmise that a significant proportion of ultracool dwarfs may have weaker magnetic fields than the $\gtrsim$1.4 kG fields traced by this population's flaring auroral radio emission. A recent detection of coherent radio emission from at $\sim$T6.5 dwarf at 144 MHz attributed to the electron cyclotron maser instability by \citet{Vedantham2020ApJ...903L..33V} is consistent with this picture.

\subsection{A possible suppressed L dwarf quiescent radio occurrence rate}
\subsubsection{Ruling out framework and selection effects}
Additionally, though not statistically significant, the possibility of a divergence between the radio occurrence rates of L dwarfs versus T/Y dwarfs is noteworthy upon closer examination.  

First, we rule out occurrence rate over-predictions from faint T/Y dwarf radio luminosities as a possible explanation for a divergence between L and T/Y dwarf radio occurrence rates. Our simulations show that our framework can significantly over-predict occurrence rates in populations with low luminosities that are not well constrained by the available data (Table \ref{table:simulation_low-lum_uniform}).   Thus far, detected T/Y dwarf quiescent radio emission has been less luminous than for M and L dwarfs (Table \ref{table:quiescent}). This trend is tentative, but it may point to a T dwarf luminosity distribution that is on average less luminous than our assumed distribution. If this proves to be the case with future observations of T and Y dwarfs, then comparing the low luminosity and literature simulations demonstrates that our calculated T/Y dwarf radio occurrence rate actually gives a more conservative maximum-likelihood occurrence rate than if we had used a luminosity distribution over lower luminosities.  Furthermore, the credible intervals perform similarly in recovering the true occurrence rate.

Second, we rule out selection effects as a possible explanation.  In particular, two of the four radio-bright T dwarfs in our sample were detected in a study that biased their sample using possible multiwavelength tracers of auroral magnetic activity  \citep{Kao2016ApJ...818...24K}. This selection criterion resulted in a significantly higher detection rate ($\sim$80\%) compared to historical detection rates ($\sim$5--10\%) \citep{Antonova2013AA...549A.131A, RouteWolszczan2016ApJ...830...85R}.  

However, our literature sample includes numerous studies with various selection criteria, including but not limited to distance, youth, planet companions, photometric variability, and previously detected markers of possible magnetic activity. As the number of studies and survey targets increase for a given sample, the individual contributions of the selection criteria to the calculated radio occurrence rate decrease.  Thus, for this study, we elected to simply use all available data in the literature.  Given the relatively small size of the T/Y dwarf sample, we expect that selection effects will have the strongest impact for this sample. Future radio studies of T dwarfs that average down the impact of selection criteria on the radio occurrence rate will test this hypothesis.

As a check, we exclude the objects from \citet{Kao2016ApJ...818...24K}, except for the previously detected T6.5 dwarf 2MASS J10475385+2124234 \citep{RouteWolszczan2012ApJ...747L..22R}. With the removal of one radio-quiet and two radio-bright T dwarfs, the maximum-likelihood occurrence rate for T/Y dwarfs decreases to $17^{+13}_{-9}$--$13^{+10}_{-7}$\%, converging with that of M dwarfs. Concurrently, the L dwarf  maximum-likelihood occurrence rate also decreases to $10^{+7}_{-5}$--$7^{+5}_{-3}$\% with the removal of one radio-bright L dwarf, depending on the luminosity prior.  The tentative divergence between the L and T/Y dwarf sample persists but becomes less significant, with the probability  T/Y dwarfs have a higher radio occurrence rate  than L dwarfs   $\prob(\theta_{\text{T/Y}} > \theta_{L})$ decreasing from 89--92\% to 77--80\%.  Simultaneously, the significance of a possible suppressed L dwarf radio occurrence rate relative to the M dwarf sample increases from 83--84\% to 89--90\% probability.  We conclude that selection effects from the \citet{Kao2016ApJ...818...24K} are insufficient to explain the divergence between the L and T/Y dwarf samples.

\subsubsection{Possible physical effects and future work}
Having ruled out observational and calculation effects, we now consider physical interpretations.   In the following discussion, we focus on L and T dwarfs, since brown dwarfs can exhibit behaviors that are distinct from stars.  We exclude M dwarfs from our discussion, since a significant fraction of them may be stars rather than brown dwarfs, depending on their ages.  

The low occurrence rate of radio emission in L dwarfs is at odds with published H$\alpha$ magnetic activity trends in ultracool dwarfs.  \citet{Pineda2016ApJ...826...73P} show that 67/195 ($34^{+3.5}_{-3.2}\%$) of L dwarfs possess  detectable H$\alpha$ emission, whereas the detection fraction for T dwarfs is 3/42 ($8.8^{+7.4}_{-2.8}\%$).    Although \citet{Pineda2016ApJ...826...73P} do not separately treat binaries as we do here, the significant excess in L dwarf H$\alpha$ detection fractions compared to their radio occurrence rates suggests that H$\alpha$ emission in some fraction of L dwarfs may be independent of the sources for their radio emission.  H$\alpha$ emission correlates with both radio aurorae in late L and T dwarfs \citep{Kao2016ApJ...818...24K} as well as quiescent radio luminosities in auroral ultracool dwarfs \citep{Pineda2017ApJ...846...75P, Richey-Yowell2020}.  Comparing the L and T dwarf radio occurrence rates to their H$\alpha$ detection fractions implies that the transition from flare-like to aurorae-like magnetic activity continues through L spectral types, which reports of white light flares on L dwarfs support \citep{Gizis2013ApJ...779..172G, Paudel2018ApJ...858...55P, Jackman2019MNRAS.485L.136J, Paudel2020arXiv200410579P}. 

Evidence for a tentatively suppressed L dwarf occurrence rate becomes more pronounced when we exclude early L dwarfs.  Early L dwarfs are expected to be more chromospherically active than late L dwarfs with more neutral atmospheres, and excluding them significantly lowers the H$\alpha$ detection fraction to 7/75 ($9.3^{+4.5}_{-2.4}\%$) for L4--L8 dwarfs \citep{Pineda2016ApJ...826...73P}.  We find that the L dwarf radio occurrence rate also decreases, albeit less dramatically, when we exclude L3 and earlier objects (Figure \ref{fig:occurrenceRate_L4-L9}), to $4^{+5}_{-3} - 10^{+8}_{-5}\%$, depending on if we exclude the \citet{Kao2016ApJ...818...24K} objects and the luminosity prior that we choose.   At such late spectral types, these objects are most likely brown dwarfs like their T dwarf brethren, and one might naively expect brown dwarfs to exhibit similar radio magnetic activity.  Surprisingly,  the  tentative L dwarf suppression in fact becomes more significant, with $\prob(\theta_{\text{T/Y}} > \theta_{L}) = 86 - 93\%$ and $\prob(\theta_{\text{M}} > \theta_{L}) = 86 - 94\%$.  

For stars, youth correlates with faster rotation, resulting in stronger magnetic activity at X-ray wavelengths \citep{PreibischFeigelson2005ApJS..160..390P, Kiraga2007AcA....57..149K, Vidotto2014MNRAS.441.2361V, Booth2017MNRAS.471.1012B} that in turn correlates tightly with radio activity \citep{Guedel1993ApJ...405L..63G}. While ultracool dwarfs depart strongly from the G\"{u}del-Benz relationship that links stellar coronal heating  (X-ray) to magnetic fields (radio) \citep{Williams2014ApJ...785....9W}, could rotation nevertheless also play an important role in ultracool dwarf magnetic activity?

The role of rotation could provide one possible resolution to existing tension between predictions from dynamo models and ultracool dwarf magnetic fields measurements. For rapidly rotating fully convective objects like ultracool dwarfs, simulations find that the internal thermal energy budget available for convection is the dominating factor for their magnetic energy budget  \citep{Christensen2009Natur.457..167C, Yadav2017ApJ...849L..12Y}. However, convected thermal fluxes cannot predict magnetic fields strengths measured from observations of brown dwarf radio aurorae \citep{Kao2018ApJS..237...25K}, suggesting that other factors --- for example, rotation --- may also need to be considered. Indeed, \citet{Kao2023} recently demonstrate using resolved imaging at 8.4~GHz that quiescent emission from ultracool dwarfs can trace radiation belt analogs. The rotational energy reservoirs of solar system planets are important contributors to their radiation belt acceleration processes \citep{CowleyBunce2001, Krupp2007jupi.book..617K, Kollmann2018JGRA..123.9110K} and models suggest that the same may be true for radiation belts around ultracool dwarfs \citep{Leto2021MNRAS.507.1979L, Owocki2022}.

Intriguingly, the total radio detection fraction for both bursting and/or quiescent ultracool dwarf radio emission increases with increasing $v \sin i$, approximately quadrupling between 30 and 50 km s$^{-1}$ from $\sim$10\% to 40\% \citep{Pineda2017ApJ...846...75P}. Though the impact of rotation on the radio occurrence rates of ultracool dwarfs remains unstudied, theory quantifying the energy available to produce aurorae find that rotation rates are an important factor \citep{Turnpenney2017MNRAS.470.4274T, Saur2021AA...655A..75S}, while a correlation between ultracool dwarf H$\alpha$ luminosities attributed to aurorae \citep{Hallinan2015Natur.523..568H} and their quiescent radio emission \citep{Pineda2017ApJ...846...75P, Richey-Yowell2020} suggests that electron acceleration processes tied to auroral radio emissions -- such as rotation -- may also be important for producing their quiescent radio emission. 

If confirmed, a suppression in the L dwarf quiescent radio occurrence rate would be consistent with the age-rotation evolution of brown dwarfs. Unlike stars, brown dwarfs spin up rather than down with age as they contract throughout their lifetimes. For example, the rotation periods of 16 young brown dwarfs in the nearby Upper Scorpius association (age 5--10 Myr) range from 0.2 to 5 days, with a median of 1.1 days \citep{Scholz2015ApJ...809L..29S}. In contrast, field L and T dwarfs typically rotate on timescales of several hours \citep{Radigan2014ApJ...793...75R}. Since brown dwarfs cool and age along the M/L/T/Y spectral sequence, the L dwarfs in our sample may be younger on average with lower $v \sin i$ values than T/Y dwarfs of similar masses. In this scenario, if rotation is a dominant contributor to ultracool dwarf magnetism, we might expect to observe a suppressed L dwarf radio occurrence rate. While no studies currently exist for examining how youth correlates with radio activity, we apply the occurrence rate framework that we present here for a dedicated radio study of young brown dwarfs in a forthcoming paper.

Our occurrence rate framework  provides a path forward for more rigorously examining  the impact of rotation on ultracool dwarf non-auroral radio activity.  How does their radio occurrence rate rate change as function of rotation period?  Do younger brown dwarfs actually have a suppressed radio occurrence rate relative to their older counterparts?  While current sample sizes may preclude individual studies from obtaining statistically significant results on their own, a suite of results that support a physically consistent picture may illuminate the subtle physics occurring in ultracool dwarf magnetospheres.

\section{Conclusions}\label{sec.Conclusion}

In this work, we develop a generalized analytical Bayesian framework for calculating the occurrence rate of steady emission in astrophysical populations.  Our framework can take input data from telescopes with varying noise properties, which enables the full use of data available in the literature.  This approach complements existing approaches in other fields that rely on characterizing the recovery rates of dedicated data reduction pipelines for observing campaigns that use a single telescope.  Our framework treats completeness by analytically accounting for object distances, distance uncertainties, observational sensitivities, and intrinsic luminosity. As a proof of concept, we apply this framework to non-flaring quiescent radio emission in ultracool dwarfs.  We caution the reader that applying our framework to multi-object systems is a nuanced matter that requires some additional theoretical development and careful consideration of assumptions. This application is outside of the scope of this work, and we refer the reader to \citep{Kao2020_binaries} for the proper treatment.

We present the first detailed statistical study of non-flaring quiescent radio emission in ultracool dwarfs as a means of studying their magnetospheric physics.  Prior to this work, raw detection rates were the standard for ultracool dwarf radio statistics not specifically focused on time-variable radio aurorae, yet these detection rate studies did not account for completeness or binarity.  Furthermore, while quiescent radio emission cannot directly measure magnetic field strengths in the manner enabled by radio aurorae, we focus on the former to minimize the number of physical assumptions and observing effects that must be accounted for.   

 We compile all published radio observations of ultracool dwarfs and simulate synthetic data samples drawn from literature distributions of relevant characteristics to assess the performance of our framework. We show that for literature distributions, our framework recovers the simulated radio occurrence rate  within 1--5\% on average for sample sizes as small as 10 objects if the underlying radio luminosity models that we assume accurately reflect the true luminosity distribution of ultracool dwarf radio emission.  This is a factor of $\sim$13 improvement in accuracy compared to detection rates, which we show underestimate the true quiescent radio emission rate by $\sim$51\% or 66\%, for uniform and Kaplan-Meier luminosity distribution priors, respectively.  Furthermore, we show that our credible intervals serve as good proxies for confidence intervals on the true occurrence rate for all sample sizes.   In contrast, the performance of detection rate confidence intervals suffers with increasing sample size.  Thus, our simulations highlight that developing robust statistical frameworks that properly account for systematics is essential for preparing for dramatically expanded datasets as radio astronomy transitions to an era of big data.
 
Comparing the quiescent radio occurrence rate of ultracool dwarfs to the flaring radio occurrence rate calculated by  \citet{Route2017ApJ...845...66R} suggest that processes in addition to auroral processes may contribute to observed quiescent emission and/or that a significant portion of the radio-emitting population may have $\leq$1.4 kG magnetic fields.  Additionally, comparing quiescent radio occurrence rates by spectral type tentatively suggests that age or rotation effects may contribute to a possible suppression in the L dwarf quiescent radio occurrence rate.  Applying the occurrence rate framework that we present to studies examining how these properties correlate with quiescent radio emission will be one key strategy for assessing ultracool dwarf magnetic activity.

Finally, we emphasize that ultracool dwarf magnetic activity is just one possible use case for the generalized framework that we present.  This same framework may be applied to other examples of steady or quasi-steady emission or absorption.

\section*{Acknowledgements}
MK specially thanks C. Voloshin and J. S. Pineda for valuable discussions that shaped this work.  Additionally, the authors thank M. Bryan, M. Knapp, R. P. Loyd, and T. Richey-Yowell for insightful questions and providing feedback that helped clarify the presentation of this work.  Support was provided by NASA through the NASA Hubble Fellowship grant HST-HF2-51411.001-A awarded by the Space Telescope Science Institute, which is operated by the Association of Universities for Research in Astronomy, Inc., for NASA, under contract NAS5-26555; and by the Heising-Simons Foundation through the 51 Pegasi b Fellowship grant 2021-2943. This work is based on observations made with the NSF's Karl G. Jansky Very Large Array (VLA) and made use of the SIMBAD and VizieR databases, operated at CDS, Strasbourg, France; and the European Space Agency (ESA) mission \textit{Gaia} (\url{https://www. cosmos.esa.int/gaia}), processed by the Gaia Data Processing and Analysis Consortium (DPAC, \url{https://www. cosmos.esa.int/web/gaia/dpac/consortium}). 
% Finally, we thank the anonymous referee for both thorough and excellent feedback that strengthened this work.
%and by the National Radio Astronomy Observatory. The National Radio Astronomy Observatory is a facility of the National Science Foundation (NSF) operated under cooperative agreement by Associated Universities, Inc.

Software: CASA \, \citep{CASA}, Astropy \, \citep{astropy:2018},\, Matplotlib \citep{matplotlib},\, Numpy \,\citep{numpy2}, \, Scipy \, \citep{scipy}, \, Scikit-learn \citep{scikit-learn}.

%%%%%%%%%%%%%%%%%%%%%%%%%%%%%%%%%%%%%%%%%%%%%%%%%%
\vspace*{-0.2cm}
\section*{Data Availability}
This work is a meta-analysis of previously published radio observations, which we have compiled in Table \ref{table:literatureBD}.
%\pagebreak
 
%The inclusion of a Data Availability Statement is a requirement for articles published in MNRAS. Data Availability Statements provide a standardised format for readers to understand the availability of data underlying the research results described in the article. The statement may refer to original data generated in the course of the study or to third-party data analysed in the article. The statement should describe and provide means of access, where possible, by linking to the data or providing the required accession numbers for the relevant databases or DOIs.

%%%%%%%%%%%%%%%%%%%% REFERENCES %%%%%%%%%%%%%%%%%%

% The best way to enter references is to use BibTeX:
\vspace*{-0.2cm}
\bibliographystyle{mnras}
\bibliography{occurrenceRate} % if your bibtex file is called example.bib

% Alternatively you could enter them by hand, like this:
% This method is tedious and prone to error if you have lots of references
%\begin{thebibliography}{99}
%\bibitem[\protect\citeauthoryear{Author}{2012}]{Author2012}
%Author A.~N., 2013, Journal of Improbable Astronomy, 1, 1
%\bibitem[\protect\citeauthoryear{Others}{2013}]{Others2013}
%Others S., 2012, Journal of Interesting Stuff, 17, 198
%\end{thebibliography}

%%%%%%%%%%%%%%%%%%%%%%%%%%%%%%%%%%%%%%%%%%%%%%%%%%

%%%%%%%%%%%%%%%%% APPENDICES %%%%%%%%%%%%%%%%%%%%%
%If you want to present additional material which would interrupt the flow of the main paper,
%it can be placed in an Appendix which appears after the list of references.
\appendix
\newtheorem{theorem}{Theorem}[section]
\newtheorem{definition}{Definition}[theorem]
\newtheorem{lemma}[theorem]{Lemma}
\newtheorem{statement}[theorem]{Statement}
\newtheorem{remark}{Remark}
\newcommand{\E}{\mathbb{E}}
\newcommand{\XIN}{X_{\mathrm{in}}}
\newcommand{\XOUT}{X_{\mathrm{out}}}
\newcommand{\XTOT}{X_{\mathrm{tot}}}

\vspace*{-0.5cm}
\section{Proof for p($d_i$) } \label{appendix:p(d)}

\subsection{Concise proof }
\begin{theorem}
Let $\vtot \subseteq \mathbb{R}^3$ be the total volume of three-dimensional space. Let $\vin,\vout$ such that $\vin \cap \vout = \varnothing$ and $\vin \cup \vout = \vtot$ be two mutually exclusive subsets whose union is the the total volume. Assume the distribution of brown dwarfs over $\vtot$ is Poisson($\lambda$) for some fixed parameter $\lambda \in \mathbb{R}$. Let $d$ be the event such that a brown dwarf is within $\vin$. Then 

\begin{equation}
\prob(d) = \frac{\vin}{\vtot}
\end{equation}
\end{theorem}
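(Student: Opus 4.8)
The plan is to translate the spatial statement about a single brown dwarf into counting statistics of the underlying point process and then exploit the splitting property of the Poisson distribution. First I would introduce the counting random variables $\XIN$, $\XOUT$, and $\XTOT$ for the number of brown dwarfs falling in $\vin$, $\vout$, and $\vtot$ respectively, reading the symbols $\vin,\vout,\vtot$ in the conclusion as the Lebesgue measures (volumes) of the corresponding regions. The homogeneous Poisson assumption with intensity $\lambda$ then supplies two facts I will use: the count in any measurable region is Poisson distributed with mean equal to $\lambda$ times the volume of that region, so that $\E[\XIN] = \lambda\,\vin$ and $\E[\XTOT] = \lambda\,\vtot$; and the counts over the disjoint regions $\vin$ and $\vout$ are independent, with $\XTOT = \XIN + \XOUT$.

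The central step is to make the event $d$ precise and evaluate it. I would interpret ``a brown dwarf is within $\vin$'' as the statement that a single brown dwarf, known to reside somewhere in $\vtot$, lies in the subregion $\vin$. To extract this from the process I would condition on the total count: by the Poisson splitting (thinning) property, given $\XTOT = n$ the conditional law of $\XIN$ is Binomial with $n$ trials and success probability $p = \lambda\,\vin /(\lambda\,\vtot) = \vin/\vtot$, equivalently the $n$ points are independent and uniform over $\vtot$. Since this per-object probability $p$ does not depend on $n$, each individual brown dwarf independently lands in $\vin$ with probability $\vin/\vtot$, which is exactly $\prob(d)$. A shorter equivalent route is to write $\prob(d) = \E[\XIN]/\E[\XTOT] = \lambda\,\vin/(\lambda\,\vtot)$, with the factor $\lambda$ cancelling; I would present the binomial argument as the rigorous justification that this ratio of means really is the single-object probability.

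The main obstacle is definitional rather than computational: the algebra collapses immediately once the means are written down, but the clean identity $\prob(d) = \vin/\vtot$ only makes sense for a single, generic object, so I must be careful that $d$ refers to one brown dwarf conditioned on its presence in $\vtot$ and not to the entire random population. The subtlety to handle is the conditioning on at least one brown dwarf existing, i.e. $\XTOT \geq 1$; this is where a careless argument could fail, but because the binomial success probability $\vin/\vtot$ is identical for every $n \geq 1$, it is unaffected by conditioning on $\XTOT \geq 1$, and all dependence on $\lambda$ (hence on the absolute population size) cancels. I would close by noting that the result is therefore independent of $\lambda$, matching the physical expectation that a spatially uniform density reduces location probabilities to pure volume ratios.
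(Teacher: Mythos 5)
Your proposal is correct, and it reaches the result by the same decomposition as the paper's second (algebraic) proof but with a different key ingredient. The paper gives two arguments: a concise one that simply identifies $\mathbb{P}(d)$ with the ratio of means $\mathbb{E}[X_{\mathrm{in}}]/\mathbb{E}[X_{\mathrm{tot}}]$, and an algebraic one that conditions on $X_{\mathrm{in}}=i$ and $X_{\mathrm{tot}}=N$, sets $\mathbb{P}(d \mid i, N)=i/N$, and evaluates the resulting double sum over Poisson probability mass functions by means of a differentiated-binomial-theorem lemma and the exponential series. You condition on the total count exactly as in that algebraic proof, but instead of carrying out the computation you invoke the standard Poisson splitting (thinning) theorem: given $X_{\mathrm{tot}}=n$, the count $X_{\mathrm{in}}$ is $\mathrm{Binomial}(n,\, V_{\mathrm{in}}/V_{\mathrm{tot}})$, equivalently the $n$ points are independent and uniform on the region, so the per-object probability is $V_{\mathrm{in}}/V_{\mathrm{tot}}$ for every $n\ge 1$. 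What the paper's hand computation buys is self-containedness: its Lemma 1 is in effect a derivation of the binomial conditional law that your citation of thinning presupposes. What your version buys is brevity and, more substantively, rigor on a point the paper glosses over: the $N=0$ term in the paper's double sum is an ill-defined $0/0$, and if that term is properly set to zero the unconditioned sum evaluates to $(V_{\mathrm{in}}/V_{\mathrm{tot}})\,(1-e^{-\lambda V_{\mathrm{tot}}})$ rather than $V_{\mathrm{in}}/V_{\mathrm{tot}}$; the clean identity only holds after conditioning on at least one object existing. Your explicit conditioning on $X_{\mathrm{tot}}\ge 1$, combined with the observation that the binomial success probability is the same for every $n\ge 1$, is exactly what removes this defect. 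Your closing remark that the ratio-of-means identity is a shortcut requiring justification also correctly diagnoses the gap in the paper's concise proof, which asserts that identity without argument.
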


\begin{remark}
While  $\vin,\vout, \vtot$ are technically sets of points in space, we will use a slight abuse of notation to mean volume ($|\vin|,|\vout|, |\vtot|$) rather than the set of points.
\end{remark}

\begin{proof}
Let $X_{\mathrm{in}},X_{\mathrm{out}}$ be distributed like $\mathrm{Poisson}(\lambda)$ be the random variables describing the number of brown dwarfs occurring in the volumes $\vin$ and $\vout$ respectively.  Since the number of brown dwarfs in $\vtot$ is Poisson($\lambda$) and the sets $\vin,\vout$ are mutually exclusive, $X_{in},X_{out}$ are two independent and identically distributed random Poisson variables with parameter $\lambda \in \mathbb{R}$, describing the number of brown dwarfs occurring in the volumes $\vin$ and $\vout$ respectively. Furthermore, because $\vin \cup \vout = \vtot$, then $X_{\mathrm{in}}+X_{\mathrm{out}} = X_{\mathrm{tot}}$, the total number of brown dwarfs occurring in $\vtot$.
\begin{equation}
\prob(d)  =\frac{\E[X_{\mathrm{in}}]}{\E[X_{\mathrm{tot}}]}= \frac{\E[X_{\mathrm{in}}]}{\E[X_{\mathrm{in}} + X_{\mathrm{out}}]} = \frac{\lambda \vin}{ \lambda \vtot } =  \frac{\vin}{\vtot} 
\end{equation}
\end{proof}

\subsection{Algebraic proof }

\begin{lemma}
	 $\sum_{k=0}^{N} k {N \choose k} x^k y^{N-k}  = Nx(x+y)^{N-1} $
\end{lemma}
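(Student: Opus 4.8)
The plan is to prove this standard binomial identity by the \emph{absorption identity} $k{N \choose k} = N{N-1 \choose k-1}$, followed by an index shift and a single appeal to the binomial theorem. I prefer this purely algebraic route over the alternative of differentiating $(x+y)^N = \sum_{k=0}^{N} {N \choose k} x^k y^{N-k}$ with respect to $x$ and multiplying through by $x$, because it avoids any analytic machinery and keeps the argument valid for $x,y$ treated as formal indeterminates.

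First I would observe that the $k=0$ term of the left-hand sum vanishes on account of the factor $k$, so the summation may be restarted at $k=1$ without changing its value. Next I would apply the absorption identity to each remaining term and factor the constant $N$ out of the sum, replacing $\sum_{k=1}^{N} k {N \choose k} x^k y^{N-k}$ with $N \sum_{k=1}^{N} {N-1 \choose k-1} x^k y^{N-k}$.

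Then I would perform the substitution $j = k-1$, so that $j$ runs from $0$ to $N-1$, and rewrite the exponents as $x^{j+1} y^{(N-1)-j}$. Pulling one factor of $x$ to the front leaves $Nx \sum_{j=0}^{N-1} {N-1 \choose j} x^{j} y^{(N-1)-j}$, at which point the binomial theorem collapses the remaining sum to $(x+y)^{N-1}$, yielding the claimed expression $Nx(x+y)^{N-1}$.

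The only point requiring any care is the bookkeeping on the summation bounds and the exponents after the shift; there is no genuine obstacle here, since every step is an identity that holds over any commutative ring. If desired, one could verify the absorption identity itself by expanding the binomial coefficients into factorials, but I would simply invoke it as standard.
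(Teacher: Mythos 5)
Your proof is correct, but it takes a genuinely different route from the paper. The paper proves the lemma by calculus: it differentiates the binomial theorem $(x+y)^{N} = \sum_{k=0}^{N} \binom{N}{k} x^k y^{N-k}$ with respect to $x$ and then multiplies through by $x$, obtaining $Nx(x+y)^{N-1} = \sum_{k=0}^{N} \binom{N}{k} k\, x^{k} y^{N-k}$ in essentially two lines. You instead proceed purely algebraically: drop the vanishing $k=0$ term, apply the absorption identity $k\binom{N}{k} = N\binom{N-1}{k-1}$, shift the index via $j = k-1$, and invoke the binomial theorem once on the reindexed sum. The paper's argument is shorter and requires no index bookkeeping, at the cost of importing differentiation and implicitly treating the identity as one between real polynomial functions; your argument is slightly longer but, as you note, is a chain of identities valid over any commutative ring with $x,y$ as formal indeterminates, which is the more robust formulation even though nothing in the paper's application (where $x,y$ are volumes times a Poisson rate) actually demands that generality. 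Both proofs are complete and correct; your bound and exponent bookkeeping after the shift is accurate.
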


\begin{proof}
We begin with the statement of the Binomial Theorem
	\begin{equation}
		(x+y)^{N} = \sum_{k=0}^{N} {N \choose k} x^k y^{N-k} \,.
	\end{equation}
By differentiating with respect to $x$ and multiplying x yields
	\begin{equation}
		Nx(x+y)^{N-1} = \sum_{k=0}^{N} {N \choose k} k x^{k} y^{N-k} 
	\end{equation}
\end{proof}

\begin{proof}
The probability that a brown dwarf is within $\vin$ given $i$ brown dwarfs in $\vin$ and $N$ brown dwarfs in $\vtot$ is 
\begin{equation} \label{eqn:p_iN}
\prob(d \mid X_{\mathrm{in}} = i, X_{\mathrm{tot}} = N) = i/N \;.
\end{equation}
Since $\XIN,\XOUT \sim \mathrm{Poisson}(\lambda)$ then 
\begin{align}
\prob(X_{\mathrm{in}} = i) = & \frac{(\lambda \vin)^i }{i!} e^{-\lambda \vin}  \label{eqn:poisson_in}, \quad  \\
\prob(X_{\mathrm{out}}= N - i) = & \frac{(\lambda \vout)^{N-i} }{(N-i)!} e^{-\lambda \vout} 
\end{align}
by definition. Using this fact, then we can derive:
\begin{align} 
	\prob(X_{\mathrm{tot}}	= N \mid X_{\mathrm{in}} = i)  & = \prob(X_{\mathrm{in}} + X_{\mathrm{out}} = N \mid X_{\mathrm{in}} = i) \\
							&	= \prob(X_{\mathrm{out}}= N - X_{\mathrm{in}} \mid X_{\mathrm{in}} = i)  \\ 
							&	= \prob(X_{\mathrm{out}}= N - i) \\
							&	= \frac{(\lambda \vout)^{N-i} }{(N-i)!} e^{-\lambda \vout}  \,. \label{pXtot}
\end{align}

By the law of total probability, sum over all possible $X_{\mathrm{in}}, X_{\mathrm{tot}}$:
\begin{align}
	\prob(d)  & = \sum_{N=0}^{\infty}  \sum_{i=0}^{N} \prob(d \cap  X_{\mathrm{tot}} = N \cap X_{\mathrm{in}} = i) \\
			   & = \sum_{N=0}^{\infty}\sum_{i=0}^{N}  \prob(d \mid  X_{\mathrm{tot}} = N, X_{\mathrm{in}} = i) \times \prob(X_{\mathrm{in}} = i) \times \prob( X_{\mathrm{tot}}=N \mid X_{\mathrm{in}} = i) \;, \quad \text{by definition}  \\
	 			& = \sum_{N=0}^{\infty}  \sum_{i=0}^{N} \underbrace{\frac{i}{N}}_{\ref{eqn:p_iN}}  
				\underbrace{\frac{(\lambda \vin)^i }{i!} e^{-\lambda \vin}}_{\ref{eqn:poisson_in}}  
				\underbrace{\frac{(\lambda \vout)^{N-i} }{(N-i)!} e^{-\lambda \vout}}_{\ref{pXtot}} \; .%, \quad \text{substitution} 
\end{align}

Multiplying by $N!/N!$ on both sides gives 
\begin{align}
	\prob(d) 	& = \sum_{N=0}^{\infty}  \sum_{i=0}^{N} \frac{i}{N (N!)}   \frac{N!}{i! (N-i)!} \lambda^{N} \vin^i \vout^{N-i} e^{-\lambda \vtot} \,.
\end{align}

By the definition of the binomial coefficient, 
\begin{align}
	\prob(d) 	& = \sum_{N=0}^{\infty}  \sum_{i=0}^{N} \frac{i}{N (N!)}   {N \choose i} \lambda^{N} \, \vin^i \, \vout^{N-i} \, e^{-\lambda \vtot} \,.
\end{align}

By Lemma 1, 
\begin{align}
	\prob(d) 	& =  e^{-\lambda \vtot}  \sum_{N=0}^{\infty}  \frac{\lambda^N}{N!}	 \vin \vtot^{N-1} \frac{\vtot}{\vtot} 
				\,\,\, = \,\,\,  \frac{\vin}{\vtot}  e^{-\lambda \vtot}  \sum_{N=0}^{\infty}  \frac{(\lambda \vtot)^N}{N!}	 \,\,.  \\
\end{align}

By the series expansion definition of the exponential, 
\begin{align}
	\prob(d) 	& =  \frac{\vin}{\vtot}  e^{-\lambda \vtot}  e^{\lambda \vtot} 	   \\
				& = \frac{\vin}{\vtot} \, ,
\end{align}
as desired.
\end{proof}

%%%%%%%%%% UN-COMMENT
\setlength{\tabcolsep}{0.15in}
\begin{center}
	\begin{table*}\centering 
		\begin{ThreePartTable}
			\caption{\label{table:literatureBD} Radio Observations of Ultracool Dwarfs from the Literature included in Occurrence Rate Calculation }
			\begin{tabularx}{\textwidth}{llccllcll}
			%\tabletypesize{\scriptsize}
			\toprule \vspace{2pt}
Object Name			&
SpT					&	
ref					&	
$\pi$				&	
$d$$^{a}$			&	
ref					&	
$F_{\nu}$			&	
ref					&	
Note	\\[-3pt]
					&	
					&	
					&	
(mas)				&	
(pc)				&	
					&	
($\mu$Jy)			&	
					&	
					\\%		
\midrule  
\midrule  	
M dwarfs		& &	 		&		&	&	&	&	& \\
\midrule
2MASS J00192626+4614078	&	M8	&	74	&	26.0858	$\pm$	0.1696	&	38.3	$\pm$	0.2	&	30	&	$<$33			&	10	&		\\
2MASS J00194579+5213179	&	M9	&	74	&	50.1097	$\pm$	0.18	&	20.0	$\pm$	0.1	&	30	&	$<$42			&	10	&		\\
2MASS J00274197+0503417	&	M8	&	28	&	13.24	$\pm$	1.56	&	75.5	$\pm$	8.9	&	22	&	$<$75			&	8	&		\\
2MASS J01090150-5100494 	&	M8.5	&	51	&	62.8527	$\pm$	0.113	&	15.9	$\pm$	0.0	&	30	&	$<$110			&	17	&		\\
2MASS J01092170+2949255	&	M9.5	&	72	&	62.7825	$\pm$	0.2442	&	15.9	$\pm$	0.1	&	30	&	$<$54			&	64	&		\\
2MASS J01095117-0343264 	&	M9	&	69	&	94.3979	$\pm$	0.318	&	10.6	$\pm$	0.0	&	30	&	$<$22			&	82	&		\\
2MASS J01400263+2701505	&	M8.5	&	24	&	52.6488	$\pm$	0.1771	&	19.0	$\pm$	0.1	&	30	&	$<$20			&	64	&		\\
2MASS J01483864-3024396	&	M7.5	&	74	&	41.0986	$\pm$	0.4302	&	24.3	$\pm$	0.3	&	30	&	$<$45			&	10	&		\\
2MASS J01490895+2956131	&	M9.7	&	5	&	41.6592	$\pm$	0.4501	&	24.0	$\pm$	0.3	&	30	&	$<$140			&	64	&		\\
2MASS J02150802-3040011	&	M8	&	74	&	71.111	$\pm$	0.1502	&	14.1	$\pm$	0.0	&	30	&	$<$75			&	4	&		\\
2MASS J02484100-1651216	&	M8	&	70	&	44.5509	$\pm$	0.1472	&	22.4	$\pm$	0.1	&	30	&	$<$81			&	58	&		\\
2MASS J03140344+1603056	&	M9.4	&	5	&	73.4296	$\pm$	0.2757	&	13.6	$\pm$	0.1	&	30	&	$<$108			&	58	&		\\
2MASS J03205965+1854233	&	M8	&	41	&	68.2785	$\pm$	0.1466	&	14.6	$\pm$	0.0	&	30	&	$<$81			&	58	&		\\
2MASS J03313025-3042383	&	M7.5	&	20	&	79.9181	$\pm$	0.1136	&	12.5	$\pm$	0.0	&	30	&	$<$72			&	10	&		\\
2MASS J03341218-4953322 	&	M9	&	26	&	112.6361	$\pm$	0.0923	&	8.9	$\pm$	0.0	&	30	&	$<$29.4			&	52	&		\\
2MASS J03350208+2342356	&	M8.5	&	78	&	19.5277	$\pm$	0.1543	&	51.2	$\pm$	0.4	&	30	&	$<$45			&	3	&		\\
2MASS J03505737+1818069	&	M9.0	&	24	&	27.3355	$\pm$	0.1528	&	36.6	$\pm$	0.2	&	30	&	$<$105			&	10	&		\\
2MASS J03510004-0052452	&	M8	&	24	&	68.0325	$\pm$	0.1851	&	14.7	$\pm$	0.0	&	30	&	$<$123			&	58	&		\\
2MASS J04173745-0800007	&	M7.5	&	24	&	55.2859	$\pm$	0.1612	&	18.1	$\pm$	0.1	&	30	&	$<$36			&	10	&		\\
2MASS J04341527+2250309	&	M7	&	61	&	5.815	$\pm$	0.5524	&	172.0	$\pm$	16.3	&	30	&	$<$69			&	10	&		\\
2MASS J04351455-1414468	&	M7	&	28	&	7.3148	$\pm$	0.1492	&	136.7	$\pm$	2.8	&	30	&	$<$42			&	4	&		\\
2MASS J04351612-1606574	&	M8	&	57	&	94.4906	$\pm$	0.2523	&	10.6	$\pm$	0.0	&	30	&	$<$48			&	10	&		\\
2MASS J04361038+2259560	&	M8	&	56	&	5.8396	$\pm$	0.4808	&	171.2	$\pm$	14.1	&	30	&	$<$45			&	10	&		\\
2MASS J04363893+2258119	&	M7.75	&	61	&	5.4375	$\pm$	0.3327	&	183.9	$\pm$	11.3	&	30	&	$<$57			&	10	&		\\
2MASS J04402325-0530082	&	M7.5	&	57	&	102.4638	$\pm$	0.1063	&	9.8	$\pm$	0.0	&	30	&	$<$39			&	10	&		\\
2MASS J04433761+0002051	&	M9	&	28	&	47.4115	$\pm$	0.1857	&	21.1	$\pm$	0.1	&	30	&	$<$54			&	58	&		\\
2MASS J05173766-3349027	&	M8.0	&	24	&	59.2738	$\pm$	0.1043	&	16.9	$\pm$	0.0	&	30	&	$<$54			&	10	&		\\
2MASS J05392474+4038437	&	M8.0	&	48	&	87.5414	$\pm$	0.1905	&	11.4	$\pm$	0.0	&	30	&	$<$63			&	4	&		\\
2MASS J05441150-2433018	&	M7.6	&	5	&	48.4871	$\pm$	0.1281	&	20.6	$\pm$	0.1	&	30	&	$<$63			&	58	&		\\
2MASS J07410681+1738459 	&	M7.0	&	24	&	53.5151	$\pm$	0.2233	&	18.7	$\pm$	0.1	&	30	&	$<$75			&	58	&		\\
2MASS J08105865+1420390	&	M8	&	80	&	41.9878	$\pm$	0.1692	&	23.8	$\pm$	0.1	&	30	&	$<$39			&	64	&		\\
2MASS J08185804+2333522	&	M7	&	74	&	44.3886	$\pm$	0.1868	&	22.5	$\pm$	0.1	&	30	&	$<$78			&	58	&		\\
2MASS J08533619-0329321	&	M9	&	71	&	115.3036	$\pm$	0.1132	&	8.7	$\pm$	0.0	&	30	&	$<$42.9			&	52	&		\\
2MASS J10063197-1653266	&	M7.5	&	74	&	49.8335	$\pm$	0.1435	&	20.1	$\pm$	0.1	&	30	&	$<$87			&	58	&		\\
2MASS J10163470+2751497	&	M8	&	41	&	49.4994	$\pm$	0.139	&	20.2	$\pm$	0.1	&	30	&	$<$47			&	8	&		\\
2MASS J10240997+1815533	&	M8	&	74	&	28.958	$\pm$	0.4403	&	34.5	$\pm$	0.5	&	30	&	$<$87			&	58	&		\\
2MASS J10481463-3956062	&	M8.5	&	35	&	247.2157	$\pm$	0.1236	&	4.0	$\pm$	0.0	&	30	&	--			&	67	&		\\
2MASS J11020983-3430355	&	M9	&	28	&	16.731	$\pm$	0.2121	&	59.8	$\pm$	0.8	&	30	&	$<$42			&	63	&		\\
2MASS J11240487+3808054	&	M8.5	&	20	&	54.1348	$\pm$	0.2015	&	18.5	$\pm$	0.1	&	30	&	$<$66			&	58	&		\\
2MASS J11395113-3159214	&	M9	&	2	&	20.1343	$\pm$	0.246	&	49.7	$\pm$	0.6	&	30	&	$<$0.10			&	17	&		\\
2MASS J11414406-2232156	&	M7.5	&	20	&	52.6168	$\pm$	0.2017	&	19.0	$\pm$	0.1	&	30	&	$<$108			&	58	&		\\
2MASS J11554286-2224586	&	M7.5	&	25	&	91.5952	$\pm$	0.1544	&	10.9	$\pm$	0.0	&	30	&	$<$19			&	82	&		\\
2MASS J12245222-1238352	&	M8	&	32	&	58.1118	$\pm$	0.1929	&	17.2	$\pm$	0.1	&	30	&	$<$102			&	58	&		\\
2MASS J12505265-2121136	&	M7.5	&	51	&	56.0805	$\pm$	0.2522	&	17.8	$\pm$	0.1	&	30	&	$<$66			&	4	&		\\
2MASS J12531240+4034038	&	M7	&	71	&	47.1353	$\pm$	0.0953	&	21.2	$\pm$	0.0	&	30	&	$<$78			&	58	&		\\
2MASS J13092185-2330350	&	M7	&	32	&	66.6683	$\pm$	0.1913	&	15.0	$\pm$	0.0	&	30	&	$<$93			&	58	&		\\
2MASS J13322442-0441126	&	M7.5	&	74	&	--			&	19.0	$\pm$	2.0	&	26	&	$<$60			&	58	&		\\
2MASS J13540876+0846083	&	M7.5	&	60	&	37.9724	$\pm$	0.4169	&	26.3	$\pm$	0.3	&	30	&	$<$105			&	58	&		\\
2MASS J13564148+4342587	&	M8	&	81	&	50.0407	$\pm$	0.6125	&	20.0	$\pm$	0.2	&	30	&	$<$99			&	58	&		\\
2MASS J14032232+3007547	&	M9	&	80	&	50.484	$\pm$	0.2048	&	19.8	$\pm$	0.1	&	30	&	$<$60			&	58	&		\\
2MASS J14112131-2119503	&	M8	&	2	&	22.1552	$\pm$	0.3232	&	45.1	$\pm$	0.7	&	30	&	$<$93			&	58	&		\\
2MASS J14213145+1827407	&	M8.9	&	5	&	52.6729	$\pm$	0.2626	&	19.0	$\pm$	0.1	&	30	&	$<$42			&	64	&		\\
2MASS J14280419+1356137	&	M7	&	62	&	75.4469	$\pm$	0.1329	&	13.3	$\pm$	0.0	&	30	&	$<$90			&	58	&		\\
2MASS J14284323+3310391	&	M9	&	71	&	90.9962	$\pm$	0.1271	&	11.0	$\pm$	0.0	&	30	&	$<$63			&	4	&		\\
\bottomrule
\end{tabularx}
\begin{tablenotes}[]\footnotesize
\item[]\textit{Note} --- We report upper limits for $F_{\nu}$ or flux densities as presented in the radio observation references and define detected = 1 if $F_{\nu} \geq 4.0$ to remain consistent with our occurrence rate calculation. This table does not include 2MASS J05233822-1403022, which \citet{Antonova2007AA...472..257A} report can vary by a factor of $\sim$5 from $\leq$45 to 230$\pm$17 $\mu$Jy with no evidence of short-duration flares during 2-hr observing blocks. Although the low circular polarization of this object rules out coherent aurorae, non-auroral  flares at both radio and optical frequencies can persist for at least several hours \citep[e.g.][]{Villadsen2019ApJ...871..214V, Paudel2018ApJ...861...76P}. We exclude this outlier object from on the basis of the uncertain nature of its radio emission. 
\item[$a$] Distances calculated from parallaxes are provided for the reader's convenience but are truncated in precision to the first decimal place in the interest of space. As such, uncertainties reported as 0.0 indicate that actual distance uncertainties are less than the reported significant figures but not truly zero.
	\end{tablenotes}
	\end{ThreePartTable}  
	\end{table*}
\end{center}

\setlength{\tabcolsep}{0.15in}
\begin{center}
	\begin{table*}\centering 
		\begin{ThreePartTable}
			\contcaption{Radio Observations of Ultracool Dwarfs from the Literature included in Occurrence Rate Calculation}
			\begin{tabularx}{\textwidth}{llccllcll}
			%\tabletypesize{\scriptsize}
			\toprule \vspace{2pt}
Object Name			&
SpT					&	
ref					&	
$\pi$				&	
$d$					&	
ref					&	
$F_{\nu}$			&	
ref					&	
Note	\\[-3pt]
					&	
					&	
					&	
(mas)				&	
(pc)				&	
					&	
($\mu$Jy)			&	
					&	
					\\%		
\midrule  
\midrule  
2MASS J14402293+1339230	&	M7	&	80	&	38.9597	$\pm$	0.1403	&	25.7	$\pm$	0.1	&	30	&	$<$75			&	58	&		\\
2MASS J14441717+3002145	&	M8.5	&	60	&	67.2904	$\pm$	0.1144	&	14.9	$\pm$	0.0	&	30	&	$<$81			&	58	&		\\
2MASS J14563831-2809473	&	M7.0	&	23	&	141.6865	$\pm$	0.1063	&	7.1	$\pm$	0.0	&	30	&	$<$120			&	17	&		\\
2MASS J15072779-2000431	&	M7.5	&	74	&	41.7961	$\pm$	0.2207	&	23.9	$\pm$	0.1	&	30	&	$<$96			&	58	&		\\
2MASS J15164073+3910486	&	M7	&	62	&	56.7689	$\pm$	0.0719	&	17.6	$\pm$	0.0	&	30	&	$<$81			&	58	&		\\
2MASS J15210103+5053230 	&	M7.5	&	74	&	61.9989	$\pm$	0.1	&	16.1	$\pm$	0.0	&	30	&	$<$16			&	82	&		\\
2MASS J15345704-1418486	&	M8.6	&	5	&	91.6785	$\pm$	0.1735	&	10.9	$\pm$	0.0	&	30	&	$<$87			&	58	&		\\	
2MASS J15460540+3749458	&	M7.5	&	24	&	28.3105	$\pm$	0.1394	&	35.3	$\pm$	0.2	&	30	&	$<$84			&	58	&		\\
2MASS J16272794+8105075 	&	M9	&	26	&	49.2537	$\pm$	0.167	&	20.3	$\pm$	0.1	&	30	&	$<$60			&	64	&		\\
2MASS J16553529-0823401	&	M7	&	35	&	153.8139	$\pm$	0.1148	&	6.5	$\pm$	0.0	&	30	&	$<$22.5			&	43	&		\\
2MASS J17071830+6439331	&	M8.5	&	60	&	55.363	$\pm$	0.1155	&	18.1	$\pm$	0.0	&	30	&	$<$60			&	64	&		\\
2MASS J17571539+7042011	&	M7.5	&	24	&	52.5837	$\pm$	0.2462	&	19.0	$\pm$	0.1	&	30	&	$<$81			&	4	&		\\
2MASS J18261131+3014201	&	M8.5	&	47	&	90.0024	$\pm$	0.1086	&	11.1	$\pm$	0.0	&	30	&	$<$87			&	4	&		\\
2MASS J18353790+3259545	&	M8.5	&	24	&	175.8244	$\pm$	0.0905	&	5.7	$\pm$	0.0	&	30	&	464	$\pm$	10	&	11	&		\\
2MASS J18432213+4040209	&	M7.5	&	69	&	69.4569	$\pm$	0.0906	&	14.4	$\pm$	0.0	&	30	&	$<$16			&	82	&		\\
2MASS J18544597+8429470	&	M9	&	72	&	37.163	$\pm$	0.2005	&	26.9	$\pm$	0.1	&	30	&	$<$87			&	58	&		\\
2MASS J19165762+0509021	&	M8	&	39	&	168.962	$\pm$	0.1299	&	5.9	$\pm$	0.0	&	30	&	$<$79.8			&	43	&	a	\\
2MASS J20370715-1137569	&	M8	&	74	&	46.6796	$\pm$	0.1355	&	21.4	$\pm$	0.1	&	30	&	$<$33			&	10	&		\\
2MASS J22373255+3922398 	&	M9.5	&	74	&	47.6127	$\pm$	0.1972	&	21.0	$\pm$	0.1	&	30	&	$<$81			&	58	&		\\
2MASS J23062928-0502285 	&	M8	&	74	&	80.4512	$\pm$	0.1211	&	12.4	$\pm$	0.0	&	30	&	$<$8.1			&	65	&		\\
2MASS J23312174-2749500	&	M7.0	&	24	&	73.2134	$\pm$	0.1575	&	13.7	$\pm$	0.0	&	30	&	$<$72			&	58	&		\\
2MASS J23494899+1224386	&	M8.0	&	24	&	44.5694	$\pm$	0.1801	&	22.4	$\pm$	0.1	&	30	&	$<$60			&	58	&		\\
2MASS J23515044-2537367	&	M9	&	51	&	49.1253	$\pm$	0.4479	&	20.4	$\pm$	0.2	&	30	&	$<$69			&	58	&	b	\\
2MASS J23535946-0833311	&	M8.6	&	5	&	45.8744	$\pm$	0.2745	&	21.8	$\pm$	0.1	&	30	&	$<$69			&	58	&		\\
BRI 0021-0214	&	M9.5	&	44	&	79.9653	$\pm$	0.2212	&	12.5	$\pm$	0.0	&	30	&	83	$\pm$	18	&	8	&		\\
S Ori 55	&	M9	&	84	&	--			&	300.0	--	450.0	&	18	&	$<$66			&	10	&		\\
TVLM 513-46546	&	M8.5	&	41	&	93.4497	$\pm$	0.1945	&	10.7	$\pm$	0.0	&	30	&	308	$\pm$	16	&	8	&		\\
TWA 5B	&	M9	&	2	&	20.2517	$\pm$	0.0586	&	49.4	$\pm$	0.1	&	30	&	$<$84			&	63	&	c	\\
\midrule
L dwarfs	& 	&		&		&	&	&	& & 		\\
\midrule
2MASS J00303013-1450333 	&	L7	&	40	&	37.42	$\pm$	4.5	&	26.7	$\pm$	3.2	&	27	&	$<$57			&	10	&		\\
2MASS J00361617+1821104	&	L3.5	&	68	&	114.4167	$\pm$	0.2088	&	8.7	$\pm$	0.0	&	30	&	152	$\pm$	9	&	9	&		\\
2MASS J00452143+1634446	&	L0	&	28	&	65.0151	$\pm$	0.2274	&	15.4	$\pm$	0.1	&	30	&	$<$39			&	10	&		\\
2MASS J01075242+0041563	&	L8	&	76	&	64.13	$\pm$	4.51	&	15.6	$\pm$	1.1	&	27	&	$<$10.2			&	73	&		\\
2MASS J01410321+1804502	&	L0.8	&	5	&	42.2974	$\pm$	0.2923	&	23.6	$\pm$	0.2	&	30	&	$<$30			&	10	&		\\
2MASS J01443536-0716142	&	L6.5	&	76	&	79.0319	$\pm$	0.624	&	12.7	$\pm$	0.1	&	30	&	$<$33			&	10	&		\\
2MASS J02050344+1251422	&	L5	&	40	&	--			&	22.0	$\pm$	2.0	&	26	&	$<$48			&	10	&		\\
2MASS J02132880+4444453	&	L1.5	&	74	&	51.6812	$\pm$	0.3832	&	19.3	$\pm$	0.1	&	30	&	$<$30			&	10	&		\\
2MASS J02355993-2331205 	&	L1.5	&	76	&	46.5815	$\pm$	0.2863	&	21.5	$\pm$	0.1	&	30	&	$<$99			&	58	&		\\
2MASS J02511490-0352459	&	L3	&	74	&	90.62	$\pm$	3.02	&	11.0	$\pm$	0.4	&	7	&	$<$36			&	10	&		\\
2MASS J02550357-4700509	&	L9	&	76	&	205.3266	$\pm$	0.2545	&	4.9	$\pm$	0.0	&	30	&	$<$30.9			&	52	&		\\
2MASS J02572581-3105523 &	L8.5	&	76	&	102.3651	$\pm$	0.6073	&	9.8	$\pm$	0.1	&	30	&	$<$63.0			&	52	&		\\
2MASS J03400942-6724051	&	L7	&	26	&	107.1165	$\pm$	0.6174	&	9.3	$\pm$	0.1	&	30	&	$<$27.0			&	52	&		\\
2MASS J03454316+2540233	&	L0	&	76	&	37.4595	$\pm$	0.4188	&	26.7	$\pm$	0.3	&	30	&	$<$36			&	3	&		\\
2MASS J03552337+1133437	&	L3-L6	&	29	&	109.6451	$\pm$	0.7368	&	9.1	$\pm$	0.1	&	30	&	$<$45			&	4	&		\\
2MASS J04390101-2353083	&	L4.5	&	76	&	80.7917	$\pm$	0.5139	&	12.4	$\pm$	0.1	&	30	&	$<$42			&	10	&		\\
2MASS J04455387-3048204	&	L2	&	74	&	61.9685	$\pm$	0.1843	&	16.1	$\pm$	0.0	&	30	&	$<$66			&	10	&		\\
2MASS J05002100+0330501	&	L4	&	29	&	76.2093	$\pm$	0.3565	&	13.1	$\pm$	0.1	&	30	&	$<$51			&	4	&		\\
%2MASS J05233822-1403022	&	L2.5	&	20	&	78.3632	$\pm$	0.1855	&	12.8	$\pm$	0.0	&	30	&	$<$39			&	10	&		\\
2MASS J05395200-0059019	&	L5	&	76	&	78.5318	$\pm$	0.5707	&	12.7	$\pm$	0.1	&	30	&	$<$48			&	4	&		\\
2MASS J06023045+3910592	&	L2	&	2	&	85.614	$\pm$	0.1663	&	11.7	$\pm$	0.0	&	30	&	$<$30			&	10	&		\\
2MASS J06523073+4710348	&	L4.5	&	6	&	109.6858	$\pm$	0.438	&	9.1	$\pm$	0.0	&	30	&	$<$33			&	10	&		\\
2MASS J08251968+2115521	&	L7	&	29	&	92.2242	$\pm$	1.184	&	10.8	$\pm$	0.1	&	30	&	$<$45			&	10	&		\\
2MASS J08283419-1309198	&	L2	&	77	&	85.5438	$\pm$	0.172	&	11.7	$\pm$	0.0	&	30	&	$<$63			&	4	&		\\
2MASS J08300825+4828482	&	L9.5	&	76	&	76.42	$\pm$	3.43	&	13.1	$\pm$	0.6	&	27	&	$<$87			&	4	&		\\
2MASS J08354256-0819237	&	L6.5	&	76	&	138.6098	$\pm$	0.2781	&	7.2	$\pm$	0.0	&	30	&	$<$14.7			&	73	&		\\
2MASS J08575849+5708514	&	L8	&	29	&	71.2343	$\pm$	1.0255	&	14.0	$\pm$	0.2	&	30	&	$<$51			&	4	&		\\
\bottomrule
\end{tabularx}
\begin{tablenotes}[]\footnotesize
\item[$a$]{	2MASS J19165762+0509021 is the widely separated companion to an  M3 primary at a distance of $\sim$400 AU \citep{vanBiesbroeck1944AJ.....51...61V}.  Its wide separation is easily resolvable and precludes magnetic interactions between components, so we include it in the single-object sample.}
\item[$b$]{	\citet{McLean2012ApJ...746...23M} list radio limits for primary and secondary components for 2MASS J23515044-2537367.   However, \citet{BardalezGagliuffi2019ApJ...883..205B} confirm that this object is not a spectroscopic binary.}	
\item[$c$]{ TWA 5B is part of a triple system and is separated from the binary TWA 5Aab by $127^{+4}_{-4}$ AU, which has an M1.5 primary \citep{Konopacky2007AJ....133.2008K, Kohler2013AA...558A..80K}.  Its wide separation is easily resolvable and precludes magnetic interactions between components, so we include TWA 5B in the single-object sample.}
	\end{tablenotes}
	\end{ThreePartTable}  
	\end{table*}
\end{center}

\setlength{\tabcolsep}{0.15in}
\begin{center}
	\begin{table*}\centering 
		\begin{ThreePartTable}
			\contcaption{Radio Observations of Ultracool Dwarfs from the Literature included in Occurrence Rate Calculation}
			\begin{tabularx}{\textwidth}{llccllcll}
			%\tabletypesize{\scriptsize}
			\toprule \vspace{2pt}
Object Name			&
SpT					&	
ref					&	
$\pi$				&	
$d$					&	
ref					&	
$F_{\nu}$			&	
ref					&	
Note	\\[-3pt]
					&	
					&	
					&	
(mas)				&	
(pc)				&	
					&	
($\mu$Jy)			&	
					&	
					\\%		
\midrule  
\midrule  	
2MASS J09083803+5032088	&	L8	&	76	&	95.8202	$\pm$	0.6983	&	10.4	$\pm$	0.1	&	30	&	$<$111			&	4	&		\\
2MASS J09130320+1841501	&	L3	&	26	&	--			&	46.0	$\pm$	4.0	&	26	&	$<$102			&	58	&		\\
2MASS J09211410-2104446	&	L1	&	76	&	79.3128	$\pm$	0.2253	&	12.6	$\pm$	0.0	&	30	&	$<$72			&	4	&		\\
2MASS J09293364+3429527	&	L8	&	40	&	--			&	25.1	$\pm$	5.2	&	66	&	$<$42			&	10	&		\\
2MASS J10101480-0406499	&	L6	&	76	&	59.8	$\pm$	8.1	&	16.7	$\pm$	2.3	&	27	&	$<$47.4			&	73	&		\\
2MASS J10292165+1626526	&	L2.5	&	40	&	52.3361	$\pm$	0.7414	&	19.1	$\pm$	0.3	&	30	&	$<$33			&	58	&		\\
2MASS J10430758+2225236	&	L8.5	&	76	&	--			&	16.4	$\pm$	3.2	&	75	&	9.5	$\pm$	1.0	&	37	&		\\
2MASS J10433508+1213149	&	L9	&	76	&	68.5	$\pm$	10.6	&	14.6	$\pm$	2.3	&	27	&	$<$12.6			&	73	&		\\
2MASS J10452400-0149576 	&	L2	&	53	&	35.3233	$\pm$	0.3796	&	28.3	$\pm$	0.3	&	30	&	$<$57			&	58	&		\\
2MASS J10484281+0111580	&	L1	&	76	&	66.4589	$\pm$	0.2143	&	15.0	$\pm$	0.0	&	30	&	$<$21			&	58	&		\\
2MASS J10584787-1548172	&	L2.5	&	76	&	54.6468	$\pm$	0.5213	&	18.3	$\pm$	0.2	&	30	&	$<$10.5			&	73	&		\\
2MASS J11455714+2317297 	&	L1.5	&	1	&	24.0341	$\pm$	1.2986	&	41.6	$\pm$	2.2	&	30	&	$<$90			&	58	&		\\
2MASS J11593850+0057268	&	L0.4	&	5	&	31.0564	$\pm$	0.3643	&	32.2	$\pm$	0.4	&	30	&	$<$54			&	58	&		\\
2MASS J12035812+0015500	&	L5.0	&	5	&	67.2362	$\pm$	0.5553	&	14.9	$\pm$	0.1	&	30	&	$<$63			&	58	&		\\
2MASS J12195156+3128497	&	L9	&	76	&	--			&	18.1	$\pm$	3.7	&	75	&	$<$14.1			&	73	&		\\
2MASS J12212770+0257198 	&	L0.5	&	76	&	53.9501	$\pm$	0.2528	&	18.5	$\pm$	0.1	&	30	&	$<$78			&	58	&		\\
2MASS J12560183-1257276 B	&	L7.0	&	31	&	78.8	$\pm$	6.4	&	12.7	$\pm$	1.0	&	31	&	$<$9			&	34	&	d	\\
2MASS J13004255+1912354	&	L1.7	&	5	&	71.6755	$\pm$	0.2012	&	14.0	$\pm$	0.0	&	30	&	$<$87			&	10	&		\\
2MASS J13340623+1940351	&	L1	&	75	&	23.277	$\pm$	0.9101	&	43.0	$\pm$	1.7	&	30	&	$<$60			&	58	&		\\
2MASS J14122449+1633115	&	L0.5	&	40	&	31.9983	$\pm$	0.3124	&	31.3	$\pm$	0.3	&	30	&	$<$69			&	58	&		\\
2MASS J14243909+0917104	&	L4	&	44	&	31.7	$\pm$	2.5	&	31.5	$\pm$	2.5	&	27	&	$<$97			&	8	&		\\
2MASS J14252798-3650229	&	L4	&	29	&	84.5181	$\pm$	0.3435	&	11.8	$\pm$	0.0	&	30	&	$<$12.9			&	73	&		\\
2MASS J14380829+6408363	&	L0.2	&	5	&	58.7	$\pm$	0.143	&	17.0	$\pm$	0.0	&	30	&	$<$105			&	58	&		\\
2MASS J14392836+1929149	&	L1	&	76	&	69.6	$\pm$	0.5	&	14.4	$\pm$	0.1	&	27	&	$<$78			&	58	&		\\
2MASS J15065441+1321060	&	L3	&	76	&	85.581	$\pm$	0.2883	&	11.7	$\pm$	0.0	&	30	&	$<$78			&	58	&		\\
2MASS J15074769-1627386	&	L5	&	40	&	135.2332	$\pm$	0.3274	&	7.4	$\pm$	0.0	&	30	&	$<$36.6			&	52	&		\\
2MASS J15150083+4847416	&	L6.5	&	20	&	102.59	$\pm$	0.63	&	9.8	$\pm$	0.1	&	22	&	$<$27			&	10	&		\\
2MASS J15232263+3014562 	&	L8	&	40	&	53.7	$\pm$	1.24	&	18.6	$\pm$	0.4	&	27	&	$<$45			&	10	&		\\
2MASS J15551573-0956055	&	L1.6	&	5	&	73.6519	$\pm$	0.187	&	13.6	$\pm$	0.0	&	30	&	$<$84			&	58	&		\\
2MASS J16154255+4953211	&	L4	&	21	&	32	$\pm$	1	&	31.3	$\pm$	1.0	&	50	&	$<$9.0			&	73	&		\\
2MASS J16154416+3559005	&	L3	&	40	&	50.0611	$\pm$	0.3713	&	20.0	$\pm$	0.1	&	30	&	$<$75			&	58	&		\\
2MASS J16322911+1904407	&	L8	&	76	&	65.6	$\pm$	2.1	&	15.2	$\pm$	0.5	&	27	&	$<$10.8			&	73	&		\\
2MASS J16452211-1319516	&	L1	&	76	&	88.822	$\pm$	0.1444	&	11.3	$\pm$	0.0	&	30	&	$<$108			&	58	&		\\
2MASS J17054834-0516462	&	L1	&	76	&	52.6734	$\pm$	0.3516	&	19.0	$\pm$	0.1	&	30	&	$<$45			&	10	&		\\
2MASS J17210390+3344160	&	L5.3	&	5	&	61.3203	$\pm$	0.205	&	16.3	$\pm$	0.1	&	30	&	$<$11.4			&	73	&		\\
2MASS J17312974+2721233	&	L0	&	72	&	83.7364	$\pm$	0.1182	&	11.9	$\pm$	0.0	&	30	&	$<$57			&	4	&		\\
2MASS J17502484-0016151	&	L5	&	42	&	108.2676	$\pm$	0.2552	&	9.2	$\pm$	0.0	&	30	&	185	$\pm$	18	&	73	&		\\
2MASS J18071593+5015316	&	L1	&	76	&	68.3317	$\pm$	0.128	&	14.6	$\pm$	0.0	&	30	&	$<$39			&	10	&		\\
2MASS J18212815+1414010	&	L5	&	76	&	106.874	$\pm$	0.2518	&	9.4	$\pm$	0.0	&	30	&	$<$12.9			&	73	&		\\
2MASS J20575409-0252302	& L1	&	76	&	64.471	$\pm$	0.2365	&	15.5	$\pm$	0.1	&	30	&	$<$36			&	10	&		\\
2MASS J21041491-1037369	&	L2	&	76	&	58.1658	$\pm$	0.4051	&	17.2	$\pm$	0.1	&	30	&	$<$24			&	10	&		\\
2MASS J21481628+4003593	&	L7	&	76	&	123.2758	$\pm$	0.4557	&	8.1	$\pm$	0.0	&	30	&	$<$9.6			&	73	&		\\
2MASS J22244381-0158521	&	L4.5	&	40	&	86.6169	$\pm$	0.708	&	11.5	$\pm$	0.1	&	30	&	$<$33			&	10	&		\\
2MASSI J0030300-145033	&	L7	&	40	&	37.42	$\pm$	4.5	&	26.7	$\pm$	3.2	&	27	&	$<$17.4			&	73	&		\\
2MASSI J0103320+193536	&	L6	&	40	&	46.9	$\pm$	7.6	&	21.3	$\pm$	3.5	&	27	&	$<$11.4			&	73	&		\\
LP 944-20	&	L0	&	2	&	155.759	$\pm$	0.0991	&	6.4	$\pm$	0.0	&	30	&	137.6	$\pm$	10.5	&	52	&		\\
WISEP J060738.65+242953.4	&	L9	&	19	&	136.9449	$\pm$	0.6553	&	7.3	$\pm$	0.0	&	30	&	15.6	$\pm$	6.3	&	33	&	e	\\
\bottomrule
\end{tabularx}
\begin{tablenotes}[]\footnotesize
\item[$d$]{ 2MASS J12560183-1257276B is a low-mass companion to the equal-mass binary 2MASS J12560183-1257276Aab \citep{Stone2016ApJ...818L..12S}, implying that this system may be as far as $17.1 \pm 2.5$ pc rather than the $12.7\pm1.0$ pc reported by \citet{Gauza2015ApJ...804...96G}.  The wide $\geq 102 \pm 9$ AU separation between the primary and secondary components \citep{Gauza2015ApJ...804...96G} is easily resolved in all observations and precludes magnetic interactions between components, so we include it in the single-object sample. }
\item[$e$]{\cite{Gizis2016AJ....152..123G} report a 3.5$\sigma$ detection for WISEP J060738.65+242953.4.  However, this significance is inconsistent with their reported total flux densities of $15.6 \pm 6.3$ $\mu$Jy at a mean frequency of 6.05 GHz and $16.5 \pm 7.6$ and $15.6 \pm 10.7$ $\mu$Jy at 5.0 and 7.1 GHz, respectively.  The reported significance may use the image rms noise rather than the uncertainties returned from the fit.  They report an rms noise of 4.5 $\mu$Jy for the Stokes V image, but they do report a Stokes I rms noise. Both their reported significance and the implied significance of their reported flux densities do not meet our 4$\rms$ threshold for detection, so we classify this object as a non-detection.}		
\end{tablenotes}
	\end{ThreePartTable}  
	\end{table*}
\end{center}

\setlength{\tabcolsep}{0.15in}
\begin{center}
	\begin{table*}\centering 
		\begin{ThreePartTable}
			\contcaption{Radio Observations of Ultracool Dwarfs from the Literature included in Occurrence Rate Calculation }
			\begin{tabularx}{\textwidth}{llccllcll}
			%\tabletypesize{\scriptsize}
			\toprule \vspace{2pt}
Object Name			&
SpT					&	
ref					&	
$\pi$				&	
$d$					&	
ref					&	
$F_{\nu}$			&	
ref					&	
Note	\\[-3pt]
					&	
					&	
					&	
(mas)				&	
(pc)				&	
					&	
($\mu$Jy)			&	
					&	
					\\%		
\midrule  
\midrule  
T/Y dwarfs	&		&	 		&		&	&	&	& & 	\\
\midrule	
2MASS J01514155+1244300	&	T0.0	&	54	&	46.73	$\pm$	3.37	&	21.4	$\pm$	1.5	&	27	&	$<$51			&	10	&		\\
2MASS J02074284+0000564	&	T4.5	&	49	&	34.85	$\pm$	9.87	&	28.7	$\pm$	8.1	&	27	&	$<$39			&	10	&		\\
2MASS J04151954-0935066	&	T8.0	&	13	&	174.34	$\pm$	2.76	&	5.7	$\pm$	0.1	&	27	&	$<$45			&	10	&		\\
2MASS J05591914-1404488	&	T4.5	&	49	&	94.4208	$\pm$	1.2021	&	10.6	$\pm$	0.1	&	30	&	$<$27			&	10	&		\\
2MASS J07271824+1710012	&	T7.0	&	13	&	110.14	$\pm$	2.34	&	9.1	$\pm$	0.2	&	27	&	$<$54			&	4	&		\\
2MASS J09373487+2931409	&	T6.0	&	13	&	163.39	$\pm$	1.76	&	6.1	$\pm$	0.1	&	27	&	$<$66			&	4	&		\\
2MASS J09393548-2448279	&	T8	&	26	&	196	$\pm$	10.4	&	5.1	$\pm$	0.3	&	27	&	$<$54			&	4	&		\\
2MASS J10475385+2124234	&	T6.5	&	26	&	94.73	$\pm$	3.81	&	10.6	$\pm$	0.4	&	27	&	21.9	$\pm$	1.3	&	37	&		\\
2MASS J11145133-2618235	&	T7.5	&	26	&	176.8	$\pm$	7	&	5.7	$\pm$	0.0	&	27	&	$<$60			&	4	&		\\
2MASS J12171110-0311131	&	T7.5	&	59	&	90.8	$\pm$	2.2	&	11.0	$\pm$	0.3	&	27	&	$<$111			&	10	&		\\
2MASS J12373919+6526148	&	T6.5	&	12	&	96.07	$\pm$	4.78	&	10.4	$\pm$	0.5	&	27	&	35	$\pm$	1	&	37	&		\\
2MASS J12545393-0122474	&	T2	&	15	&	74.1838	$\pm$	2.3057	&	13.5	$\pm$	0.4	&	30	&	4.0	$\pm$	3.3	&	36	&		\\
2MASS J13464634-0031501	&	T6.0	&	16	&	68.3	$\pm$	2.3	&	14.6	$\pm$	0.5	&	27	&	$<$68			&	8	&		\\	
2MASS J14571496-2121477	&	T8	&	12	&	169.3	$\pm$	1.7	&	5.9	$\pm$	0.1	&	27	&	$<$54			&	4	&		\\
2MASS J15031961+2525196	&	T5.5	&	14	&	154.9208	$\pm$	1.1025	&	6.5	$\pm$	0.0	&	30	&	$<$90			&	4	&		\\
2MASS J16241436+0029158	&	T6	&	13	&	90.9	$\pm$	1.2	&	11.0	$\pm$	0.1	&	27	&	$<$81			&	3	&		\\
2MASS J21543318+5942187	&	T6	&	26	&	--			&	10.0	$\pm$	1.0	&	26	&	$<$60			&	4	&		\\
Gl 229B	&	T6.5	&	12	&	173.19	$\pm$	1.12	&	5.8	$\pm$	0.0	&	27	&	$<$70.2			&	43	&		\\
SIMP J013656.5+093347.3	&	T2.5	&	26	&	163.6824	$\pm$	0.7223	&	6.1	$\pm$	0.0	&	30	&	85.7	$\pm$	1.3	&	37	&		\\
WISEP J112254.73+255021.5	&	T6	&	41	&	--			&	15.0	--	20.0	&	41	&	64	$\pm$	3	&	83	&		\\
WISE J085510.83-071442.5	&	Y2	&	46	&	448	$\pm$	33	&	2.2	$\pm$	0.2	&	79	&	$<$21.6			&	38	&		\\
WISE J140518.39+553421.3	&	Y0.5	&	45	&	133	$\pm$	81	&	7.5	$\pm$	4.6	&	55	&	$<$6.6			&	38	&		\\
WISEP J173835.53+273258.9	&	Y0	&	41	&	66	$\pm$	50	&	15.2	$\pm$	11.5	&	55	&	$<$9.6			&	38	&		\\
\bottomrule
\end{tabularx}
\begin{tablenotes}[]\footnotesize
\item[] \textit{References} ---					
(1)	\citet{Allen2007AJ....133..971A}	;
(2)	\citet{Allers2013ApJ...772...79A}	;
(3)	\citet{Antonova2008AA...487..317A}	;
(4)	\citet{Antonova2013AA...549A.131A}	;
(5)	\citet{BardalezGagliuffi2014ApJ...794..143B}	;
(6)	\citet{BardalezGagliuffi2019ApJ...883..205B}	;
(7)	\citet{Bartlett2017AJ....154..151B}	;
(8)	\citet{Berger2002ApJ...572..503B}	;
(9)	\citet{Berger2005ApJ...627..960B}	;
(10)	\citet{Berger2006ApJ...648..629B}	;
(11)	\citet{Berger2008ApJ...676.1307B}	;
(12)	\citet{Burgasser2000ApJ...531L..57B}	;
(13)	\citet{Burgasser2002ApJ...564..421B}	;
(14)	\citet{Burgasser2003AJ....125..850B}	;
(15)	\citet{Burgasser2003ApJ...594..510B}	;
(16)	\citet{Burgasser2010ApJ...710.1142B}	;
(17)	\citet{BurgasserPutnam2005ApJ...626..486B}	;
(18)	\citet{Caballero2008MNRAS.383..750C}	;
(19)	\citet{Castro2013ApJ...776..126C}	;
(20)	\citet{Cruz2003AJ....126.2421C}	;
(21)	\citet{Cruz2018AJ....155...34C}	;
(22)	\citet{Dahn2017AJ....154..147D}	;
(23)	\citet{Davison2015AJ....149..106D}	;
(24)	\citet{Deshpande2012AJ....144...99D}	;
(25)	\citet{Dieterich2014AJ....147...94D}	;
(26)	\citet{Faherty2009AJ....137....1F}	;
(27)	\citet{Faherty2012ApJ...752...56F}	;
(28)	\citet{Faherty2016ApJS..225...10F}	;
(29)	\citet{Gagne2015ApJS..219...33G}	;
(30)	\citet{Gaia2018yCat.1345....0G}	;
(31)	\citet{Gauza2015ApJ...804...96G}	;
(32)	\citet{Gizis2002ApJ...575..484G}	;
(33)	\citet{Gizis2016AJ....152..123G}	;
(34)	\citet{Guirado2018AA...610A..23G}	;
(35)	\citet{Henry2004AJ....128.2460H}	;
(36)	\citet{Kao2016ApJ...818...24K}	;
(37)	\citet{Kao2018ApJS..237...25K}	;
(38)	\citet{Kao2019MNRAS.487.1994K}	;
(39)	\citet{Kirkpatrick1991ApJS...77..417K}	;
(40)	\citet{Kirkpatrick2000AJ....120..447K}	;
(41)	\citet{Kirkpatrick2011ApJS..197...19K}	;
(42)	\citet{Koen2017MNRAS.465.4723K}	;
(43)	\citet{Krishnamurthi1999AJ....118.1369K}	;
(44)	\citet{Leggett2001ApJ...548..908L}	;
(45)	\citet{Leggett2013ApJ...763..130L}	;
(46)	\citet{Leggett2015ApJ...799...37L}	;
(47)	\citet{Lepine2002ApJ...581L..47L}	;
(48)	\citet{Lepine2003AJ....125.1598L}	;
(49)	\citet{Liu2006ApJ...647.1393L}	;
(50)	\citet{Liu2016ApJ...833...96L}	;
(51)	\citet{Lodieu2005AA...440.1061L}	;
(52)	\citet{Lynch2016MNRAS.457.1224L}	;
(53)	\citet{Marocco2013AJ....146..161M}	;
(54)	\citet{Marocco2015MNRAS.449.3651M}	;
(55)	\citet{Marsh2013ApJ...762..119M}	;
(56)	\citet{Martin2001ApJ...561L.195M}	;
(57)	\citet{McCaughrean2002AA...390L..27M}	;
(58)	\citet{McLean2012ApJ...746...23M}	;
(59)	\citet{Metchev2008ApJ...676.1281M}	;
(60)	\citet{Metodieva2015MNRAS.446.3878M}	;
(61)	\citet{Monin2010AA...515A..91M}	;
(62)	\citet{Newton2014AJ....147...20N}	;
(63)	\citet{OstenJayawardhana2006ApJ...644L..67O}	;
(64)	\citet{Phan-Bao2007ApJ...658..553P}	;
(65)	\citet{Pineda2018ApJ...866..155P}	;
(66)	\citet{Radigan2013ApJ...778...36R }	;
(67)	\citet{Ravi2011ApJ...735L...2R}	;
(68)	\citet{Reid2000AJ....119..369R}	;
(69)	\citet{Reid2003AJ....126.3007R}	;
(70)	\citet{Reid2004AJ....128..463R}	;
(71)	\citet{Reid2005PASP..117..676R}	;
(72)	\citet{Reid2008AJ....136.1290R}	;
(73)	\citet{Richey-Yowell2020}	;
(74)	\citet{Schmidt2007AJ....133.2258S}	;
(75)	\citet{Schmidt2010AJ....139.1808S}	;
(76)	\citet{Schneider2014AJ....147...34S}	;
(77)	\citet{Scholz2002MNRAS.336L..49S}	;
(78)	\citet{Shkolnik2009ApJ...699..649S}	;
(79)	\citet{Tinney2014ApJ...796...39T}	;
(80)	\citet{West2011AJ....141...97W}	;
(81)	\citet{West2015ApJ...812....3W}	;
(82)	\citet{Williams2014ApJ...785....9W}	;
(83)	\citet{Williams2017ApJ...834..117W}	;
(84)	\citet{ZapateroOsorio2000Sci...290..103Z}		
\end{tablenotes}
	\end{ThreePartTable}  
	\end{table*}
\end{center}

\setlength{\tabcolsep}{0.04in}
\begin{table*}\centering 
		\begin{ThreePartTable}
			\caption{Summary of mean simulation results: Literature distributions, KM  \label{table:simulation_literature_kaplanMeier}} 
			\begin{tabularx}{\textwidth}{rrr@{\hspace{0.1pt}}lr@{\hspace{0.1pt}}l@{\hspace{0.16in}}llr@{\hspace{0.1pt}}lr@{\hspace{0.1pt}}l@{\hspace{0.16in}}ll@{\hspace{0.16in}}llc}
%\tabletypesize{\scriptsize}
   		&
		&
   		&
 		&
 		&
 		&
   		&
 		&
   		&
 		&
   		&
 		&
\multicolumn{2}{l}{Recovery rate for $\theta_{\mathrm{emit}}$ }  &
\multicolumn{2}{l}{Recovery rate for $\theta_{\mathrm{true}}$ } \\ %[-2pt] 
$\bar{\theta}_{\mathrm{true}}$  							&
$\bar{\theta}_{\mathrm{emit}}$		    					&
\multicolumn{2}{c}{$\bar{\theta}_{\mathrm{calc}}$}   		&
\multicolumn{2}{c}{$\bar{\theta}_{\mathrm{detect}}$} 		&
$\abs{\bar{\epsilon}_{\mathrm{calc}}   }$   				&
$\abs{\bar{\epsilon}_{\mathrm{detect}} }$ 					&
\multicolumn{2}{c}{$\bar{f}_{\mathrm{calc}}$}   			&
\multicolumn{2}{c}{$\bar{f}_{\mathrm{detect}}$} 			&
$\theta_{\mathrm{calc}}$ CI 								&
$\theta_{\mathrm{detect}}$ CI 								&
$\theta_{\mathrm{calc}}$ CI 								&
$\theta_{\mathrm{detect}}$ CI \\%[-2pt] 
(\%)  						& 
(\%)  						& 
\multicolumn{2}{c}{(\%)}  	& 
\multicolumn{2}{c}{(\%)}  	& 
(\%)	  					& 
(\%)  						& 
	  						& 
  							& 
	  						& 
	  						& 
(\%)  						& 
(\%)  						& 
(\%)  						& 
(\%)  						\\ %[-2pt] 
   		&
		&
   		&
   		&
   		&
   		&
 		&
 		&
 		&
   		&
 		&
 		&
68.3, 95.5, 99.7	&
68.3, 95.5, 99.7	&
68.3, 95.5, 99.7	&
68.3, 95.5, 99.7  	\\  [-2pt] 
\midrule  
\multicolumn{2}{l}{ N = 10}	& &  & &	&	&	&	&	& & & &	  &   &  \\ [-2pt] 
\midrule 
 10.0  &  9.8 	& 9.5 & $_{-5.3}^{+27.3}$ 	& 3.3 & $_{-3.0}^{+3.0}$ &    5 & 67   &  1.06 & $_{-0.59}^{+3.06}$   & 	3.05 & $_{-2.79}^{+2.41}$   &   88.8, 98.2, 100.0    &   58.6, 61.8, 62.3  &   83.6, 95.1, 99.4   &    26.6, 26.9, 26.9  \\[3pt] 
 20.0  &  20.1 	& 19.6 & $_{-10.0}^{+26.9}$ 	& 6.8 & $_{-5.5}^{+5.5}$ &    2 & 66   &  1.02 & $_{-0.52}^{+1.40}$   & 	2.94 & $_{-2.38}^{+2.04}$   &   80.8, 98.6, 100.0    &   41.8, 52.7, 58.0  &   80.2, 93.8, 99.7   &    47.1, 47.2, 47.2  \\[3pt] 
 30.0  &  29.8 	& 31.4 & $_{-14.2}^{+25.7}$ 	& 10.8 & $_{-7.5}^{+7.5}$ &    5 & 64   &  0.96 & $_{-0.43}^{+0.78}$   & 	2.78 & $_{-1.93}^{+1.89}$   &   68.4, 97.0, 100.0    &   36.6, 55.5, 64.4  &   58.1, 94.7, 99.6   &    18.6, 33.0, 65.0  \\[3pt] 
 40.0  &  40.3 	& 39.5 & $_{-16.5}^{+24.7}$ 	& 13.8 & $_{-8.7}^{+8.7}$ &    1 & 65   &  1.01 & $_{-0.42}^{+0.63}$   & 	2.89 & $_{-1.82}^{+1.79}$   &   64.7, 98.4, 100.0    &   24.0, 47.0, 63.7  &   56.4, 96.6, 99.9   &    26.8, 42.8, 72.7  \\[3pt] 
 50.0  &  50.2 	& 50.2 & $_{-19.4}^{+22.1}$ 	& 17.5 & $_{-10.1}^{+10.1}$ &    0 & 65   &  1.00 & $_{-0.38}^{+0.44}$   & 	2.85 & $_{-1.65}^{+1.67}$   &   65.4, 97.6, 100.0    &   17.2, 40.5, 62.9  &   53.6, 97.0, 100.0   &    19.1, 48.2, 64.8  \\[3pt] 
 60.0  &  59.8 	& 58.5 & $_{-21.2}^{+19.8}$ 	& 20.9 & $_{-10.9}^{+10.9}$ &    3 & 65   &  1.03 & $_{-0.37}^{+0.35}$   & 	2.87 & $_{-1.50}^{+1.55}$   &   64.6, 97.5, 100.0    &   11.0, 33.0, 55.6  &   53.2, 97.1, 100.0   &    10.1, 28.6, 56.9  \\[3pt] 
 70.0  &  70.0 	& 67.4 & $_{-22.7}^{+16.6}$ 	& 24.2 & $_{-11.5}^{+11.5}$ &    4 & 65   &  1.04 & $_{-0.35}^{+0.25}$   & 	2.89 & $_{-1.38}^{+1.46}$   &   67.4, 97.0, 100.0    &   5.3, 24.6, 46.3  &   60.6, 95.1, 100.0   &    0.9, 12.4, 30.4  \\[3pt] 
 80.0  &  80.0 	& 75.9 & $_{-24.1}^{+13.7}$ 	& 28.0 & $_{-12.2}^{+12.2}$ &    5 & 65   &  1.05 & $_{-0.34}^{+0.19}$   & 	2.85 & $_{-1.24}^{+1.39}$   &   73.5, 96.4, 99.6    &   2.8, 15.2, 38.7  &   73.6, 96.2, 99.9   &    2.0, 19.2, 39.3  \\[3pt] 
 90.0  &  90.2 	& 81.5 & $_{-24.2}^{+10.9}$ 	& 31.5 & $_{-12.7}^{+12.7}$ &    9 & 65   &  1.10 & $_{-0.33}^{+0.15}$   & 	2.86 & $_{-1.16}^{+1.25}$   &   74.9, 94.0, 98.5    &   1.0, 8.6, 24.0  &   76.5, 93.2, 99.9   &    0.9, 9.5, 26.0  \\[3pt] 
 100.0  &  100.0 	& 88.3 & $_{-24.8}^{+8.0}$ 	& 34.8 & $_{-13.2}^{+13.2}$ &    12 & 65   &  1.13 & $_{-0.32}^{+0.10}$   & 	2.87 & $_{-1.09}^{+1.20}$   &   76.9, 93.1, 97.7    &   0.1, 1.3, 10.4  &   76.9, 93.1, 97.7   &    0.1, 1.3, 10.4  \\ %[3pt] 
\midrule  
\multicolumn{2}{l}{ N = 20} & &  & &	&	&	&	&	& & & &	  &   &  \\ [-2pt] 
\midrule 
 10.0  &  10.0 	& 10.1 & $_{-5.9}^{+17.0}$ 	& 3.5 & $_{-2.7}^{+2.7}$ &    1 & 65   &  0.99 & $_{-0.58}^{+1.65}$   & 	2.87 & $_{-2.24}^{+2.13}$   &   85.5, 99.4, 100.0    &   44.6, 58.2, 62.3  &   85.4, 97.3, 99.3   &    49.6, 49.6, 49.6  \\[3pt] 
 20.0  &  19.8 	& 20.2 & $_{-10.1}^{+18.5}$ 	& 7.0 & $_{-4.4}^{+4.4}$ &    1 & 65   &  0.99 & $_{-0.49}^{+0.90}$   & 	2.86 & $_{-1.82}^{+1.81}$   &   74.7, 98.2, 100.0    &   26.2, 51.5, 67.3  &   63.2, 95.4, 99.1   &    21.0, 45.0, 74.3  \\[3pt] 
 30.0  &  30.5 	& 31.2 & $_{-13.4}^{+19.2}$ 	& 10.7 & $_{-5.9}^{+5.9}$ &    4 & 64   &  0.96 & $_{-0.41}^{+0.59}$   & 	2.79 & $_{-1.53}^{+1.57}$   &   73.7, 97.6, 100.0    &   12.3, 39.2, 65.1  &   64.4, 93.9, 99.6   &    4.4, 31.2, 52.4  \\[3pt] 
 40.0  &  40.0 	& 41.0 & $_{-15.6}^{+19.3}$ 	& 14.0 & $_{-7.0}^{+7.0}$ &    3 & 65   &  0.98 & $_{-0.37}^{+0.46}$   & 	2.86 & $_{-1.44}^{+1.46}$   &   69.5, 96.8, 100.0    &   5.3, 27.3, 51.6  &   62.8, 92.6, 99.7   &    4.1, 25.2, 52.2  \\[3pt] 
 50.0  &  50.5 	& 50.9 & $_{-16.8}^{+18.7}$ 	& 17.7 & $_{-8.1}^{+8.1}$ &    2 & 65   &  0.98 & $_{-0.32}^{+0.36}$   & 	2.83 & $_{-1.30}^{+1.25}$   &   65.5, 97.3, 99.9    &   2.3, 15.2, 35.9  &   59.0, 92.5, 99.8   &    3.3, 14.9, 39.8  \\[3pt] 
 60.0  &  60.1 	& 59.7 & $_{-17.8}^{+17.4}$ 	& 20.9 & $_{-9.2}^{+9.2}$ &    0 & 65   &  1.00 & $_{-0.30}^{+0.29}$   & 	2.87 & $_{-1.26}^{+1.17}$   &   66.7, 97.9, 99.8    &   0.2, 6.0, 22.3  &   58.4, 94.9, 99.9   &    0.8, 7.0, 21.8  \\[3pt] 
 70.0  &  69.8 	& 68.7 & $_{-18.7}^{+15.2}$ 	& 23.8 & $_{-9.9}^{+9.9}$ &    2 & 66   &  1.02 & $_{-0.28}^{+0.23}$   & 	2.94 & $_{-1.22}^{+1.08}$   &   64.8, 97.7, 99.9    &   0.2, 2.6, 11.2  &   59.1, 96.4, 100.0   &    0.0, 0.3, 3.7  \\[3pt] 
 80.0  &  80.0 	& 78.1 & $_{-19.0}^{+12.2}$ 	& 27.7 & $_{-10.5}^{+10.5}$ &    2 & 65   &  1.02 & $_{-0.25}^{+0.16}$   & 	2.88 & $_{-1.09}^{+0.98}$   &   70.3, 97.0, 99.8    &   0.0, 0.3, 5.3  &   66.2, 97.3, 99.8   &    0.0, 0.8, 4.3  \\[3pt] 
 90.0  &  90.3 	& 86.2 & $_{-19.1}^{+8.7}$ 	& 31.3 & $_{-10.9}^{+10.9}$ &    4 & 65   &  1.04 & $_{-0.23}^{+0.10}$   & 	2.88 & $_{-1.00}^{+0.93}$   &   78.9, 97.0, 99.6    &   0.0, 0.1, 1.7  &   81.5, 95.8, 99.3   &    0.0, 0.2, 1.2  \\[3pt] 
 100.0  &  100.0 	& 93.0 & $_{-18.9}^{+5.1}$ 	& 34.8 & $_{-11.3}^{+11.3}$ &    7 & 65   &  1.08 & $_{-0.22}^{+0.06}$   & 	2.87 & $_{-0.94}^{+0.87}$   &   81.6, 95.8, 99.2    &   0.0, 0.0, 0.0  &   81.6, 95.8, 99.2   &    0.0, 0.0, 0.0  \\ %[3pt] 
\midrule 
\multicolumn{2}{l}{ N = 50}	& &  & &	&	&	&	&	& & & &	  &   &  \\ [-2pt] 
\midrule 
 10.0  &  10.3 	& 10.3 & $_{-5.2}^{+9.1}$ 	& 3.5 & $_{-2.2}^{+2.2}$ &    3 & 65   &  0.97 & $_{-0.49}^{+0.86}$   & 	2.82 & $_{-1.79}^{+1.75}$   &   82.2, 99.7, 100.0    &   16.6, 47.7, 69.2  &   69.8, 97.8, 99.9   &    25.6, 51.8, 74.9  \\[3pt] 
 20.0  &  19.9 	& 20.0 & $_{-8.0}^{+11.0}$ 	& 6.9 & $_{-3.4}^{+3.4}$ &    0 & 65   &  1.00 & $_{-0.40}^{+0.55}$   & 	2.88 & $_{-1.42}^{+1.39}$   &   78.0, 98.3, 100.0    &   3.0, 20.1, 47.5  &   67.9, 94.6, 99.5   &    4.9, 23.1, 44.5  \\[3pt] 
 30.0  &  30.1 	& 30.3 & $_{-10.1}^{+12.5}$ 	& 10.4 & $_{-4.2}^{+4.2}$ &    1 & 65   &  0.99 & $_{-0.33}^{+0.41}$   & 	2.87 & $_{-1.15}^{+1.13}$   &   74.6, 97.8, 100.0    &   0.5, 5.7, 20.0  &   64.9, 94.8, 99.6   &    0.0, 2.6, 14.0  \\[3pt] 
 40.0  &  40.2 	& 40.2 & $_{-11.5}^{+13.3}$ 	& 13.9 & $_{-4.7}^{+4.7}$ &    1 & 65   &  0.99 & $_{-0.29}^{+0.33}$   & 	2.89 & $_{-0.97}^{+1.00}$   &   72.0, 96.7, 99.9    &   0.0, 0.7, 6.5  &   65.5, 94.3, 99.6   &    0.1, 0.9, 6.2  \\[3pt] 
 50.0  &  49.8 	& 50.1 & $_{-12.5}^{+13.7}$ 	& 17.2 & $_{-5.3}^{+5.3}$ &    0 & 66   &  1.00 & $_{-0.25}^{+0.27}$   & 	2.91 & $_{-0.89}^{+0.88}$   &   72.3, 96.7, 99.9    &   0.0, 0.1, 1.6  &   64.6, 92.3, 99.5   &    0.0, 0.2, 2.0  \\[3pt] 
 60.0  &  60.1 	& 60.7 & $_{-13.1}^{+13.4}$ 	& 21.1 & $_{-5.8}^{+5.8}$ &    1 & 65   &  0.99 & $_{-0.21}^{+0.22}$   & 	2.85 & $_{-0.79}^{+0.76}$   &   72.0, 97.3, 100.0    &   0.0, 0.0, 0.1  &   65.0, 93.7, 99.8   &    0.0, 0.0, 0.3  \\[3pt] 
 70.0  &  70.2 	& 69.6 & $_{-13.2}^{+12.5}$ 	& 23.7 & $_{-6.1}^{+6.1}$ &    1 & 66   &  1.01 & $_{-0.19}^{+0.18}$   & 	2.95 & $_{-0.76}^{+0.73}$   &   67.8, 96.7, 99.6    &   0.0, 0.0, 0.0  &   60.7, 95.2, 99.7   &    0.0, 0.0, 0.0  \\[3pt] 
 80.0  &  79.9 	& 80.5 & $_{-13.0}^{+9.8}$ 	& 28.0 & $_{-6.4}^{+6.4}$ &    1 & 65   &  0.99 & $_{-0.16}^{+0.12}$   & 	2.86 & $_{-0.65}^{+0.64}$   &   66.8, 97.6, 99.7    &   0.0, 0.0, 0.0  &   59.9, 96.6, 99.9   &    0.0, 0.0, 0.0  \\[3pt] 
 90.0  &  89.9 	& 88.5 & $_{-12.6}^{+7.1}$ 	& 30.8 & $_{-6.6}^{+6.6}$ &    2 & 66   &  1.02 & $_{-0.15}^{+0.08}$   & 	2.92 & $_{-0.62}^{+0.61}$   &   76.7, 98.5, 99.9    &   0.0, 0.0, 0.0  &   73.7, 97.6, 99.9   &    0.0, 0.0, 0.0  \\[3pt] 
 100.0  &  100.0 	& 96.3 & $_{-11.7}^{+2.9}$ 	& 34.9 & $_{-6.7}^{+6.7}$ &    4 & 65   &  1.04 & $_{-0.13}^{+0.03}$   & 	2.87 & $_{-0.55}^{+0.56}$   &   85.5, 97.2, 99.9    &   0.0, 0.0, 0.0  &   85.5, 97.2, 99.9   &    0.0, 0.0, 0.0  \\%[3pt] 
\midrule  
\multicolumn{2}{l}{ N = 100} & &  & &	&	&	&	&	& & & &	  &   &  \\ [-2pt] 
\midrule 
 10.0  &  10.0 	& 10.1 & $_{-4.2}^{+6.0}$ 	& 3.5 & $_{-1.7}^{+1.7}$ &    1 & 65   &  0.99 & $_{-0.41}^{+0.58}$   & 	2.86 & $_{-1.41}^{+1.40}$   &   77.7, 98.4, 100.0    &   2.6, 23.3, 49.8  &   69.3, 93.8, 99.8   &    6.0, 23.9, 46.2  \\[3pt] 
 20.0  &  20.2 	& 20.0 & $_{-6.2}^{+7.7}$ 	& 6.9 & $_{-2.5}^{+2.5}$ &    0 & 66   &  1.00 & $_{-0.31}^{+0.38}$   & 	2.90 & $_{-1.04}^{+1.04}$   &   76.1, 98.1, 100.0    &   0.0, 1.1, 8.0  &   66.9, 95.0, 99.6   &    0.3, 1.6, 12.1  \\[3pt] 
 30.0  &  29.9 	& 30.0 & $_{-7.6}^{+8.8}$ 	& 10.4 & $_{-3.0}^{+3.0}$ &    0 & 65   &  1.00 & $_{-0.25}^{+0.29}$   & 	2.89 & $_{-0.84}^{+0.83}$   &   75.7, 98.5, 100.0    &   0.0, 0.2, 0.7  &   67.1, 96.2, 99.9   &    0.0, 0.0, 0.3  \\[3pt] 
 40.0  &  39.6 	& 39.6 & $_{-8.6}^{+9.5}$ 	& 13.6 & $_{-3.4}^{+3.4}$ &    1 & 66   &  1.01 & $_{-0.22}^{+0.24}$   & 	2.94 & $_{-0.73}^{+0.73}$   &   75.2, 98.2, 100.0    &   0.0, 0.0, 0.0  &   66.8, 95.5, 99.6   &    0.0, 0.0, 0.0  \\[3pt] 
 50.0  &  49.9 	& 50.3 & $_{-9.3}^{+9.9}$ 	& 17.3 & $_{-3.8}^{+3.8}$ &    1 & 65   &  0.99 & $_{-0.18}^{+0.20}$   & 	2.88 & $_{-0.63}^{+0.62}$   &   74.1, 97.4, 99.6    &   0.0, 0.0, 0.0  &   67.0, 94.6, 99.2   &    0.0, 0.0, 0.0  \\[3pt] 
 60.0  &  59.9 	& 59.7 & $_{-9.7}^{+10.0}$ 	& 20.6 & $_{-4.1}^{+4.1}$ &    0 & 66   &  1.00 & $_{-0.16}^{+0.17}$   & 	2.92 & $_{-0.57}^{+0.57}$   &   72.0, 95.7, 100.0    &   0.0, 0.0, 0.0  &   66.0, 92.2, 99.7   &    0.0, 0.0, 0.0  \\[3pt] 
 70.0  &  70.1 	& 70.5 & $_{-9.9}^{+9.7}$ 	& 24.2 & $_{-4.2}^{+4.2}$ &    1 & 65   &  0.99 & $_{-0.14}^{+0.14}$   & 	2.90 & $_{-0.51}^{+0.51}$   &   70.7, 96.7, 99.9    &   0.0, 0.0, 0.0  &   64.6, 92.6, 99.6   &    0.0, 0.0, 0.0  \\[3pt] 
 80.0  &  79.9 	& 80.2 & $_{-9.6}^{+8.4}$ 	& 27.5 & $_{-4.4}^{+4.4}$ &    0 & 66   &  1.00 & $_{-0.12}^{+0.10}$   & 	2.91 & $_{-0.46}^{+0.47}$   &   67.4, 97.7, 100.0    &   0.0, 0.0, 0.0  &   62.4, 95.3, 99.9   &    0.0, 0.0, 0.0  \\[3pt] 
 90.0  &  90.0 	& 89.9 & $_{-8.9}^{+5.7}$ 	& 31.1 & $_{-4.5}^{+4.5}$ &    0 & 65   &  1.00 & $_{-0.10}^{+0.06}$   & 	2.89 & $_{-0.42}^{+0.43}$   &   70.9, 97.7, 99.9    &   0.0, 0.0, 0.0  &   65.9, 97.6, 99.8   &    0.0, 0.0, 0.0  \\[3pt] 
 100.0  &  100.0 	& 97.6 & $_{-8.0}^{+1.9}$ 	& 34.5 & $_{-4.7}^{+4.7}$ &    2 & 66   &  1.03 & $_{-0.08}^{+0.02}$   & 	2.90 & $_{-0.39}^{+0.39}$   &   85.1, 97.9, 99.8    &   0.0, 0.0, 0.0  &   85.1, 97.9, 99.8   &    0.0, 0.0, 0.0  \\ %[3pt] 
\bottomrule
\end{tabularx}
	\begin{tablenotes}[]\footnotesize
		\item[]\textit{Note} --- Uncertainties for $\bar{\theta}_{\mathrm{calc}}$ and $\bar{\theta}_{\mathrm{emit}}$ correspond to the mean values of the  68.3\% credible and confidence intervals, respectively.   These uncertainties are propagated for correction factors $\bar{f}_{\mathrm{calc}}$ and $\bar{f}_{\mathrm{calc}}$, which are calculated by dividing the true rate by $\bar{\theta}_{\mathrm{calc}}$ and $\bar{\theta}_{\mathrm{detect}}$, respectively. $\epsilon_{\text{calc}}$  and  $\epsilon_{\text{calc}}$  are the percent errors of the calculated mean occurrence and detection rates with respect to the true rate.
	\end{tablenotes}
\end{ThreePartTable}  
\end{table*}
 
\setlength{\tabcolsep}{0.04in}
\begin{table*}\centering 
		\begin{ThreePartTable}
			\caption{Summary of mean simulation results: Literature, Uniform  \label{table:simulation_literature_uniform}}
			\begin{tabularx}{\textwidth}{rrr@{\hspace{0.1pt}}lr@{\hspace{0.1pt}}l@{\hspace{0.16in}}llr@{\hspace{0.1pt}}lr@{\hspace{0.1pt}}l@{\hspace{0.16in}}ll@{\hspace{0.16in}}llc}
%\tabletypesize{\scriptsize}
   		&
		&
   		&
 		&
 		&
 		&
   		&
 		&
   		&
 		&
   		&
 		&
\multicolumn{2}{l}{Recovery rate for $\theta_{\mathrm{emit}}$ }  &
\multicolumn{2}{l}{Recovery rate for $\theta_{\mathrm{true}}$ } \\ %[-2pt] 
$\bar{\theta}_{\mathrm{true}}$  							&
$\bar{\theta}_{\mathrm{emit}}$		    					&
\multicolumn{2}{c}{$\bar{\theta}_{\mathrm{calc}}$}   		&
\multicolumn{2}{c}{$\bar{\theta}_{\mathrm{detect}}$} 		&
$\abs{\bar{\epsilon}_{\mathrm{calc}}   }$   				&
$\abs{\bar{\epsilon}_{\mathrm{detect}} }$ 					&
\multicolumn{2}{c}{$\bar{f}_{\mathrm{calc}}$}   			&
\multicolumn{2}{c}{$\bar{f}_{\mathrm{detect}}$} 			&
$\theta_{\mathrm{calc}}$ CI 								&
$\theta_{\mathrm{detect}}$ CI 								&
$\theta_{\mathrm{calc}}$ CI 								&
$\theta_{\mathrm{detect}}$ CI \\%[-2pt] 
(\%)  						& 
(\%)  						& 
\multicolumn{2}{c}{(\%)}  	& 
\multicolumn{2}{c}{(\%)}  	& 
(\%)	  					& 
(\%)  						& 
	  						& 
  							& 
	  						& 
	  						& 
(\%)  						& 
(\%)  						& 
(\%)  						& 
(\%)  						\\ %[-2pt] 
   		&
		&
   		&
   		&
   		&
   		&
 		&
 		&
 		&
   		&
 		&
 		&
68.3, 95.5, 99.7	&
68.3, 95.5, 99.7	&
68.3, 95.5, 99.7	&
68.3, 95.5, 99.7  	\\  [-2pt] 
\midrule  
\multicolumn{2}{l}{ N = 10}	& &  & &	&	&	&	&	& & & &	  &   &  \\ [-2pt] 
\midrule 
 10.0  &  10.1 	& 10.3 & $_{-5.9}^{+20.8}$ 	& 5.0 & $_{-4.2}^{+4.2}$ &    3 & 50   &  0.97 & $_{-0.55}^{+1.95}$   & 	1.99 & $_{-1.66}^{+1.59}$   &   92.8, 100.0, 100.0    &   69.6, 73.2, 73.6  &   87.7, 95.8, 99.5   &    37.8, 38.7, 38.7  \\[3pt] 
 20.0  &  20.0 	& 20.3 & $_{-10.1}^{+20.9}$ 	& 9.8 & $_{-7.0}^{+7.0}$ &    2 & 51   &  0.98 & $_{-0.49}^{+1.01}$   & 	2.04 & $_{-1.46}^{+1.34}$   &   82.4, 98.9, 100.0    &   55.9, 67.3, 70.7  &   63.1, 94.7, 98.8   &    58.3, 59.4, 59.4  \\[3pt] 
 30.0  &  30.2 	& 30.8 & $_{-13.8}^{+21.1}$ 	& 14.8 & $_{-9.0}^{+9.0}$ &    3 & 51   &  0.97 & $_{-0.44}^{+0.67}$   & 	2.02 & $_{-1.23}^{+1.28}$   &   75.1, 98.7, 100.0    &   50.8, 70.5, 77.3  &   61.6, 94.8, 99.4   &    29.7, 44.7, 76.6  \\[3pt] 
 40.0  &  39.9 	& 40.9 & $_{-16.1}^{+20.4}$ 	& 19.7 & $_{-10.8}^{+10.8}$ &    2 & 51   &  0.98 & $_{-0.39}^{+0.49}$   & 	2.03 & $_{-1.11}^{+1.09}$   &   72.3, 98.3, 100.0    &   40.5, 68.0, 81.4  &   59.5, 94.6, 100.0   &    42.5, 60.2, 85.7  \\[3pt] 
 50.0  &  49.2 	& 48.4 & $_{-17.6}^{+19.8}$ 	& 23.6 & $_{-11.4}^{+11.4}$ &    3 & 53   &  1.03 & $_{-0.38}^{+0.42}$   & 	2.12 & $_{-1.02}^{+1.09}$   &   70.9, 98.4, 100.0    &   33.2, 60.4, 79.3  &   59.1, 94.5, 100.0   &    36.5, 65.4, 77.2  \\[3pt] 
 60.0  &  59.9 	& 60.2 & $_{-19.2}^{+17.4}$ 	& 29.6 & $_{-12.4}^{+12.4}$ &    0 & 51   &  1.00 & $_{-0.32}^{+0.29}$   & 	2.03 & $_{-0.85}^{+0.93}$   &   70.6, 97.5, 100.0    &   26.8, 55.3, 78.6  &   56.1, 94.8, 100.0   &    26.1, 52.5, 79.2  \\[3pt] 
 70.0  &  70.0 	& 69.6 & $_{-20.3}^{+15.1}$ 	& 34.1 & $_{-12.9}^{+12.9}$ &    1 & 51   &  1.01 & $_{-0.29}^{+0.22}$   & 	2.06 & $_{-0.78}^{+0.87}$   &   72.3, 98.0, 99.9    &   16.7, 48.3, 72.4  &   61.5, 96.1, 99.6   &    5.3, 31.4, 56.2  \\[3pt] 
 80.0  &  79.7 	& 77.0 & $_{-20.9}^{+12.5}$ 	& 38.1 & $_{-13.6}^{+13.6}$ &    4 & 52   &  1.04 & $_{-0.28}^{+0.17}$   & 	2.10 & $_{-0.75}^{+0.80}$   &   74.9, 97.3, 99.8    &   8.3, 34.7, 61.9  &   69.1, 95.3, 99.9   &    8.3, 39.7, 62.9  \\[3pt] 
 90.0  &  89.5 	& 86.8 & $_{-21.0}^{+8.3}$ 	& 44.2 & $_{-14.3}^{+14.3}$ &    4 & 51   &  1.04 & $_{-0.25}^{+0.10}$   & 	2.04 & $_{-0.66}^{+0.68}$   &   81.2, 97.0, 99.9    &   3.9, 27.7, 54.3  &   81.3, 96.9, 99.9   &    4.0, 28.9, 55.7  \\[3pt] 
 100.0  &  100.0 	& 93.7 & $_{-20.9}^{+4.4}$ 	& 48.9 & $_{-14.3}^{+14.3}$ &    6 & 51   &  1.07 & $_{-0.24}^{+0.05}$   & 	2.05 & $_{-0.60}^{+0.62}$   &   84.7, 95.2, 98.9    &   1.7, 9.2, 37.9  &   84.7, 95.2, 98.9   &    1.7, 9.2, 37.9  \\%[3pt] 
\midrule 
\multicolumn{2}{l}{ N = 20}	& &  & &	&	&	&	&	& & & &	  &   &  \\ [-2pt] 
\midrule 
 10.0  &  10.4 	& 10.6 & $_{-5.8}^{+12.9}$ 	& 5.2 & $_{-3.6}^{+3.6}$ &    6 & 48   &  0.94 & $_{-0.52}^{+1.14}$   & 	1.93 & $_{-1.35}^{+1.31}$   &   88.8, 99.8, 100.0    &   59.2, 70.6, 74.3  &   73.8, 96.8, 99.6   &    62.9, 63.3, 63.3  \\[3pt] 
 20.0  &  19.8 	& 19.4 & $_{-9.1}^{+14.3}$ 	& 9.4 & $_{-5.4}^{+5.4}$ &    3 & 53   &  1.03 & $_{-0.48}^{+0.76}$   & 	2.12 & $_{-1.21}^{+1.24}$   &   83.1, 99.2, 100.0    &   40.8, 70.7, 82.1  &   67.8, 98.0, 99.6   &    35.2, 61.6, 83.9  \\[3pt] 
 30.0  &  30.6 	& 30.4 & $_{-12.0}^{+15.6}$ 	& 14.7 & $_{-7.2}^{+7.2}$ &    1 & 51   &  0.99 & $_{-0.39}^{+0.51}$   & 	2.04 & $_{-1.00}^{+1.01}$   &   77.6, 98.6, 100.0    &   28.1, 61.4, 80.8  &   65.5, 93.9, 99.7   &    13.7, 51.4, 70.8  \\[3pt] 
 40.0  &  40.4 	& 40.4 & $_{-13.7}^{+16.1}$ 	& 19.7 & $_{-8.8}^{+8.8}$ &    1 & 51   &  0.99 & $_{-0.34}^{+0.39}$   & 	2.04 & $_{-0.91}^{+0.85}$   &   75.7, 98.7, 100.0    &   15.8, 51.2, 76.7  &   64.2, 94.0, 99.8   &    15.1, 51.2, 75.4  \\[3pt] 
 50.0  &  50.1 	& 50.0 & $_{-14.9}^{+15.9}$ 	& 24.0 & $_{-9.7}^{+9.7}$ &    0 & 52   &  1.00 & $_{-0.30}^{+0.32}$   & 	2.08 & $_{-0.84}^{+0.78}$   &   73.8, 97.4, 100.0    &   8.7, 37.4, 66.4  &   58.7, 92.2, 99.8   &    13.6, 36.1, 62.6  \\[3pt] 
 60.0  &  60.8 	& 61.2 & $_{-15.5}^{+14.9}$ 	& 29.6 & $_{-10.8}^{+10.8}$ &    2 & 51   &  0.98 & $_{-0.25}^{+0.24}$   & 	2.03 & $_{-0.74}^{+0.66}$   &   74.3, 97.7, 100.0    &   4.2, 23.3, 52.9  &   61.9, 92.7, 99.7   &    7.9, 30.5, 54.7  \\[3pt] 
 70.0  &  69.5 	& 69.6 & $_{-15.7}^{+13.4}$ 	& 33.9 & $_{-11.2}^{+11.2}$ &    1 & 52   &  1.01 & $_{-0.23}^{+0.19}$   & 	2.06 & $_{-0.68}^{+0.62}$   &   72.2, 98.4, 100.0    &   2.2, 17.5, 41.1  &   60.1, 95.2, 99.6   &    1.1, 8.3, 24.9  \\[3pt] 
 80.0  &  80.2 	& 80.4 & $_{-15.8}^{+10.4}$ 	& 38.6 & $_{-11.7}^{+11.7}$ &    0 & 52   &  1.00 & $_{-0.20}^{+0.13}$   & 	2.07 & $_{-0.62}^{+0.59}$   &   75.0, 98.3, 99.9    &   0.8, 6.6, 21.2  &   63.6, 97.6, 99.6   &    1.4, 7.0, 22.2  \\[3pt] 
 90.0  &  89.7 	& 88.1 & $_{-15.1}^{+7.3}$ 	& 43.5 & $_{-11.8}^{+11.8}$ &    2 & 52   &  1.02 & $_{-0.18}^{+0.09}$   & 	2.07 & $_{-0.56}^{+0.54}$   &   81.5, 98.2, 99.7    &   0.2, 2.5, 10.4  &   79.0, 97.2, 99.7   &    0.3, 2.0, 10.2  \\[3pt] 
 100.0  &  100.0 	& 95.3 & $_{-14.5}^{+3.5}$ 	& 47.7 & $_{-11.9}^{+11.9}$ &    5 & 52   &  1.05 & $_{-0.16}^{+0.04}$   & 	2.10 & $_{-0.52}^{+0.51}$   &   81.8, 96.9, 99.4    &   0.0, 0.0, 1.0  &   81.8, 96.9, 99.4   &    0.0, 0.0, 1.0  \\%[3pt] 
\midrule 
\multicolumn{2}{l}{ N = 50}	& &  & &	&	&	&	&	& & & &	  &   &  \\ [-2pt] 
\midrule 
 10.0  &  10.2 	& 10.3 & $_{-4.7}^{+7.3}$ 	& 5.0 & $_{-2.8}^{+2.8}$ &    3 & 50   &  0.97 & $_{-0.44}^{+0.68}$   & 	2.01 & $_{-1.14}^{+1.12}$   &   86.9, 99.8, 100.0    &   35.8, 72.3, 87.2  &   69.1, 97.9, 100.0   &    45.3, 73.0, 85.9  \\[3pt] 
 20.0  &  20.1 	& 20.1 & $_{-7.0}^{+8.9}$ 	& 9.8 & $_{-4.1}^{+4.1}$ &    1 & 51   &  0.99 & $_{-0.35}^{+0.44}$   & 	2.04 & $_{-0.84}^{+0.84}$   &   81.5, 99.6, 100.0    &   10.7, 47.9, 77.3  &   64.7, 94.8, 100.0   &    19.3, 49.3, 70.0  \\[3pt] 
 30.0  &  29.9 	& 30.2 & $_{-8.6}^{+10.0}$ 	& 14.6 & $_{-4.8}^{+4.8}$ &    1 & 51   &  0.99 & $_{-0.28}^{+0.33}$   & 	2.06 & $_{-0.68}^{+0.68}$   &   82.4, 98.9, 100.0    &   2.5, 24.9, 56.8  &   68.7, 95.5, 99.9   &    0.9, 15.0, 43.5  \\[3pt] 
 40.0  &  40.2 	& 40.8 & $_{-9.7}^{+10.6}$ 	& 19.7 & $_{-5.7}^{+5.7}$ &    2 & 51   &  0.98 & $_{-0.23}^{+0.26}$   & 	2.03 & $_{-0.59}^{+0.57}$   &   78.1, 99.0, 100.0    &   1.0, 9.4, 34.0  &   66.0, 93.6, 99.7   &    2.4, 13.8, 34.6  \\[3pt] 
 50.0  &  49.7 	& 50.2 & $_{-10.4}^{+10.9}$ 	& 24.3 & $_{-6.2}^{+6.2}$ &    0 & 51   &  1.00 & $_{-0.21}^{+0.22}$   & 	2.06 & $_{-0.52}^{+0.51}$   &   79.9, 98.2, 100.0    &   0.0, 2.7, 16.4  &   69.2, 95.3, 99.7   &    0.1, 4.1, 16.9  \\[3pt] 
 60.0  &  60.0 	& 60.1 & $_{-10.8}^{+10.8}$ 	& 29.0 & $_{-6.5}^{+6.5}$ &    0 & 52   &  1.00 & $_{-0.18}^{+0.18}$   & 	2.07 & $_{-0.46}^{+0.45}$   &   77.9, 98.3, 99.9    &   0.0, 0.4, 5.1  &   68.2, 94.8, 99.7   &    0.0, 1.3, 5.1  \\[3pt] 
 70.0  &  70.0 	& 70.6 & $_{-10.7}^{+10.1}$ 	& 34.1 & $_{-6.6}^{+6.6}$ &    1 & 51   &  0.99 & $_{-0.15}^{+0.14}$   & 	2.05 & $_{-0.40}^{+0.41}$   &   76.3, 98.3, 99.9    &   0.0, 0.2, 1.4  &   68.0, 94.1, 99.5   &    0.0, 0.1, 0.6  \\[3pt] 
 80.0  &  80.1 	& 80.5 & $_{-10.2}^{+8.4}$ 	& 38.9 & $_{-6.8}^{+6.8}$ &    1 & 51   &  0.99 & $_{-0.13}^{+0.10}$   & 	2.06 & $_{-0.36}^{+0.37}$   &   71.0, 97.8, 100.0    &   0.0, 0.0, 0.5  &   61.7, 95.7, 99.7   &    0.0, 0.0, 0.2  \\[3pt] 
 90.0  &  90.0 	& 90.1 & $_{-9.4}^{+5.8}$ 	& 43.7 & $_{-6.9}^{+6.9}$ &    0 & 51   &  1.00 & $_{-0.10}^{+0.06}$   & 	2.06 & $_{-0.32}^{+0.33}$   &   78.8, 98.8, 100.0    &   0.0, 0.0, 0.0  &   71.2, 98.5, 99.8   &    0.0, 0.0, 0.0  \\[3pt] 
 100.0  &  100.0 	& 97.8 & $_{-8.1}^{+1.7}$ 	& 48.5 & $_{-7.0}^{+7.0}$ &    2 & 51   &  1.02 & $_{-0.09}^{+0.02}$   & 	2.06 & $_{-0.30}^{+0.30}$   &   86.5, 98.3, 100.0    &   0.0, 0.0, 0.0  &   86.5, 98.3, 100.0   &    0.0, 0.0, 0.0  \\%[3pt] 
\midrule 
\multicolumn{2}{l}{ N = 100}	& &  & &	&	&	&	&	& & & &	  &   &  \\ [-2pt] 
\midrule 
 10.0  &  10.0 	& 10.0 & $_{-3.6}^{+4.8}$ 	& 4.8 & $_{-2.0}^{+2.0}$ &    0 & 52   &  1.00 & $_{-0.36}^{+0.48}$   & 	2.07 & $_{-0.88}^{+0.88}$   &   82.9, 99.1, 100.0    &   13.2, 51.6, 78.2  &   68.6, 94.5, 100.0   &    17.4, 53.0, 71.6  \\[3pt] 
 20.0  &  20.0 	& 20.0 & $_{-5.3}^{+6.3}$ 	& 9.6 & $_{-2.9}^{+2.9}$ &    0 & 52   &  1.00 & $_{-0.27}^{+0.31}$   & 	2.08 & $_{-0.62}^{+0.63}$   &   82.2, 99.3, 99.9    &   0.7, 14.9, 43.3  &   68.0, 95.6, 99.6   &    2.6, 16.4, 40.8  \\[3pt] 
 30.0  &  30.2 	& 30.4 & $_{-6.4}^{+7.1}$ 	& 14.7 & $_{-3.5}^{+3.5}$ &    1 & 51   &  0.99 & $_{-0.21}^{+0.23}$   & 	2.04 & $_{-0.48}^{+0.49}$   &   82.8, 99.4, 100.0    &   0.1, 2.0, 12.7  &   67.2, 95.5, 99.8   &    0.3, 2.1, 13.0  \\[3pt] 
 40.0  &  40.0 	& 40.5 & $_{-7.1}^{+7.6}$ 	& 19.5 & $_{-4.0}^{+4.0}$ &    1 & 51   &  0.99 & $_{-0.17}^{+0.19}$   & 	2.05 & $_{-0.42}^{+0.41}$   &   76.8, 98.6, 100.0    &   0.0, 0.1, 2.7  &   65.5, 94.5, 99.7   &    0.0, 1.0, 4.5  \\[3pt] 
 50.0  &  50.3 	& 50.4 & $_{-7.6}^{+7.8}$ 	& 24.4 & $_{-4.2}^{+4.2}$ &    1 & 51   &  0.99 & $_{-0.15}^{+0.15}$   & 	2.05 & $_{-0.35}^{+0.36}$   &   81.8, 99.1, 100.0    &   0.0, 0.0, 0.2  &   69.1, 95.6, 99.7   &    0.0, 0.0, 0.4  \\[3pt] 
 60.0  &  59.6 	& 59.7 & $_{-7.8}^{+7.9}$ 	& 28.8 & $_{-4.4}^{+4.4}$ &    1 & 52   &  1.01 & $_{-0.13}^{+0.13}$   & 	2.08 & $_{-0.32}^{+0.33}$   &   80.9, 98.6, 100.0    &   0.0, 0.0, 0.0  &   69.8, 95.9, 99.8   &    0.0, 0.0, 0.0  \\[3pt] 
 70.0  &  69.9 	& 70.2 & $_{-7.8}^{+7.5}$ 	& 33.9 & $_{-4.6}^{+4.6}$ &    0 & 52   &  1.00 & $_{-0.11}^{+0.11}$   & 	2.07 & $_{-0.28}^{+0.28}$   &   74.7, 98.2, 100.0    &   0.0, 0.0, 0.0  &   69.1, 94.8, 99.8   &    0.0, 0.0, 0.0  \\[3pt] 
 80.0  &  80.1 	& 80.1 & $_{-7.5}^{+6.8}$ 	& 38.6 & $_{-4.8}^{+4.8}$ &    0 & 52   &  1.00 & $_{-0.09}^{+0.08}$   & 	2.07 & $_{-0.26}^{+0.26}$   &   71.5, 96.8, 100.0    &   0.0, 0.0, 0.0  &   63.0, 94.0, 99.8   &    0.0, 0.0, 0.0  \\[3pt] 
 90.0  &  89.9 	& 90.3 & $_{-6.5}^{+4.9}$ 	& 43.7 & $_{-5.0}^{+5.0}$ &    0 & 51   &  1.00 & $_{-0.07}^{+0.05}$   & 	2.06 & $_{-0.24}^{+0.23}$   &   70.6, 98.6, 99.8    &   0.0, 0.0, 0.0  &   62.2, 97.3, 99.8   &    0.0, 0.0, 0.0  \\[3pt] 
 100.0  &  100.0 	& 98.6 & $_{-5.2}^{+1.1}$ 	& 48.7 & $_{-5.0}^{+5.0}$ &    1 & 51   &  1.01 & $_{-0.05}^{+0.01}$   & 	2.05 & $_{-0.21}^{+0.21}$   &   87.4, 98.5, 100.0    &   0.0, 0.0, 0.0  &   87.4, 98.5, 100.0   &    0.0, 0.0, 0.0  \\%[3pt] 
\bottomrule
\end{tabularx}
	\begin{tablenotes}[]\footnotesize
		\item[]\textit{Note} --- Uncertainties for $\bar{\theta}_{\mathrm{calc}}$ and $\bar{\theta}_{\mathrm{emit}}$ correspond to the mean values of the  68.3\% credible and confidence intervals, respectively.   These uncertainties are propagated for correction factors $\bar{f}_{\mathrm{calc}}$ and $\bar{f}_{\mathrm{calc}}$, which are calculated by dividing the true rate by $\bar{\theta}_{\mathrm{calc}}$ and $\bar{\theta}_{\mathrm{detect}}$, respectively. $\epsilon_{\text{calc}}$  and  $\epsilon_{\text{calc}}$  are the percent errors of the calculated mean occurrence and detection rates with respect to the true rate.
	\end{tablenotes}
\end{ThreePartTable}  
\end{table*}
 
\setlength{\tabcolsep}{0.04in}
\begin{table*}\centering 
		\begin{ThreePartTable}
			\caption{Summary of mean simulation results:  Low $\rms$, KM  \label{table:simulation_low-rms_kaplanMeier}} 
			\begin{tabularx}{\textwidth}{rrr@{\hspace{0.1pt}}lr@{\hspace{0.1pt}}l@{\hspace{0.16in}}llr@{\hspace{0.1pt}}lr@{\hspace{0.1pt}}l@{\hspace{0.16in}}ll@{\hspace{0.16in}}llc}
%\tabletypesize{\scriptsize}
   		&
		&
   		&
 		&
 		&
 		&
   		&
 		&
   		&
 		&
   		&
 		&
\multicolumn{2}{l}{Recovery rate for $\theta_{\mathrm{emit}}$ }  &
\multicolumn{2}{l}{Recovery rate for $\theta_{\mathrm{true}}$ } \\ %[-2pt] 
$\bar{\theta}_{\mathrm{true}}$  							&
$\bar{\theta}_{\mathrm{emit}}$		    					&
\multicolumn{2}{c}{$\bar{\theta}_{\mathrm{calc}}$}   		&
\multicolumn{2}{c}{$\bar{\theta}_{\mathrm{detect}}$} 		&
$\abs{\bar{\epsilon}_{\mathrm{calc}}   }$   				&
$\abs{\bar{\epsilon}_{\mathrm{detect}} }$ 					&
\multicolumn{2}{c}{$\bar{f}_{\mathrm{calc}}$}   			&
\multicolumn{2}{c}{$\bar{f}_{\mathrm{detect}}$} 			&
$\theta_{\mathrm{calc}}$ CI 								&
$\theta_{\mathrm{detect}}$ CI 								&
$\theta_{\mathrm{calc}}$ CI 								&
$\theta_{\mathrm{detect}}$ CI \\%[-2pt] 
(\%)  						& 
(\%)  						& 
\multicolumn{2}{c}{(\%)}  	& 
\multicolumn{2}{c}{(\%)}  	& 
(\%)	  					& 
(\%)  						& 
	  						& 
  							& 
	  						& 
	  						& 
(\%)  						& 
(\%)  						& 
(\%)  						& 
(\%)  						\\ %[-2pt] 
   		&
		&
   		&
   		&
   		&
   		&
 		&
 		&
 		&
   		&
 		&
 		&
68.3, 95.5, 99.7	&
68.3, 95.5, 99.7	&
68.3, 95.5, 99.7	&
68.3, 95.5, 99.7  	\\  [-2pt] 
\midrule  
\multicolumn{2}{l}{ N = 10}	& &  & &	&	&	&	&	& & & &	  &   &  \\ [-2pt] 
\midrule 
 10.0  &  10.2 	& 10.2 & $_{-5.7}^{+14.5}$ 	& 7.8 & $_{-5.9}^{+5.9}$ &    2 & 23   &  0.98 & $_{-0.55}^{+1.40}$   & 	1.29 & $_{-0.98}^{+0.98}$   &   98.3, 100.0, 100.0    &   86.1, 87.2, 87.2  &   81.9, 97.0, 99.4   &    51.0, 52.8, 52.9  \\[3pt] 
 20.0  &  19.5 	& 19.9 & $_{-9.7}^{+15.8}$ 	& 15.0 & $_{-9.3}^{+9.3}$ &    1 & 25   &  1.01 & $_{-0.49}^{+0.80}$   & 	1.33 & $_{-0.82}^{+0.84}$   &   95.4, 100.0, 100.0    &   83.4, 87.3, 88.1  &   64.2, 97.3, 99.4   &    75.4, 78.0, 78.3  \\[3pt] 
 30.0  &  29.5 	& 30.0 & $_{-12.4}^{+16.6}$ 	& 22.7 & $_{-11.1}^{+11.1}$ &    0 & 24   &  1.00 & $_{-0.41}^{+0.55}$   & 	1.32 & $_{-0.65}^{+0.72}$   &   93.0, 99.8, 100.0    &   82.4, 91.8, 93.6  &   57.7, 94.1, 99.8   &    49.4, 68.6, 90.3  \\[3pt] 
 40.0  &  39.8 	& 41.1 & $_{-14.5}^{+16.8}$ 	& 31.1 & $_{-12.7}^{+12.7}$ &    3 & 22   &  0.97 & $_{-0.34}^{+0.40}$   & 	1.29 & $_{-0.53}^{+0.57}$   &   90.2, 100.0, 100.0    &   82.1, 93.7, 95.8  &   62.2, 92.1, 99.6   &    69.6, 82.3, 95.9  \\[3pt] 
 50.0  &  49.7 	& 50.8 & $_{-15.8}^{+16.2}$ 	& 38.4 & $_{-14.0}^{+14.0}$ &    2 & 23   &  0.98 & $_{-0.31}^{+0.31}$   & 	1.30 & $_{-0.47}^{+0.48}$   &   87.0, 99.8, 100.0    &   77.1, 94.5, 97.8  &   61.3, 91.8, 99.9   &    69.4, 90.7, 94.9  \\[3pt] 
 60.0  &  59.8 	& 60.6 & $_{-16.4}^{+15.2}$ 	& 45.8 & $_{-14.1}^{+14.1}$ &    1 & 24   &  0.99 & $_{-0.27}^{+0.25}$   & 	1.31 & $_{-0.40}^{+0.42}$   &   86.8, 99.7, 100.0    &   71.3, 92.9, 97.8  &   58.9, 92.2, 99.8   &    62.6, 83.5, 95.9  \\[3pt] 
 70.0  &  69.2 	& 69.1 & $_{-16.6}^{+13.7}$ 	& 52.7 & $_{-14.4}^{+14.4}$ &    1 & 25   &  1.01 & $_{-0.24}^{+0.20}$   & 	1.33 & $_{-0.36}^{+0.37}$   &   85.2, 99.6, 100.0    &   63.0, 92.8, 98.7  &   59.8, 95.4, 99.7   &    28.0, 72.3, 89.0  \\[3pt] 
 80.0  &  79.7 	& 79.8 & $_{-16.6}^{+10.5}$ 	& 61.2 & $_{-15.0}^{+15.0}$ &    0 & 24   &  1.00 & $_{-0.21}^{+0.13}$   & 	1.31 & $_{-0.32}^{+0.28}$   &   85.1, 99.1, 100.0    &   52.1, 86.4, 96.7  &   62.9, 97.2, 99.9   &    48.5, 85.0, 94.6  \\[3pt] 
 90.0  &  90.4 	& 89.3 & $_{-16.1}^{+6.7}$ 	& 68.4 & $_{-13.8}^{+13.8}$ &    1 & 24   &  1.01 & $_{-0.18}^{+0.08}$   & 	1.32 & $_{-0.26}^{+0.25}$   &   87.3, 98.7, 99.7    &   42.1, 78.1, 94.0  &   83.5, 96.9, 99.8   &    40.4, 79.7, 91.4  \\[3pt] 
 100.0  &  100.0 	& 96.7 & $_{-15.1}^{+2.5}$ 	& 76.5 & $_{-12.2}^{+12.2}$ &    3 & 23   &  1.03 & $_{-0.16}^{+0.03}$   & 	1.31 & $_{-0.21}^{+0.20}$   &   88.4, 98.0, 99.6    &   34.3, 68.7, 91.4  &   88.4, 98.0, 99.6   &    34.3, 68.7, 91.4  \\%[3pt] 
\midrule 
\multicolumn{2}{l}{ N = 20}	& &  & &	&	&	&	&	& & & &	  &   &  \\ [-2pt] 
\midrule 
 10.0  &  9.9 	& 10.1 & $_{-5.1}^{+9.3}$ 	& 7.6 & $_{-4.7}^{+4.7}$ &    1 & 24   &  0.99 & $_{-0.51}^{+0.91}$   & 	1.31 & $_{-0.80}^{+0.82}$   &   97.2, 99.9, 100.0    &   84.8, 89.1, 89.7  &   60.7, 97.7, 99.6   &    74.8, 77.0, 77.2  \\[3pt] 
 20.0  &  19.8 	& 19.7 & $_{-8.1}^{+11.0}$ 	& 15.0 & $_{-7.3}^{+7.3}$ &    1 & 25   &  1.01 & $_{-0.42}^{+0.57}$   & 	1.33 & $_{-0.65}^{+0.64}$   &   95.9, 99.9, 100.0    &   81.0, 95.2, 96.9  &   67.3, 95.0, 100.0   &    60.2, 84.1, 95.8  \\[3pt] 
 30.0  &  29.8 	& 30.1 & $_{-10.1}^{+12.1}$ 	& 22.8 & $_{-9.6}^{+9.6}$ &    0 & 24   &  1.00 & $_{-0.34}^{+0.40}$   & 	1.32 & $_{-0.55}^{+0.49}$   &   92.2, 99.7, 100.0    &   76.1, 94.2, 98.1  &   66.2, 94.1, 99.8   &    43.7, 81.9, 90.3  \\[3pt] 
 40.0  &  40.5 	& 41.3 & $_{-11.6}^{+12.7}$ 	& 31.3 & $_{-11.0}^{+11.0}$ &    3 & 22   &  0.97 & $_{-0.27}^{+0.30}$   & 	1.28 & $_{-0.45}^{+0.41}$   &   92.3, 100.0, 100.0    &   69.5, 96.3, 99.4  &   65.6, 94.8, 99.2   &    55.2, 89.5, 96.8  \\[3pt] 
 50.0  &  50.1 	& 50.6 & $_{-12.3}^{+12.7}$ 	& 38.4 & $_{-11.7}^{+11.7}$ &    1 & 23   &  0.99 & $_{-0.24}^{+0.25}$   & 	1.30 & $_{-0.40}^{+0.36}$   &   90.1, 99.9, 100.0    &   59.8, 92.3, 98.4  &   65.3, 94.3, 99.5   &    53.4, 82.8, 95.9  \\[3pt] 
 60.0  &  60.1 	& 60.7 & $_{-12.6}^{+12.2}$ 	& 46.1 & $_{-11.9}^{+11.9}$ &    1 & 23   &  0.99 & $_{-0.20}^{+0.20}$   & 	1.30 & $_{-0.34}^{+0.33}$   &   88.0, 99.7, 100.0    &   51.3, 89.7, 98.5  &   65.8, 92.7, 99.8   &    53.5, 83.7, 93.2  \\[3pt] 
 70.0  &  69.7 	& 70.8 & $_{-12.5}^{+11.1}$ 	& 53.9 & $_{-11.6}^{+11.6}$ &    1 & 23   &  0.99 & $_{-0.17}^{+0.15}$   & 	1.30 & $_{-0.28}^{+0.29}$   &   83.5, 99.5, 100.0    &   45.0, 84.3, 96.3  &   64.4, 94.6, 100.0   &    28.2, 61.9, 86.1  \\[3pt] 
 80.0  &  80.0 	& 81.4 & $_{-11.9}^{+8.8}$ 	& 61.8 & $_{-10.8}^{+10.8}$ &    2 & 23   &  0.98 & $_{-0.14}^{+0.11}$   & 	1.30 & $_{-0.23}^{+0.24}$   &   82.4, 99.8, 100.0    &   32.9, 74.9, 91.9  &   59.2, 96.3, 99.8   &    35.3, 73.2, 90.3  \\[3pt] 
 90.0  &  90.0 	& 90.3 & $_{-11.1}^{+5.7}$ 	& 68.8 & $_{-9.8}^{+9.8}$ &    0 & 24   &  1.00 & $_{-0.12}^{+0.06}$   & 	1.31 & $_{-0.19}^{+0.21}$   &   86.9, 99.6, 100.0    &   18.4, 55.3, 82.7  &   74.5, 97.5, 99.8   &    23.4, 57.3, 84.0  \\[3pt] 
 100.0  &  100.0 	& 98.3 & $_{-9.7}^{+1.4}$ 	& 76.6 & $_{-8.9}^{+8.9}$ &    2 & 23   &  1.02 & $_{-0.10}^{+0.01}$   & 	1.30 & $_{-0.15}^{+0.17}$   &   90.9, 98.5, 100.0    &   3.9, 31.8, 61.8  &   90.9, 98.5, 100.0   &    3.9, 31.8, 61.8  \\%[3pt] 
\midrule 
\multicolumn{2}{l}{ N = 50}	& &  & &	&	&	&	&	& & & &	  &   &  \\ [-2pt] 
\midrule 
 10.0  &  10.1 	& 10.1 & $_{-4.0}^{+5.5}$ 	& 7.6 & $_{-3.6}^{+3.6}$ &    1 & 24   &  0.99 & $_{-0.39}^{+0.54}$   & 	1.31 & $_{-0.62}^{+0.60}$   &   96.9, 100.0, 100.0    &   80.6, 97.0, 98.6  &   65.4, 96.5, 99.8   &    71.1, 89.3, 95.4  \\[3pt] 
 20.0  &  20.2 	& 20.5 & $_{-5.9}^{+7.0}$ 	& 15.5 & $_{-5.0}^{+5.0}$ &    2 & 22   &  0.98 & $_{-0.28}^{+0.33}$   & 	1.29 & $_{-0.41}^{+0.42}$   &   94.0, 100.0, 100.0    &   69.1, 95.5, 99.2  &   67.4, 95.7, 99.5   &    61.2, 88.0, 95.6  \\[3pt] 
 30.0  &  30.0 	& 30.4 & $_{-7.0}^{+7.8}$ 	& 23.0 & $_{-6.0}^{+6.0}$ &    1 & 23   &  0.99 & $_{-0.23}^{+0.25}$   & 	1.30 & $_{-0.34}^{+0.33}$   &   94.5, 100.0, 100.0    &   51.4, 92.2, 99.2  &   65.5, 95.1, 99.8   &    34.2, 72.0, 90.2  \\[3pt] 
 40.0  &  40.1 	& 40.4 & $_{-7.8}^{+8.3}$ 	& 30.8 & $_{-6.5}^{+6.5}$ &    1 & 23   &  0.99 & $_{-0.19}^{+0.20}$   & 	1.30 & $_{-0.28}^{+0.28}$   &   92.9, 99.9, 100.0    &   36.2, 85.8, 98.0  &   67.8, 94.7, 99.4   &    40.9, 76.6, 91.1  \\[3pt] 
 50.0  &  50.3 	& 51.0 & $_{-8.2}^{+8.4}$ 	& 38.7 & $_{-6.8}^{+6.8}$ &    2 & 23   &  0.98 & $_{-0.16}^{+0.16}$   & 	1.29 & $_{-0.23}^{+0.23}$   &   92.0, 99.9, 100.0    &   22.6, 75.7, 95.6  &   64.7, 94.4, 99.4   &    33.6, 68.3, 86.5  \\[3pt] 
 60.0  &  59.7 	& 60.1 & $_{-8.4}^{+8.3}$ 	& 45.6 & $_{-6.9}^{+6.9}$ &    0 & 24   &  1.00 & $_{-0.14}^{+0.14}$   & 	1.32 & $_{-0.20}^{+0.20}$   &   90.0, 99.7, 100.0    &   10.9, 57.4, 88.6  &   65.2, 94.7, 99.3   &    18.1, 52.5, 80.0  \\[3pt] 
 70.0  &  70.2 	& 70.7 & $_{-8.3}^{+7.8}$ 	& 53.7 & $_{-7.0}^{+7.0}$ &    1 & 23   &  0.99 & $_{-0.12}^{+0.11}$   & 	1.30 & $_{-0.17}^{+0.17}$   &   87.8, 99.6, 100.0    &   4.4, 38.3, 75.7  &   66.1, 95.6, 99.7   &    6.7, 30.2, 60.9  \\[3pt] 
 80.0  &  80.2 	& 80.9 & $_{-7.7}^{+6.7}$ 	& 61.5 & $_{-7.0}^{+7.0}$ &    1 & 23   &  0.99 & $_{-0.09}^{+0.08}$   & 	1.30 & $_{-0.15}^{+0.14}$   &   81.9, 99.1, 100.0    &   1.6, 20.1, 57.1  &   65.4, 93.5, 99.7   &    5.9, 27.5, 56.0  \\[3pt] 
 90.0  &  90.0 	& 90.6 & $_{-6.6}^{+4.6}$ 	& 69.1 & $_{-6.5}^{+6.5}$ &    1 & 23   &  0.99 & $_{-0.07}^{+0.05}$   & 	1.30 & $_{-0.12}^{+0.12}$   &   79.7, 99.6, 100.0    &   0.6, 7.1, 27.3  &   62.7, 96.0, 99.8   &    0.8, 8.2, 32.3  \\[3pt] 
 100.0  &  100.0 	& 99.1 & $_{-5.1}^{+0.7}$ 	& 76.6 & $_{-5.8}^{+5.8}$ &    1 & 23   &  1.01 & $_{-0.05}^{+0.01}$   & 	1.31 & $_{-0.10}^{+0.10}$   &   92.0, 99.3, 99.8    &   0.0, 0.1, 2.1  &   92.0, 99.3, 99.8   &    0.0, 0.1, 2.1  \\%[3pt] 
\midrule 
\multicolumn{2}{l}{ N = 100}	& &  & &	&	&	&	&	& & & &	  &   &  \\ [-2pt] 
\midrule 
 10.0  &  9.9 	& 10.1 & $_{-3.0}^{+3.8}$ 	& 7.7 & $_{-2.6}^{+2.6}$ &    1 & 23   &  0.99 & $_{-0.29}^{+0.37}$   & 	1.30 & $_{-0.44}^{+0.44}$   &   96.0, 99.9, 100.0    &   70.9, 97.6, 99.5  &   65.5, 95.8, 99.7   &    57.8, 86.7, 95.7  \\[3pt] 
 20.0  &  20.0 	& 20.1 & $_{-4.3}^{+4.9}$ 	& 15.3 & $_{-3.5}^{+3.5}$ &    1 & 24   &  0.99 & $_{-0.21}^{+0.24}$   & 	1.31 & $_{-0.30}^{+0.31}$   &   94.6, 100.0, 100.0    &   40.2, 90.7, 99.0  &   66.6, 95.4, 100.0   &    44.2, 76.8, 91.7  \\[3pt] 
 30.0  &  30.1 	& 30.4 & $_{-5.1}^{+5.5}$ 	& 23.1 & $_{-4.2}^{+4.2}$ &    1 & 23   &  0.99 & $_{-0.17}^{+0.18}$   & 	1.30 & $_{-0.24}^{+0.24}$   &   94.6, 100.0, 100.0    &   19.1, 78.4, 97.2  &   68.7, 95.1, 99.5   &    22.3, 60.6, 83.2  \\[3pt] 
 40.0  &  39.9 	& 40.3 & $_{-5.6}^{+5.9}$ 	& 30.6 & $_{-4.5}^{+4.5}$ &    1 & 23   &  0.99 & $_{-0.14}^{+0.14}$   & 	1.31 & $_{-0.19}^{+0.20}$   &   94.2, 99.9, 100.0    &   7.2, 59.0, 91.9  &   69.4, 95.1, 99.7   &    18.1, 53.4, 80.3  \\[3pt] 
 50.0  &  50.0 	& 50.5 & $_{-5.9}^{+6.0}$ 	& 38.4 & $_{-4.8}^{+4.8}$ &    1 & 23   &  0.99 & $_{-0.12}^{+0.12}$   & 	1.30 & $_{-0.16}^{+0.16}$   &   92.3, 99.8, 100.0    &   1.8, 34.7, 77.8  &   66.9, 95.1, 99.5   &    9.9, 38.8, 68.0  \\[3pt] 
 60.0  &  60.1 	& 60.5 & $_{-6.0}^{+6.0}$ 	& 45.9 & $_{-5.0}^{+5.0}$ &    1 & 24   &  0.99 & $_{-0.10}^{+0.10}$   & 	1.31 & $_{-0.14}^{+0.14}$   &   88.9, 99.9, 100.0    &   0.5, 16.6, 53.2  &   68.9, 96.4, 99.7   &    3.5, 24.5, 51.2  \\[3pt] 
 70.0  &  70.1 	& 70.6 & $_{-5.9}^{+5.6}$ 	& 53.7 & $_{-4.9}^{+4.9}$ &    1 & 23   &  0.99 & $_{-0.08}^{+0.08}$   & 	1.30 & $_{-0.12}^{+0.12}$   &   87.9, 99.7, 100.0    &   0.0, 4.4, 28.3  &   67.2, 95.1, 99.6   &    0.6, 8.3, 28.0  \\[3pt] 
 80.0  &  80.2 	& 80.9 & $_{-5.5}^{+5.0}$ 	& 61.4 & $_{-4.8}^{+4.8}$ &    1 & 23   &  0.99 & $_{-0.07}^{+0.06}$   & 	1.30 & $_{-0.10}^{+0.10}$   &   82.0, 99.1, 100.0    &   0.0, 0.6, 9.5  &   67.3, 93.7, 99.4   &    0.5, 3.4, 13.0  \\[3pt] 
 90.0  &  90.0 	& 90.8 & $_{-4.6}^{+3.7}$ 	& 69.1 & $_{-4.6}^{+4.6}$ &    1 & 23   &  0.99 & $_{-0.05}^{+0.04}$   & 	1.30 & $_{-0.09}^{+0.08}$   &   76.2, 99.2, 100.0    &   0.0, 0.1, 0.7  &   66.1, 94.9, 99.9   &    0.0, 0.0, 1.6  \\[3pt] 
 100.0  &  100.0 	& 99.6 & $_{-3.1}^{+0.3}$ 	& 76.6 & $_{-4.2}^{+4.2}$ &    0 & 23   &  1.00 & $_{-0.03}^{+0.00}$   & 	1.31 & $_{-0.07}^{+0.07}$   &   95.4, 99.4, 100.0    &   0.0, 0.0, 0.0  &   95.4, 99.4, 100.0   &    0.0, 0.0, 0.0  \\%[3pt] 
\bottomrule
\end{tabularx}
	\begin{tablenotes}[]\footnotesize
		\item[]\textit{Note} --- Uncertainties for $\bar{\theta}_{\mathrm{calc}}$ and $\bar{\theta}_{\mathrm{emit}}$ correspond to the mean values of the  68.3\% credible and confidence intervals, respectively.   These uncertainties are propagated for correction factors $\bar{f}_{\mathrm{calc}}$ and $\bar{f}_{\mathrm{calc}}$, which are calculated by dividing the true rate by $\bar{\theta}_{\mathrm{calc}}$ and $\bar{\theta}_{\mathrm{detect}}$, respectively. $\epsilon_{\text{calc}}$  and  $\epsilon_{\text{calc}}$  are the percent errors of the calculated mean occurrence and detection rates with respect to the true rate.
	\end{tablenotes}
\end{ThreePartTable}  
\end{table*}
 
\setlength{\tabcolsep}{0.04in}
\begin{table*}\centering 
		\begin{ThreePartTable}
			\caption{Summary of mean simulation results:  Low $\rms$, Uniform  \label{table:simulation_low-rms_uniform}} 
			\begin{tabularx}{\textwidth}{rrr@{\hspace{0.1pt}}lr@{\hspace{0.1pt}}l@{\hspace{0.16in}}llr@{\hspace{0.1pt}}lr@{\hspace{0.1pt}}l@{\hspace{0.16in}}ll@{\hspace{0.16in}}llc}
%\tabletypesize{\scriptsize}
   		&
		&
   		&
 		&
 		&
 		&
   		&
 		&
   		&
 		&
   		&
 		&
\multicolumn{2}{l}{Recovery rate for $\theta_{\mathrm{emit}}$ }  &
\multicolumn{2}{l}{Recovery rate for $\theta_{\mathrm{true}}$ } \\ %[-2pt] 
$\bar{\theta}_{\mathrm{true}}$  							&
$\bar{\theta}_{\mathrm{emit}}$		    					&
\multicolumn{2}{c}{$\bar{\theta}_{\mathrm{calc}}$}   		&
\multicolumn{2}{c}{$\bar{\theta}_{\mathrm{detect}}$} 		&
$\abs{\bar{\epsilon}_{\mathrm{calc}}   }$   				&
$\abs{\bar{\epsilon}_{\mathrm{detect}} }$ 					&
\multicolumn{2}{c}{$\bar{f}_{\mathrm{calc}}$}   			&
\multicolumn{2}{c}{$\bar{f}_{\mathrm{detect}}$} 			&
$\theta_{\mathrm{calc}}$ CI 								&
$\theta_{\mathrm{detect}}$ CI 								&
$\theta_{\mathrm{calc}}$ CI 								&
$\theta_{\mathrm{detect}}$ CI \\%[-2pt] 
(\%)  						& 
(\%)  						& 
\multicolumn{2}{c}{(\%)}  	& 
\multicolumn{2}{c}{(\%)}  	& 
(\%)	  					& 
(\%)  						& 
	  						& 
  							& 
	  						& 
	  						& 
(\%)  						& 
(\%)  						& 
(\%)  						& 
(\%)  						\\ %[-2pt] 
   					&
					&
   					&
   					&
   					&
   					&
 					&
 					&
 					&
   					&
 					&
 					&
68.3, 95.5, 99.7	&
68.3, 95.5, 99.7	&
68.3, 95.5, 99.7	&
68.3, 95.5, 99.7  	\\  [-2pt] 
\midrule  
\multicolumn{2}{l}{ N = 10}	& &  & &	&	&	&	&	& & & &	  &   &  \\ [-2pt] 
\midrule 
 10.0  &  10.6 	& 10.7 & $_{-5.8}^{+13.1}$ 	& 9.3 & $_{-6.7}^{+6.7}$ &    7 & 7   &  0.93 & $_{-0.51}^{+1.14}$   & 	1.08 & $_{-0.78}^{+0.73}$   &   99.7, 100.0, 100.0    &   88.5, 89.4, 89.4  &   75.6, 96.0, 99.7   &    53.7, 57.7, 57.8  \\[3pt] 
 20.0  &  19.3 	& 19.5 & $_{-9.1}^{+14.2}$ 	& 16.9 & $_{-9.7}^{+9.7}$ &    3 & 15   &  1.03 & $_{-0.48}^{+0.75}$   & 	1.18 & $_{-0.68}^{+0.72}$   &   98.3, 100.0, 100.0    &   91.2, 93.5, 93.5  &   69.2, 97.1, 99.4   &    76.9, 80.4, 81.2  \\[3pt] 
 30.0  &  29.8 	& 30.1 & $_{-11.8}^{+15.1}$ 	& 26.1 & $_{-11.9}^{+11.9}$ &    0 & 13   &  1.00 & $_{-0.39}^{+0.50}$   & 	1.15 & $_{-0.52}^{+0.57}$   &   97.6, 99.9, 100.0    &   92.0, 96.2, 96.6  &   61.6, 91.8, 99.4   &    57.4, 74.7, 92.5  \\[3pt] 
 40.0  &  39.8 	& 40.1 & $_{-13.6}^{+15.4}$ 	& 34.9 & $_{-13.3}^{+13.3}$ &    0 & 13   &  1.00 & $_{-0.34}^{+0.38}$   & 	1.15 & $_{-0.44}^{+0.47}$   &   96.2, 99.9, 100.0    &   92.0, 97.6, 98.2  &   63.1, 94.4, 99.7   &    74.9, 88.5, 97.5  \\[3pt] 
 50.0  &  50.1 	& 50.5 & $_{-14.7}^{+15.0}$ 	& 44.1 & $_{-14.2}^{+14.2}$ &    1 & 12   &  0.99 & $_{-0.29}^{+0.29}$   & 	1.13 & $_{-0.37}^{+0.38}$   &   94.9, 100.0, 100.0    &   92.6, 99.0, 99.6  &   63.5, 92.1, 99.5   &    80.3, 94.4, 96.3  \\[3pt] 
 60.0  &  60.6 	& 61.0 & $_{-15.2}^{+14.0}$ 	& 53.2 & $_{-14.7}^{+14.7}$ &    2 & 11   &  0.98 & $_{-0.25}^{+0.23}$   & 	1.13 & $_{-0.31}^{+0.30}$   &   95.0, 100.0, 100.0    &   89.7, 98.4, 99.9  &   64.8, 93.2, 99.8   &    75.2, 90.0, 97.8  \\[3pt] 
 70.0  &  69.9 	& 70.8 & $_{-15.2}^{+12.3}$ 	& 61.7 & $_{-14.4}^{+14.4}$ &    1 & 12   &  0.99 & $_{-0.21}^{+0.17}$   & 	1.13 & $_{-0.26}^{+0.25}$   &   94.9, 99.8, 100.0    &   88.6, 98.3, 99.3  &   61.9, 94.5, 100.0   &    43.7, 83.8, 93.6  \\[3pt] 
 80.0  &  80.3 	& 80.6 & $_{-14.6}^{+9.8}$ 	& 70.7 & $_{-13.6}^{+13.6}$ &    1 & 12   &  0.99 & $_{-0.18}^{+0.12}$   & 	1.13 & $_{-0.22}^{+0.20}$   &   92.4, 99.8, 100.0    &   81.4, 96.5, 98.4  &   63.4, 97.3, 99.8   &    72.2, 91.6, 95.2  \\[3pt] 
 90.0  &  89.8 	& 89.8 & $_{-13.7}^{+6.1}$ 	& 79.3 & $_{-11.6}^{+11.6}$ &    0 & 12   &  1.00 & $_{-0.15}^{+0.07}$   & 	1.14 & $_{-0.17}^{+0.15}$   &   92.0, 99.5, 99.9    &   75.0, 95.8, 98.4  &   74.8, 96.9, 99.6   &    56.9, 82.0, 86.2  \\[3pt] 
 100.0 &  100.0 & 98.2 & $_{-12.3}^{+1.5}$ 	& 88.3 & $_{-7.3}^{+7.3}$ &    2 & 12   &  1.02 & $_{-0.13}^{+0.02}$   & 	1.13 & $_{-0.09}^{+0.10}$   &   90.9, 98.8, 99.8    &   73.1, 93.6, 99.2  &   90.9, 98.8, 99.8   &    73.1, 93.6, 99.2  \\ %[3pt] 
\midrule 
\multicolumn{2}{l}{ N = 20}	& &  & &	&	&	&	&	& & & &	  &   &  \\ [-2pt] 
\midrule 
 10.0  &  10.0 	& 10.1 & $_{-5.0}^{+8.4}$ 	& 8.8 & $_{-5.1}^{+5.1}$ &    1 & 13   &  0.99 & $_{-0.49}^{+0.83}$   & 	1.14 & $_{-0.67}^{+0.68}$   &   98.3, 100.0, 100.0    &   91.8, 94.8, 94.9  &   72.3, 97.4, 99.9   &    80.4, 82.9, 83.1  \\[3pt] 
 20.0  &  20.7 	& 21.0 & $_{-7.9}^{+10.2}$ 	& 18.3 & $_{-8.4}^{+8.4}$ &    5 & 9   &  0.95 & $_{-0.36}^{+0.46}$   & 	1.09 & $_{-0.50}^{+0.48}$   &   97.6, 100.0, 100.0    &   93.6, 98.6, 98.9  &   68.4, 95.1, 99.7   &    72.9, 90.0, 97.7  \\[3pt] 
 30.0  &  29.7 	& 30.2 & $_{-9.5}^{+11.0}$ 	& 26.3 & $_{-10.3}^{+10.3}$ &    1 & 12   &  0.99 & $_{-0.31}^{+0.36}$   & 	1.14 & $_{-0.44}^{+0.40}$   &   98.4, 100.0, 100.0    &   93.3, 99.6, 99.7  &   67.1, 94.7, 99.6   &    54.0, 88.6, 94.6  \\[3pt] 
 40.0  &  39.7 	& 40.2 & $_{-10.6}^{+11.5}$ 	& 35.0 & $_{-11.6}^{+11.6}$ &    1 & 13   &  0.99 & $_{-0.26}^{+0.28}$   & 	1.14 & $_{-0.38}^{+0.34}$   &   98.3, 100.0, 100.0    &   89.9, 100.0, 100.0  &   67.6, 93.9, 99.6   &    67.0, 93.8, 97.9  \\[3pt] 
 50.0  &  49.6 	& 50.2 & $_{-11.3}^{+11.5}$ 	& 43.8 & $_{-11.9}^{+11.9}$ &    0 & 12   &  1.00 & $_{-0.22}^{+0.23}$   & 	1.14 & $_{-0.31}^{+0.30}$   &   97.7, 100.0, 100.0    &   90.1, 99.3, 100.0  &   67.0, 93.4, 99.5   &    69.1, 90.3, 97.9  \\[3pt] 
 60.0  &  60.1 	& 60.4 & $_{-11.5}^{+11.0}$ 	& 52.6 & $_{-11.6}^{+11.6}$ &    1 & 12   &  0.99 & $_{-0.19}^{+0.18}$   & 	1.14 & $_{-0.25}^{+0.26}$   &   96.6, 99.9, 100.0    &   85.7, 99.1, 100.0  &   65.5, 94.5, 99.3   &    71.3, 94.1, 98.2  \\[3pt] 
 70.0  &  70.3 	& 71.3 & $_{-11.2}^{+9.8}$ 	& 62.3 & $_{-10.8}^{+10.8}$ &    2 & 11   &  0.98 & $_{-0.15}^{+0.14}$   & 	1.12 & $_{-0.19}^{+0.21}$   &   94.3, 100.0, 100.0    &   84.3, 98.6, 99.9  &   65.3, 93.8, 99.0   &    53.5, 85.9, 95.4  \\[3pt] 
 80.0  &  80.4 	& 81.0 & $_{-10.4}^{+8.1}$ 	& 71.0 & $_{-9.6}^{+9.6}$ &    1 & 11   &  0.99 & $_{-0.13}^{+0.10}$   & 	1.13 & $_{-0.15}^{+0.17}$   &   91.4, 99.9, 100.0    &   75.1, 95.5, 99.4  &   63.9, 94.7, 99.9   &    64.3, 91.1, 97.9  \\[3pt] 
 90.0  &  90.3 	& 90.7 & $_{-9.1}^{+5.2}$ 	& 79.5 & $_{-8.4}^{+8.4}$ &    1 & 12   &  0.99 & $_{-0.10}^{+0.06}$   & 	1.13 & $_{-0.12}^{+0.13}$   &   91.8, 99.9, 100.0    &   57.6, 90.5, 98.2  &   65.2, 98.0, 99.9   &    63.1, 89.5, 97.8  \\[3pt] 
 100.0 &  100.0 & 98.8 & $_{-7.5}^{+1.0}$ 	& 87.9 & $_{-6.5}^{+6.5}$ &    1 & 12   &  1.01 & $_{-0.08}^{+0.01}$   & 	1.14 & $_{-0.08}^{+0.08}$   &   91.3, 98.5, 99.8    &   31.6, 80.7, 94.1  &   91.3, 98.5, 99.8   &    31.6, 80.7, 94.1  \\ %[3pt] 
\midrule 
\multicolumn{2}{l}{ N = 50}	& &  & &	&	&	&	&	& & & &	  &   &  \\ [-2pt] 
\midrule 
 10.0  &  10.2 	& 10.4 & $_{-3.8}^{+5.1}$ 	& 9.1 & $_{-3.9}^{+3.9}$  &    4 & 9    &  0.96 & $_{-0.35}^{+0.47}$   & 	1.10 & $_{-0.47}^{+0.47}$   &   99.3, 100.0, 100.0    &   95.9, 99.3, 99.4  &   66.8, 96.4, 99.9   &    76.1, 93.8, 97.7  \\[3pt] 
 20.0  &  20.4 	& 20.6 & $_{-5.5}^{+6.4}$ 	& 18.0 & $_{-5.3}^{+5.3}$ &    3 & 10   &  0.97 & $_{-0.26}^{+0.30}$   & 	1.11 & $_{-0.33}^{+0.33}$   &   98.6, 100.0, 100.0    &   91.6, 99.9, 99.9  &   65.8, 94.8, 99.4   &    73.5, 95.2, 98.4  \\[3pt] 
 30.0  &  30.1 	& 30.4 & $_{-6.5}^{+7.2}$ 	& 26.5 & $_{-6.4}^{+6.4}$ &    1 & 12   &  0.99 & $_{-0.21}^{+0.23}$   & 	1.13 & $_{-0.27}^{+0.26}$   &   99.4, 100.0, 100.0    &   86.3, 99.9, 100.0  &   68.2, 95.2, 99.2   &    53.7, 87.1, 95.7  \\[3pt] 
 40.0  &  39.9 	& 40.4 & $_{-7.2}^{+7.5}$ 	& 35.1 & $_{-6.7}^{+6.7}$ &    1 & 12   &  0.99 & $_{-0.18}^{+0.18}$   & 	1.14 & $_{-0.22}^{+0.22}$   &   98.5, 100.0, 100.0    &   82.5, 99.5, 100.0  &   68.0, 94.6, 99.6   &    65.4, 91.4, 98.2  \\[3pt] 
 50.0  &  50.0 	& 50.6 & $_{-7.5}^{+7.6}$ 	& 44.1 & $_{-6.8}^{+6.8}$ &    1 & 12   &  0.99 & $_{-0.15}^{+0.15}$   & 	1.13 & $_{-0.18}^{+0.18}$   &   99.3, 100.0, 100.0    &   75.3, 99.2, 100.0  &   68.1, 94.7, 99.4   &    61.1, 89.6, 97.4  \\[3pt] 
 60.0  &  60.0 	& 60.5 & $_{-7.6}^{+7.3}$ 	& 52.9 & $_{-7.0}^{+7.0}$ &    1 & 12   &  0.99 & $_{-0.12}^{+0.12}$   & 	1.13 & $_{-0.15}^{+0.15}$   &   97.2, 100.0, 100.0    &   60.5, 97.1, 100.0  &   65.6, 96.1, 99.9   &    54.3, 86.5, 96.6  \\[3pt] 
 70.0  &  70.1 	& 70.7 & $_{-7.4}^{+6.8}$ 	& 61.8 & $_{-7.0}^{+7.0}$ &    1 & 12   &  0.99 & $_{-0.10}^{+0.10}$   & 	1.13 & $_{-0.13}^{+0.12}$   &   95.9, 100.0, 100.0    &   49.1, 92.0, 99.2  &   68.0, 93.9, 99.5   &    35.3, 74.9, 92.0  \\[3pt] 
 80.0  &  79.8 	& 80.5 & $_{-6.8}^{+5.9}$ 	& 70.3 & $_{-6.4}^{+6.4}$ &    1 & 12   &  0.99 & $_{-0.08}^{+0.07}$   & 	1.14 & $_{-0.10}^{+0.11}$   &   94.8, 100.0, 100.0    &   34.3, 84.0, 97.3  &   67.6, 95.2, 99.7   &    36.7, 75.2, 93.7  \\[3pt] 
 90.0  &  90.0 	& 90.6 & $_{-5.5}^{+4.1}$ 	& 79.4 & $_{-5.6}^{+5.6}$ &    1 & 12   &  0.99 & $_{-0.06}^{+0.05}$   & 	1.13 & $_{-0.08}^{+0.08}$   &   87.4, 99.8, 100.0    &   17.0, 63.4, 89.7  &   65.0, 95.7, 100.0   &    21.4, 59.2, 85.5  \\[3pt] 
 100.0 &  100.0 & 99.4 & $_{-3.7}^{+0.5}$ 	& 88.1 & $_{-4.4}^{+4.4}$ &    1 & 12   &  1.01 & $_{-0.04}^{+0.01}$   & 	1.14 & $_{-0.06}^{+0.06}$   &   91.2, 99.1, 100.0    &   1.7, 14.1, 46.4  &   91.2, 99.1, 100.0   &    1.7, 14.1, 46.4  \\ %[3pt] 
\midrule 
\multicolumn{2}{l}{ N = 100}	& &  & &	&	&	&	&	& & & &	  &   &  \\ [-2pt] 
\midrule 
 10.0  &  9.9 	& 10.1 & $_{-2.8}^{+3.5}$ 	& 8.8 & $_{-2.8}^{+2.8}$  &    1 & 12   &  0.99 & $_{-0.28}^{+0.34}$   & 	1.13 & $_{-0.36}^{+0.36}$   &   99.4, 100.0, 100.0   &   94.4, 99.7, 100.0  &   66.9, 95.8, 100.0   &    72.2, 93.3, 97.8  \\[3pt] 
 20.0  &  20.1 	& 20.5 & $_{-4.0}^{+4.5}$ 	& 17.8 & $_{-3.8}^{+3.8}$ &    2 & 11   &  0.98 & $_{-0.19}^{+0.21}$   & 	1.12 & $_{-0.24}^{+0.24}$   &   99.2, 99.9, 100.0    &   89.4, 99.8, 99.9  &   64.9, 94.6, 99.7   &    66.0, 90.7, 97.7  \\[3pt] 
 30.0  &  30.2 	& 30.6 & $_{-4.7}^{+5.1}$ 	& 26.6 & $_{-4.4}^{+4.4}$ &    2 & 11   &  0.98 & $_{-0.15}^{+0.16}$   & 	1.13 & $_{-0.18}^{+0.19}$   &   99.6, 100.0, 100.0   &   73.3, 99.7, 100.0  &   68.0, 94.8, 99.6   &    53.5, 85.3, 95.7  \\[3pt] 
 40.0  &  40.2 	& 40.7 & $_{-5.2}^{+5.3}$ 	& 35.4 & $_{-4.7}^{+4.7}$ &    2 & 11   &  0.98 & $_{-0.12}^{+0.13}$   & 	1.13 & $_{-0.15}^{+0.15}$   &   98.6, 100.0, 100.0   &   62.5, 98.7, 99.9  &   66.4, 94.9, 99.7   &    50.6, 86.5, 97.4  \\[3pt] 
 50.0  &  49.9 	& 50.4 & $_{-5.4}^{+5.4}$ 	& 43.9 & $_{-4.9}^{+4.9}$ &    1 & 12   &  0.99 & $_{-0.11}^{+0.11}$   & 	1.14 & $_{-0.13}^{+0.13}$   &   98.7, 100.0, 100.0   &   39.9, 97.3, 100.0  &   69.5, 94.8, 99.8   &    43.6, 80.3, 94.0  \\[3pt] 
 60.0  &  60.2 	& 60.9 & $_{-5.4}^{+5.3}$ 	& 53.1 & $_{-5.0}^{+5.0}$ &    1 & 12   &  0.99 & $_{-0.09}^{+0.09}$   & 	1.13 & $_{-0.11}^{+0.11}$   &   98.1, 100.0, 100.0   &   26.2, 89.2, 99.5  &   65.9, 94.9, 99.9   &    38.6, 75.6, 93.2  \\[3pt] 
 70.0  &  70.0 	& 70.7 & $_{-5.2}^{+4.9}$ 	& 61.8 & $_{-4.8}^{+4.8}$ &    1 & 12   &  0.99 & $_{-0.07}^{+0.07}$   & 	1.13 & $_{-0.09}^{+0.09}$   &   95.4, 100.0, 100.0   &   13.5, 74.3, 96.9  &   65.3, 94.9, 99.8   &    20.9, 57.4, 81.4  \\[3pt] 
 80.0  &  80.1 	& 80.6 & $_{-4.7}^{+4.3}$ 	& 70.5 & $_{-4.5}^{+4.5}$ &    1 & 12   &  0.99 & $_{-0.06}^{+0.05}$   & 	1.14 & $_{-0.07}^{+0.07}$   &   92.9, 100.0, 100.0   &   3.8, 49.5, 86.3  &   68.0, 94.5, 99.4   &    14.4, 50.0, 78.8  \\[3pt] 
 90.0  &  89.9 	& 90.4 & $_{-3.9}^{+3.2}$ 	& 79.0 & $_{-4.0}^{+4.0}$ &    0 & 12   &  1.00 & $_{-0.04}^{+0.04}$   & 	1.14 & $_{-0.06}^{+0.06}$   &   87.8, 99.8, 100.0    &   0.6, 18.0, 55.5  &   66.9, 94.2, 99.7   &    4.3, 22.6, 54.4  \\[3pt] 
 100.0 &  100.0 & 99.6 & $_{-2.2}^{+0.3}$ 	& 88.0 & $_{-3.2}^{+3.2}$ &    0 & 12   &  1.00 & $_{-0.02}^{+0.00}$   & 	1.14 & $_{-0.04}^{+0.04}$   &   92.9, 99.1, 100.0    &   0.0, 0.2, 2.6  &   92.9, 99.1, 100.0   &    0.0, 0.2, 2.6  \\ %[3pt] 
\bottomrule
\end{tabularx}
	\begin{tablenotes}[]\footnotesize
		\item[]\textit{Note} --- Uncertainties for $\bar{\theta}_{\mathrm{calc}}$ and $\bar{\theta}_{\mathrm{emit}}$ correspond to the mean values of the  68.3\% credible and confidence intervals, respectively.   These uncertainties are propagated for correction factors $\bar{f}_{\mathrm{calc}}$ and $\bar{f}_{\mathrm{calc}}$, which are calculated by dividing the true rate by $\bar{\theta}_{\mathrm{calc}}$ and $\bar{\theta}_{\mathrm{detect}}$, respectively. $\epsilon_{\text{calc}}$  and  $\epsilon_{\text{calc}}$  are the percent errors of the calculated mean occurrence and detection rates with respect to the true rate.
\end{tablenotes}
\end{ThreePartTable}  
\end{table*}
 
\setlength{\tabcolsep}{0.04in}
\begin{table*}\centering 
		\begin{ThreePartTable}
			\caption{Summary of mean simulation results: Low luminosity  \label{table:simulation_low-lum_uniform}} 
\begin{tabularx}{\textwidth}{rrr@{\hspace{0.1pt}}lr@{\hspace{0.1pt}}l@{\hspace{0.16in}}llr@{\hspace{0.1pt}}lr@{\hspace{0.1pt}}l@{\hspace{0.16in}}ll@{\hspace{0.16in}}llc}
%\tabletypesize{\scriptsize}
   		&
		&
   		&
 		&
 		&
 		&
   		&
 		&
   		&
 		&
   		&
 		&
\multicolumn{2}{l}{Recovery rate for $\theta_{\mathrm{emit}}$ }  &
\multicolumn{2}{l}{Recovery rate for $\theta_{\mathrm{true}}$ } \\ %[-2pt] 
$\bar{\theta}_{\mathrm{true}}$  							&
$\bar{\theta}_{\mathrm{emit}}$		    					&
\multicolumn{2}{c}{$\bar{\theta}_{\mathrm{calc}}$}   		&
\multicolumn{2}{c}{$\bar{\theta}_{\mathrm{detect}}$} 		&
$\abs{\bar{\epsilon}_{\mathrm{calc}}   }$   				&
$\abs{\bar{\epsilon}_{\mathrm{detect}} }$ 					&
\multicolumn{2}{c}{$\bar{f}_{\mathrm{calc}}$}   			&
\multicolumn{2}{c}{$\bar{f}_{\mathrm{detect}}$} 			&
$\theta_{\mathrm{calc}}$ CI 								&
$\theta_{\mathrm{detect}}$ CI 								&
$\theta_{\mathrm{calc}}$ CI 								&
$\theta_{\mathrm{detect}}$ CI \\%[-2pt] 
(\%)  						& 
(\%)  						& 
\multicolumn{2}{c}{(\%)}  	& 
\multicolumn{2}{c}{(\%)}  	& 
(\%)	  					& 
(\%)  						& 
	  						& 
  							& 
	  						& 
	  						& 
(\%)  						& 
(\%)  						& 
(\%)  						& 
(\%)  						\\ %[-2pt] 
   		&
		&
   		&
   		&
   		&
   		&
 		&
 		&
 		&
   		&
 		&
 		&
68.3, 95.5, 99.7	&
68.3, 95.5, 99.7	&
68.3, 95.5, 99.7	&
68.3, 95.5, 99.7  	\\  [-2pt] 
\midrule  
\multicolumn{2}{l}{ N = 10}	& &  & &	&	&	&	&	& & & &	  &   &  \\ [-2pt] 
\midrule 
 10.0  &  10.1 	& 8.0 & $_{-3.6}^{+50.4}$ 	& 1.1 & $_{-1.1}^{+1.1}$ &    20 & 89   &  1.25 & $_{-0.57}^{+7.91}$   & 	9.26 & $_{-9.09}^{+7.89}$   &   90.4, 95.9, 100.0    &   43.1, 44.2, 44.6  &   89.6, 92.2, 99.7   &    9.8, 9.8, 9.8  \\[3pt] 
 20.0  &  20.0 	& 13.5 & $_{-6.0}^{+48.7}$ 	& 1.9 & $_{-1.8}^{+1.8}$ &    33 & 90   &  1.48 & $_{-0.66}^{+5.36}$   & 	10.31 & $_{-9.41}^{+8.13}$   &   85.9, 96.4, 100.0    &   20.1, 24.3, 26.7  &   84.8, 93.9, 99.9   &    16.0, 16.0, 16.0  \\[3pt] 
 30.0  &  30.2 	& 20.6 & $_{-9.0}^{+44.8}$ 	& 3.0 & $_{-2.6}^{+2.6}$ &    31 & 90   &  1.45 & $_{-0.63}^{+3.16}$   & 	10.07 & $_{-8.88}^{+7.30}$   &   76.3, 97.2, 100.0    &   9.4, 16.2, 22.6  &   78.5, 97.2, 100.0   &    1.5, 6.7, 23.2  \\[3pt] 
 40.0  &  39.9 	& 26.7 & $_{-11.4}^{+43.2}$ 	& 3.9 & $_{-3.4}^{+3.4}$ &    33 & 90   &  1.50 & $_{-0.64}^{+2.43}$   & 	10.15 & $_{-8.71}^{+7.94}$   &   73.5, 99.1, 100.0    &   4.3, 10.8, 20.7  &   68.6, 99.6, 100.0   &    3.8, 9.3, 30.4  \\[3pt] 
 50.0  &  49.2 	& 30.6 & $_{-13.0}^{+41.7}$ 	& 4.8 & $_{-3.9}^{+3.9}$ &    39 & 90   &  1.63 & $_{-0.69}^{+2.22}$   & 	10.48 & $_{-8.66}^{+7.56}$   &   67.1, 99.2, 100.0    &   1.7, 8.0, 19.3  &   61.6, 99.9, 100.0   &    0.7, 7.5, 20.8  \\[3pt] 
 60.0  &  59.9 	& 40.6 & $_{-17.0}^{+36.3}$ 	& 6.1 & $_{-4.9}^{+4.9}$ &    32 & 90   &  1.48 & $_{-0.62}^{+1.32}$   & 	9.90 & $_{-8.02}^{+7.24}$   &   61.3, 98.2, 99.7    &   1.3, 4.6, 13.7  &   66.3, 99.9, 100.0   &    0.3, 1.8, 9.4  \\[3pt] 
 70.0  &  70.0 	& 47.1 & $_{-19.5}^{+33.4}$ 	& 7.1 & $_{-5.5}^{+5.5}$ &    33 & 90   &  1.48 & $_{-0.61}^{+1.05}$   & 	9.86 & $_{-7.69}^{+7.33}$   &   59.8, 97.7, 99.3    &   0.4, 1.9, 7.1  &   55.5, 99.4, 100.0   &    0.0, 0.1, 1.6  \\[3pt] 
 80.0  &  79.7 	& 51.8 & $_{-21.3}^{+31.8}$ 	& 7.6 & $_{-6.0}^{+6.0}$ &    35 & 91   &  1.54 & $_{-0.63}^{+0.95}$   & 	10.54 & $_{-8.32}^{+7.47}$   &   61.8, 94.3, 96.8    &   0.0, 0.5, 3.0  &   59.5, 98.0, 100.0   &    0.0, 0.2, 1.2  \\[3pt] 
 90.0  &  89.5 	& 54.2 & $_{-21.5}^{+30.5}$ 	& 8.6 & $_{-6.3}^{+6.3}$ &    40 & 90   &  1.66 & $_{-0.66}^{+0.94}$   & 	10.45 & $_{-7.68}^{+7.31}$   &   60.0, 87.7, 90.2    &   0.0, 0.0, 1.1  &   59.4, 94.7, 99.9   &    0.0, 0.0, 0.4  \\[3pt] 
 100.0  &  100.0 	& 61.2 & $_{-24.1}^{+27.0}$ 	& 9.6 & $_{-7.0}^{+7.0}$ &    39 & 90   &  1.63 & $_{-0.64}^{+0.72}$   & 	10.44 & $_{-7.63}^{+6.96}$   &   64.9, 69.5, 70.2    &   0.0, 0.0, 0.0  &   64.9, 69.5, 70.2   &    0.0, 0.0, 0.0  \\%[3pt] 
\midrule 
\multicolumn{2}{l}{ N = 20}	& &  & &	&	&	&	&	& & & &	  &   &  \\ [-2pt]
\midrule 
 10.0  &  10.4 	& 10.3 & $_{-4.9}^{+37.0}$ 	& 1.0 & $_{-0.9}^{+0.9}$ &    3 & 90   &  0.98 & $_{-0.47}^{+3.52}$   & 	9.66 & $_{-8.68}^{+8.31}$   &   86.4, 92.6, 99.7    &   19.6, 25.0, 28.1  &   83.5, 91.2, 99.2   &    17.5, 17.5, 17.5  \\[3pt] 
 20.0  &  19.8 	& 17.8 & $_{-8.4}^{+35.8}$ 	& 1.9 & $_{-1.7}^{+1.7}$ &    11 & 91   &  1.13 & $_{-0.53}^{+2.26}$   & 	10.55 & $_{-9.33}^{+8.88}$   &   79.5, 95.9, 100.0    &   5.2, 13.7, 23.1  &   79.6, 96.8, 99.9   &    1.4, 10.9, 31.9  \\[3pt] 
 30.0  &  30.6 	& 28.5 & $_{-12.7}^{+32.6}$ 	& 3.1 & $_{-2.5}^{+2.5}$ &    5 & 90   &  1.05 & $_{-0.47}^{+1.20}$   & 	9.62 & $_{-7.67}^{+7.35}$   &   70.2, 95.9, 100.0    &   1.5, 6.3, 16.9  &   66.9, 96.3, 100.0   &    0.0, 1.8, 7.5  \\[3pt] 
 40.0  &  40.4 	& 37.9 & $_{-16.0}^{+29.6}$ 	& 4.1 & $_{-3.0}^{+3.0}$ &    5 & 90   &  1.06 & $_{-0.45}^{+0.83}$   & 	9.78 & $_{-7.26}^{+6.99}$   &   61.1, 97.6, 100.0    &   0.5, 2.0, 6.5  &   55.4, 97.6, 100.0   &    0.0, 0.3, 4.2  \\[3pt] 
 50.0  &  50.1 	& 43.6 & $_{-17.9}^{+28.1}$ 	& 4.8 & $_{-3.4}^{+3.4}$ &    13 & 90   &  1.15 & $_{-0.47}^{+0.74}$   & 	10.53 & $_{-7.49}^{+7.24}$   &   57.4, 99.1, 100.0    &   0.0, 0.6, 2.6  &   51.5, 99.2, 100.0   &    0.0, 0.1, 0.8  \\[3pt] 
 60.0  &  60.8 	& 50.8 & $_{-19.8}^{+25.4}$ 	& 5.8 & $_{-3.8}^{+3.8}$ &    15 & 90   &  1.18 & $_{-0.46}^{+0.59}$   & 	10.43 & $_{-6.96}^{+6.85}$   &   59.7, 99.0, 100.0    &   0.0, 0.2, 1.0  &   51.6, 99.6, 100.0   &    0.0, 0.0, 0.0  \\[3pt] 
 70.0  &  69.5 	& 60.2 & $_{-22.9}^{+22.0}$ 	& 6.8 & $_{-4.3}^{+4.3}$ &    14 & 90   &  1.16 & $_{-0.44}^{+0.43}$   & 	10.32 & $_{-6.52}^{+6.81}$   &   63.6, 97.3, 100.0    &   0.0, 0.0, 0.1  &   61.6, 97.9, 100.0   &    0.0, 0.0, 0.0  \\[3pt] 
 80.0  &  80.2 	& 68.2 & $_{-25.0}^{+18.5}$ 	& 7.7 & $_{-4.8}^{+4.8}$ &    15 & 90   &  1.17 & $_{-0.43}^{+0.32}$   & 	10.33 & $_{-6.34}^{+6.47}$   &   68.9, 95.5, 99.7    &   0.0, 0.0, 0.0  &   69.9, 95.9, 100.0   &    0.0, 0.0, 0.0  \\[3pt] 
 90.0  &  89.7 	& 73.0 & $_{-25.8}^{+16.4}$ 	& 8.7 & $_{-5.2}^{+5.2}$ &    19 & 90   &  1.23 & $_{-0.43}^{+0.28}$   & 	10.31 & $_{-6.14}^{+6.24}$   &   71.3, 91.7, 98.1    &   0.0, 0.0, 0.0  &   70.8, 92.8, 99.7   &    0.0, 0.0, 0.0  \\[3pt] 
 100.0  &  100.0 	& 79.4 & $_{-27.2}^{+13.2}$ 	& 9.6 & $_{-5.5}^{+5.5}$ &    21 & 90   &  1.26 & $_{-0.43}^{+0.21}$   & 	10.40 & $_{-5.98}^{+6.06}$   &   72.6, 83.3, 86.1    &   0.0, 0.0, 0.0  &   72.6, 83.3, 86.1   &    0.0, 0.0, 0.0  \\%[3pt] 
\midrule 
\multicolumn{2}{l}{ N = 50}	& &  & &	&	&	&	&	& & & &	  &   &  \\ [-2pt]
\midrule 
 10.0  &  10.2 	& 9.8 & $_{-5.6}^{+21.9}$ 	& 1.0 & $_{-0.8}^{+0.8}$ &    2 & 90   &  1.02 & $_{-0.58}^{+2.26}$   & 	10.35 & $_{-8.77}^{+8.32}$   &   83.5, 97.2, 99.8    &   2.8, 9.9, 21.2  &   86.2, 96.7, 99.4   &    1.3, 7.2, 29.0  \\[3pt] 
 20.0  &  20.1 	& 21.4 & $_{-10.5}^{+21.6}$ 	& 2.1 & $_{-1.5}^{+1.5}$ &    7 & 90   &  0.93 & $_{-0.46}^{+0.94}$   & 	9.65 & $_{-6.94}^{+6.67}$   &   70.1, 95.4, 99.7    &   0.1, 0.9, 3.6  &   66.7, 94.4, 99.6   &    0.0, 0.1, 1.7  \\[3pt] 
 30.0  &  29.9 	& 30.0 & $_{-13.4}^{+21.8}$ 	& 3.0 & $_{-1.9}^{+1.9}$ &    0 & 90   &  1.00 & $_{-0.45}^{+0.73}$   & 	10.16 & $_{-6.66}^{+6.42}$   &   63.2, 95.9, 100.0    &   0.0, 0.0, 0.5  &   61.3, 95.7, 99.8   &    0.0, 0.0, 0.0  \\[3pt] 
 40.0  &  40.2 	& 39.8 & $_{-15.9}^{+21.3}$ 	& 3.9 & $_{-2.4}^{+2.4}$ &    1 & 90   &  1.01 & $_{-0.40}^{+0.54}$   & 	10.15 & $_{-6.12}^{+5.91}$   &   62.1, 95.8, 99.8    &   0.0, 0.0, 0.0  &   60.2, 95.1, 99.7   &    0.0, 0.0, 0.0  \\[3pt] 
 50.0  &  49.7 	& 49.9 & $_{-18.1}^{+20.0}$ 	& 4.9 & $_{-2.8}^{+2.8}$ &    0 & 90   &  1.00 & $_{-0.36}^{+0.40}$   & 	10.20 & $_{-5.83}^{+5.61}$   &   60.3, 96.5, 99.9    &   0.0, 0.0, 0.0  &   58.6, 95.3, 100.0   &    0.0, 0.0, 0.0  \\[3pt] 
 60.0  &  60.0 	& 60.3 & $_{-20.1}^{+17.9}$ 	& 5.8 & $_{-3.1}^{+3.1}$ &    0 & 90   &  1.00 & $_{-0.33}^{+0.30}$   & 	10.31 & $_{-5.51}^{+5.29}$   &   58.7, 97.1, 100.0    &   0.0, 0.0, 0.0  &   57.8, 96.5, 100.0   &    0.0, 0.0, 0.0  \\[3pt] 
 70.0  &  70.0 	& 69.5 & $_{-20.8}^{+15.2}$ 	& 6.8 & $_{-3.4}^{+3.4}$ &    1 & 90   &  1.01 & $_{-0.30}^{+0.22}$   & 	10.25 & $_{-5.10}^{+4.98}$   &   64.3, 96.5, 99.9    &   0.0, 0.0, 0.0  &   59.3, 96.3, 100.0   &    0.0, 0.0, 0.0  \\[3pt] 
 80.0  &  80.1 	& 78.3 & $_{-21.6}^{+11.9}$ 	& 7.9 & $_{-3.7}^{+3.7}$ &    2 & 90   &  1.02 & $_{-0.28}^{+0.16}$   & 	10.08 & $_{-4.65}^{+4.55}$   &   73.4, 96.2, 99.7    &   0.0, 0.0, 0.0  &   71.4, 95.6, 99.6   &    0.0, 0.0, 0.0  \\[3pt] 
 90.0  &  90.0 	& 85.1 & $_{-21.8}^{+9.0}$ 	& 8.7 & $_{-3.8}^{+3.8}$ &    5 & 90   &  1.06 & $_{-0.27}^{+0.11}$   & 	10.32 & $_{-4.49}^{+4.46}$   &   79.0, 94.9, 99.6    &   0.0, 0.0, 0.0  &   79.5, 95.2, 99.4   &    0.0, 0.0, 0.0  \\[3pt] 
 100.0  &  100.0 	& 93.3 & $_{-21.8}^{+4.7}$ 	& 9.8 & $_{-4.0}^{+4.0}$ &    7 & 90   &  1.07 & $_{-0.25}^{+0.05}$   & 	10.19 & $_{-4.16}^{+4.20}$   &   85.0, 96.5, 98.8    &   0.0, 0.0, 0.0  &   85.0, 96.5, 98.8   &    0.0, 0.0, 0.0  \\%[3pt] 
\midrule 
\multicolumn{2}{l}{ N = 100}	& &  & &	&	&	&	&	& & & &	  &   &  \\ [-2pt] 
\midrule 
 10.0  &  10.0 	& 9.5 & $_{-5.4}^{+13.2}$ 	& 0.9 & $_{-0.7}^{+0.7}$ &    5 & 91   &  1.05 & $_{-0.59}^{+1.46}$   & 	10.72 & $_{-7.80}^{+7.86}$   &   78.5, 98.6, 100.0    &   0.2, 0.9, 4.2  &   75.4, 97.6, 100.0   &    0.0, 0.0, 1.8  \\[3pt] 
 20.0  &  20.0 	& 20.1 & $_{-9.5}^{+14.9}$ 	& 2.0 & $_{-1.2}^{+1.2}$ &    0 & 90   &  1.00 & $_{-0.47}^{+0.74}$   & 	10.18 & $_{-6.24}^{+6.10}$   &   73.6, 98.2, 99.8    &   0.0, 0.0, 0.0  &   70.0, 97.1, 99.6   &    0.0, 0.0, 0.0  \\[3pt] 
 30.0  &  30.2 	& 30.3 & $_{-12.1}^{+16.1}$ 	& 3.0 & $_{-1.6}^{+1.6}$ &    1 & 90   &  0.99 & $_{-0.40}^{+0.53}$   & 	10.16 & $_{-5.44}^{+5.38}$   &   70.0, 96.0, 99.9    &   0.0, 0.0, 0.0  &   66.6, 95.5, 99.6   &    0.0, 0.0, 0.0  \\[3pt] 
 40.0  &  40.0 	& 40.2 & $_{-13.9}^{+16.4}$ 	& 3.8 & $_{-1.8}^{+1.8}$ &    1 & 90   &  0.99 & $_{-0.34}^{+0.41}$   & 	10.40 & $_{-4.89}^{+4.92}$   &   64.6, 95.1, 99.7    &   0.0, 0.0, 0.0  &   63.1, 93.1, 99.5   &    0.0, 0.0, 0.0  \\[3pt] 
 50.0  &  50.3 	& 49.9 & $_{-15.2}^{+16.3}$ 	& 4.8 & $_{-2.1}^{+2.1}$ &    0 & 90   &  1.00 & $_{-0.30}^{+0.33}$   & 	10.46 & $_{-4.49}^{+4.47}$   &   64.2, 94.0, 99.9    &   0.0, 0.0, 0.0  &   64.3, 92.6, 99.6   &    0.0, 0.0, 0.0  \\[3pt] 
 60.0  &  59.6 	& 59.4 & $_{-16.0}^{+15.6}$ 	& 5.6 & $_{-2.2}^{+2.2}$ &    1 & 91   &  1.01 & $_{-0.27}^{+0.26}$   & 	10.65 & $_{-4.20}^{+4.24}$   &   64.4, 95.3, 99.7    &   0.0, 0.0, 0.0  &   63.6, 94.6, 99.8   &    0.0, 0.0, 0.0  \\[3pt] 
 70.0  &  69.9 	& 69.4 & $_{-16.2}^{+13.7}$ 	& 6.7 & $_{-2.4}^{+2.4}$ &    1 & 90   &  1.01 & $_{-0.24}^{+0.20}$   & 	10.49 & $_{-3.82}^{+3.83}$   &   60.9, 96.4, 100.0    &   0.0, 0.0, 0.0  &   60.0, 95.4, 100.0   &    0.0, 0.0, 0.0  \\[3pt] 
 80.0  &  80.1 	& 79.5 & $_{-16.2}^{+10.7}$ 	& 7.7 & $_{-2.6}^{+2.6}$ &    1 & 90   &  1.01 & $_{-0.21}^{+0.14}$   & 	10.40 & $_{-3.55}^{+3.52}$   &   65.3, 97.3, 99.9    &   0.0, 0.0, 0.0  &   63.7, 96.9, 99.8   &    0.0, 0.0, 0.0  \\[3pt] 
 90.0  &  89.9 	& 88.0 & $_{-15.9}^{+7.3}$ 	& 8.7 & $_{-2.8}^{+2.8}$ &    2 & 90   &  1.02 & $_{-0.19}^{+0.08}$   & 	10.36 & $_{-3.31}^{+3.28}$   &   78.1, 97.3, 99.5    &   0.0, 0.0, 0.0  &   78.1, 97.1, 99.7   &    0.0, 0.0, 0.0  \\[3pt] 
 100.0  &  100.0 	& 96.0 & $_{-15.4}^{+3.1}$ 	& 9.7 & $_{-2.9}^{+2.9}$ &    4 & 90   &  1.04 & $_{-0.17}^{+0.03}$   & 	10.27 & $_{-3.08}^{+3.09}$   &   87.5, 98.1, 100.0    &   0.0, 0.0, 0.0  &   87.5, 98.1, 100.0   &    0.0, 0.0, 0.0  \\%[3pt] 
\bottomrule
\end{tabularx}
	\begin{tablenotes}[]\footnotesize
		\item[]\textit{Note} --- Uncertainties for $\bar{\theta}_{\mathrm{calc}}$ and $\bar{\theta}_{\mathrm{emit}}$ correspond to the mean values of the  68.3\% credible and confidence intervals, respectively.   These uncertainties are propagated for correction factors $\bar{f}_{\mathrm{calc}}$ and $\bar{f}_{\mathrm{calc}}$, which are calculated by dividing the true rate by $\bar{\theta}_{\mathrm{calc}}$ and $\bar{\theta}_{\mathrm{detect}}$, respectively. $\epsilon_{\text{calc}}$  and  $\epsilon_{\text{calc}}$  are the percent errors of the calculated mean occurrence and detection rates with respect to the true rate.
\end{tablenotes}
\end{ThreePartTable}  
\end{table*}
 
\setlength{\tabcolsep}{0.04in}
\begin{table*}\centering 
		\begin{ThreePartTable}
			\caption{Summary of mean simulation results: Low luminosity, Low $\rms$  \label{table:simulation_low-rms_lowLum}} 
\begin{tabularx}{\textwidth}{rrr@{\hspace{0.1pt}}lr@{\hspace{0.1pt}}l@{\hspace{0.16in}}llr@{\hspace{0.1pt}}lr@{\hspace{0.1pt}}l@{\hspace{0.16in}}ll@{\hspace{0.16in}}llc}
%\tabletypesize{\scriptsize}
   		&
		&
   		&
 		&
 		&
 		&
   		&
 		&
   		&
 		&
   		&
 		&
\multicolumn{2}{l}{Recovery rate for $\theta_{\mathrm{emit}}$ }  &
\multicolumn{2}{l}{Recovery rate for $\theta_{\mathrm{true}}$ } \\ %[-2pt] 
$\bar{\theta}_{\mathrm{true}}$  							&
$\bar{\theta}_{\mathrm{emit}}$		    					&
\multicolumn{2}{c}{$\bar{\theta}_{\mathrm{calc}}$}   		&
\multicolumn{2}{c}{$\bar{\theta}_{\mathrm{detect}}$} 		&
$\abs{\bar{\epsilon}_{\mathrm{calc}}   }$   				&
$\abs{\bar{\epsilon}_{\mathrm{detect}} }$ 					&
\multicolumn{2}{c}{$\bar{f}_{\mathrm{calc}}$}   			&
\multicolumn{2}{c}{$\bar{f}_{\mathrm{detect}}$} 			&
$\theta_{\mathrm{calc}}$ CI 								&
$\theta_{\mathrm{detect}}$ CI 								&
$\theta_{\mathrm{calc}}$ CI 								&
$\theta_{\mathrm{detect}}$ CI \\%[-2pt] 
(\%)  						& 
(\%)  						& 
\multicolumn{2}{c}{(\%)}  	& 
\multicolumn{2}{c}{(\%)}  	& 
(\%)	  					& 
(\%)  						& 
	  						& 
  							& 
	  						& 
	  						& 
(\%)  						& 
(\%)  						& 
(\%)  						& 
(\%)  						\\ %[-2pt] 
   		&
		&
   		&
   		&
   		&
   		&
 		&
 		&
 		&
   		&
 		&
 		&
68.3, 95.5, 99.7	&
68.3, 95.5, 99.7	&
68.3, 95.5, 99.7	&
68.3, 95.5, 99.7  	\\  [-2pt] 
\midrule  
\multicolumn{2}{l}{ N = 10}	& &  & &	&	&	&	&	& & & &	  &   &  \\ [-2pt] 
\midrule 
 10.0  &  10.6 	& 10.9 & $_{-6.1}^{+19.1}$ 	& 5.8 & $_{-4.7}^{+4.7}$ &    9 & 42   &  0.92 & $_{-0.51}^{+1.60}$   & 	1.73 & $_{-1.41}^{+1.21}$   &   92.4, 99.6, 100.0    &   70.1, 72.5, 72.5  &   86.7, 95.5, 99.4   &    39.3, 40.9, 40.9  \\[3pt] 
 20.0  &  19.3 	& 19.5 & $_{-9.8}^{+19.7}$ 	& 10.4 & $_{-7.5}^{+7.5}$ &    2 & 48   &  1.02 & $_{-0.51}^{+1.03}$   & 	1.92 & $_{-1.39}^{+1.23}$   &   85.5, 99.4, 100.0    &   60.6, 72.8, 74.2  &   62.0, 95.2, 99.4   &    61.5, 62.4, 62.4  \\[3pt] 
 30.0  &  29.8 	& 29.8 & $_{-13.1}^{+19.8}$ 	& 15.9 & $_{-9.4}^{+9.4}$ &    1 & 47   &  1.01 & $_{-0.44}^{+0.67}$   & 	1.89 & $_{-1.12}^{+1.12}$   &   80.3, 99.4, 100.0    &   54.3, 73.3, 80.1  &   62.2, 94.9, 99.5   &    31.2, 48.4, 77.5  \\[3pt] 
 40.0  &  39.8 	& 40.1 & $_{-15.8}^{+19.7}$ 	& 21.2 & $_{-11.1}^{+11.1}$ &    0 & 47   &  1.00 & $_{-0.39}^{+0.49}$   & 	1.89 & $_{-0.99}^{+1.03}$   &   77.2, 99.0, 100.0    &   49.2, 73.1, 85.5  &   60.9, 94.4, 100.0   &    47.3, 65.2, 89.5  \\[3pt] 
 50.0  &  50.1 	& 50.7 & $_{-17.5}^{+18.5}$ 	& 27.0 & $_{-12.3}^{+12.3}$ &    1 & 46   &  0.99 & $_{-0.34}^{+0.36}$   & 	1.85 & $_{-0.84}^{+0.89}$   &   73.3, 98.4, 100.0    &   41.4, 68.4, 86.0  &   57.2, 94.1, 100.0   &    45.3, 74.1, 83.5  \\[3pt] 
 60.0  &  60.6 	& 61.9 & $_{-18.9}^{+16.3}$ 	& 33.1 & $_{-12.9}^{+12.9}$ &    3 & 45   &  0.97 & $_{-0.30}^{+0.25}$   & 	1.81 & $_{-0.70}^{+0.78}$   &   72.5, 98.3, 99.9    &   34.4, 66.1, 84.0  &   57.1, 94.3, 100.0   &    34.0, 62.3, 85.9  \\[3pt] 
 70.0  &  69.9 	& 70.5 & $_{-19.5}^{+14.2}$ 	& 38.3 & $_{-13.9}^{+13.9}$ &    1 & 45   &  0.99 & $_{-0.27}^{+0.20}$   & 	1.83 & $_{-0.66}^{+0.67}$   &   76.4, 98.3, 100.0    &   21.4, 57.1, 80.5  &   58.1, 95.9, 99.7   &    8.1, 39.9, 62.8  \\[3pt] 
 80.0  &  80.3 	& 79.6 & $_{-20.0}^{+11.1}$ 	& 43.3 & $_{-13.9}^{+13.9}$ &    0 & 46   &  1.00 & $_{-0.25}^{+0.14}$   & 	1.85 & $_{-0.59}^{+0.64}$   &   79.8, 98.9, 99.9    &   13.4, 47.3, 72.6  &   70.1, 96.2, 99.8   &    14.3, 52.1, 76.5  \\[3pt] 
 90.0  &  89.8 	& 87.4 & $_{-20.1}^{+7.6}$ 	& 47.8 & $_{-14.3}^{+14.3}$ &    3 & 47   &  1.03 & $_{-0.24}^{+0.09}$   & 	1.88 & $_{-0.57}^{+0.58}$   &   84.4, 97.2, 99.9    &   6.4, 33.5, 62.0  &   81.5, 95.1, 99.2   &    6.5, 37.0, 65.8  \\[3pt] 
 100.0  &  100.0 	& 95.0 & $_{-19.5}^{+3.7}$ 	& 54.1 & $_{-14.8}^{+14.8}$ &    5 & 46   &  1.05 & $_{-0.22}^{+0.04}$   & 	1.85 & $_{-0.51}^{+0.48}$   &   87.2, 97.1, 99.8    &   4.0, 16.1, 47.4  &   87.2, 97.1, 99.8   &    4.0, 16.1, 47.4  \\%[3pt] 
\midrule 
\multicolumn{2}{l}{ N = 20}	& &  & &	&	&	&	&	& & & &	  &   &  \\ [-2pt] 
\midrule 
 10.0  &  10.0 	& 10.1 & $_{-5.5}^{+11.9}$ 	& 5.4 & $_{-3.7}^{+3.7}$ &    1 & 46   &  0.99 & $_{-0.54}^{+1.17}$   & 	1.86 & $_{-1.29}^{+1.25}$   &   88.1, 100.0, 100.0    &   62.8, 74.4, 76.5  &   68.1, 97.2, 99.8   &    64.6, 64.9, 64.9  \\[3pt] 
 20.0  &  20.7 	& 21.0 & $_{-9.3}^{+13.6}$ 	& 11.3 & $_{-6.1}^{+6.1}$ &    5 & 44   &  0.95 & $_{-0.42}^{+0.62}$   & 	1.78 & $_{-0.96}^{+0.98}$   &   82.4, 99.6, 100.0    &   49.4, 77.6, 88.3  &   67.4, 95.7, 99.1   &    43.8, 69.1, 89.7  \\[3pt] 
 30.0  &  29.7 	& 30.8 & $_{-11.6}^{+14.7}$ 	& 16.4 & $_{-7.9}^{+7.9}$ &    3 & 45   &  0.97 & $_{-0.37}^{+0.46}$   & 	1.83 & $_{-0.88}^{+0.85}$   &   82.8, 99.3, 100.0    &   38.4, 73.9, 89.6  &   65.7, 94.1, 99.5   &    18.8, 59.4, 75.9  \\[3pt] 
 40.0  &  39.7 	& 40.5 & $_{-13.2}^{+15.1}$ 	& 21.5 & $_{-9.3}^{+9.3}$ &    1 & 46   &  0.99 & $_{-0.32}^{+0.37}$   & 	1.86 & $_{-0.80}^{+0.73}$   &   79.4, 98.5, 100.0    &   22.6, 63.9, 84.8  &   64.1, 93.0, 99.5   &    20.3, 61.3, 80.3  \\[3pt] 
 50.0  &  49.6 	& 50.6 & $_{-14.3}^{+15.1}$ 	& 27.1 & $_{-10.4}^{+10.4}$ &    1 & 46   &  0.99 & $_{-0.28}^{+0.29}$   & 	1.85 & $_{-0.71}^{+0.64}$   &   82.3, 99.5, 100.0    &   12.7, 53.1, 79.8  &   65.0, 93.6, 99.7   &    18.2, 47.3, 75.7  \\[3pt] 
 60.0  &  60.1 	& 61.6 & $_{-14.8}^{+14.2}$ 	& 32.7 & $_{-11.2}^{+11.2}$ &    3 & 45   &  0.97 & $_{-0.24}^{+0.22}$   & 	1.83 & $_{-0.63}^{+0.57}$   &   75.7, 98.7, 100.0    &   8.9, 38.6, 66.4  &   64.1, 93.4, 99.8   &    12.9, 42.2, 66.7  \\[3pt] 
 70.0  &  70.3 	& 71.8 & $_{-14.9}^{+12.2}$ 	& 38.3 & $_{-11.6}^{+11.6}$ &    3 & 45   &  0.98 & $_{-0.20}^{+0.17}$   & 	1.83 & $_{-0.55}^{+0.52}$   &   76.3, 98.9, 99.9    &   3.9, 24.3, 53.5  &   60.0, 94.9, 99.8   &    2.7, 16.1, 38.1  \\[3pt] 
 80.0  &  80.4 	& 81.3 & $_{-14.6}^{+9.7}$ 	& 43.6 & $_{-11.9}^{+11.9}$ &    2 & 45   &  0.98 & $_{-0.18}^{+0.12}$   & 	1.83 & $_{-0.50}^{+0.48}$   &   77.8, 99.8, 100.0    &   1.7, 12.6, 36.3  &   62.6, 97.8, 99.9   &    2.8, 15.2, 37.8  \\[3pt] 
 90.0  &  90.3 	& 89.8 & $_{-13.9}^{+6.3}$ 	& 49.0 & $_{-12.0}^{+12.0}$ &    0 & 46   &  1.00 & $_{-0.16}^{+0.07}$   & 	1.84 & $_{-0.45}^{+0.43}$   &   83.7, 98.9, 100.0    &   0.3, 4.6, 17.4  &   77.9, 98.1, 99.8   &    0.6, 5.7, 20.8  \\[3pt] 
 100.0  &  100.0 	& 97.3 & $_{-13.0}^{+2.2}$ 	& 54.1 & $_{-11.7}^{+11.7}$ &    3 & 46   &  1.03 & $_{-0.14}^{+0.02}$   & 	1.85 & $_{-0.40}^{+0.40}$   &   89.5, 98.7, 99.8    &   0.0, 0.5, 5.1  &   89.5, 98.7, 99.8   &    0.0, 0.5, 5.1  \\%[3pt] 
\midrule 
\multicolumn{2}{l}{ N = 50}	& &  & &	&	&	&	&	& & & &	  &   &  \\ [-2pt] 
\midrule 
 10.0  &  10.2 	& 10.6 & $_{-4.7}^{+6.9}$ 	& 5.7 & $_{-3.1}^{+3.1}$ &    6 & 43   &  0.94 & $_{-0.41}^{+0.61}$   & 	1.76 & $_{-0.95}^{+0.91}$   &   85.2, 99.8, 100.0    &   46.5, 80.0, 91.4  &   65.3, 96.7, 99.9   &    53.6, 78.6, 88.9  \\[3pt] 
 20.0  &  20.4 	& 20.5 & $_{-6.8}^{+8.4}$ 	& 11.0 & $_{-4.2}^{+4.2}$ &    2 & 45   &  0.98 & $_{-0.32}^{+0.40}$   & 	1.82 & $_{-0.70}^{+0.70}$   &   81.7, 99.7, 99.9    &   16.9, 61.0, 83.2  &   66.8, 94.6, 99.5   &    26.8, 59.8, 79.2  \\[3pt] 
 30.0  &  30.1 	& 30.7 & $_{-8.3}^{+9.5}$ 	& 16.5 & $_{-5.1}^{+5.1}$ &    2 & 45   &  0.98 & $_{-0.26}^{+0.30}$   & 	1.82 & $_{-0.57}^{+0.57}$   &   83.3, 99.2, 100.0    &   6.8, 38.0, 70.2  &   67.5, 94.4, 99.5   &    4.6, 28.8, 55.8  \\[3pt] 
 40.0  &  39.9 	& 40.7 & $_{-9.3}^{+10.0}$ 	& 21.6 & $_{-5.9}^{+5.9}$ &    2 & 46   &  0.98 & $_{-0.22}^{+0.24}$   & 	1.85 & $_{-0.51}^{+0.48}$   &   79.5, 99.0, 100.0    &   2.1, 18.1, 50.8  &   66.6, 94.8, 99.6   &    4.0, 22.1, 47.3  \\[3pt] 
 50.0  &  50.0 	& 51.1 & $_{-9.8}^{+10.1}$ 	& 27.5 & $_{-6.4}^{+6.4}$ &    2 & 45   &  0.98 & $_{-0.19}^{+0.19}$   & 	1.82 & $_{-0.42}^{+0.41}$   &   79.0, 99.0, 100.0    &   0.6, 8.2, 31.8  &   65.0, 93.1, 99.7   &    1.5, 10.9, 33.1  \\[3pt] 
 60.0  &  60.0 	& 60.9 & $_{-10.1}^{+9.9}$ 	& 32.8 & $_{-6.6}^{+6.6}$ &    2 & 45   &  0.99 & $_{-0.16}^{+0.16}$   & 	1.83 & $_{-0.37}^{+0.38}$   &   80.6, 98.6, 100.0    &   0.1, 2.5, 14.2  &   66.1, 94.8, 99.5   &    0.5, 4.4, 16.1  \\[3pt] 
 70.0  &  70.1 	& 71.0 & $_{-10.0}^{+9.3}$ 	& 37.8 & $_{-6.8}^{+6.8}$ &    1 & 46   &  0.99 & $_{-0.14}^{+0.13}$   & 	1.85 & $_{-0.33}^{+0.34}$   &   78.4, 97.7, 100.0    &   0.0, 0.6, 4.1  &   65.2, 94.2, 99.4   &    0.0, 0.0, 2.6  \\[3pt] 
 80.0  &  79.8 	& 81.0 & $_{-9.4}^{+7.8}$ 	& 43.2 & $_{-6.9}^{+6.9}$ &    1 & 46   &  0.99 & $_{-0.12}^{+0.10}$   & 	1.85 & $_{-0.29}^{+0.30}$   &   75.7, 99.2, 100.0    &   0.0, 0.0, 0.6  &   65.1, 95.1, 99.8   &    0.0, 0.1, 0.2  \\[3pt] 
 90.0  &  90.0 	& 91.1 & $_{-8.5}^{+5.0}$ 	& 48.7 & $_{-7.0}^{+7.0}$ &    1 & 46   &  0.99 & $_{-0.09}^{+0.05}$   & 	1.85 & $_{-0.27}^{+0.27}$   &   79.2, 99.2, 100.0    &   0.0, 0.0, 0.3  &   66.9, 98.4, 99.9   &    0.0, 0.0, 0.0  \\[3pt] 
 100.0  &  100.0 	& 99.1 & $_{-7.0}^{+0.8}$ 	& 54.2 & $_{-7.0}^{+7.0}$ &    1 & 46   &  1.01 & $_{-0.07}^{+0.01}$   & 	1.85 & $_{-0.24}^{+0.24}$   &   95.3, 99.5, 100.0    &   0.0, 0.0, 0.0  &   95.3, 99.5, 100.0   &    0.0, 0.0, 0.0  \\%[3pt] 
\midrule 
\multicolumn{2}{l}{ N = 100}	& &  & &	&	&	&	&	& & & &	  &   &  \\ [-2pt] 
\midrule 
 10.0  &  9.9 	& 10.3 & $_{-3.6}^{+4.6}$ 	& 5.5 & $_{-2.2}^{+2.2}$ &    3 & 45   &  0.97 & $_{-0.33}^{+0.43}$   & 	1.82 & $_{-0.72}^{+0.74}$   &   85.9, 99.7, 100.0    &   21.0, 68.0, 89.9  &   67.6, 95.8, 99.7   &    28.6, 62.7, 81.0  \\[3pt] 
 20.0  &  20.1 	& 20.8 & $_{-5.1}^{+6.0}$ 	& 11.1 & $_{-3.1}^{+3.1}$ &    4 & 44   &  0.96 & $_{-0.24}^{+0.27}$   & 	1.80 & $_{-0.50}^{+0.49}$   &   83.9, 99.6, 100.0    &   2.6, 27.3, 60.5  &   66.6, 94.8, 99.3   &    7.4, 31.5, 58.6  \\[3pt] 
 30.0  &  30.2 	& 31.1 & $_{-6.2}^{+6.8}$ 	& 16.6 & $_{-3.7}^{+3.7}$ &    4 & 45   &  0.97 & $_{-0.19}^{+0.21}$   & 	1.81 & $_{-0.41}^{+0.40}$   &   83.0, 99.5, 100.0    &   0.1, 6.0, 28.1  &   69.1, 94.7, 99.8   &    0.4, 7.0, 26.0  \\[3pt] 
 40.0  &  40.2 	& 41.0 & $_{-6.8}^{+7.2}$ 	& 21.9 & $_{-4.1}^{+4.1}$ &    2 & 45   &  0.98 & $_{-0.16}^{+0.17}$   & 	1.83 & $_{-0.35}^{+0.34}$   &   83.7, 99.8, 100.0    &   0.0, 0.6, 7.1  &   66.7, 95.2, 99.9   &    0.1, 1.8, 13.0  \\[3pt] 
 50.0  &  49.9 	& 50.6 & $_{-7.2}^{+7.3}$ 	& 27.1 & $_{-4.3}^{+4.3}$ &    1 & 46   &  0.99 & $_{-0.14}^{+0.14}$   & 	1.85 & $_{-0.30}^{+0.30}$   &   80.1, 99.3, 100.0    &   0.0, 0.0, 1.1  &   67.7, 95.1, 99.5   &    0.0, 0.3, 2.2  \\[3pt] 
 60.0  &  60.2 	& 61.1 & $_{-7.4}^{+7.3}$ 	& 32.4 & $_{-4.6}^{+4.6}$ &    2 & 46   &  0.98 & $_{-0.12}^{+0.12}$   & 	1.85 & $_{-0.26}^{+0.27}$   &   79.6, 99.1, 100.0    &   0.0, 0.0, 0.1  &   65.4, 94.6, 99.6   &    0.0, 0.0, 0.5  \\[3pt] 
 70.0  &  70.0 	& 71.1 & $_{-7.3}^{+6.9}$ 	& 38.1 & $_{-4.8}^{+4.8}$ &    2 & 46   &  0.98 & $_{-0.10}^{+0.10}$   & 	1.84 & $_{-0.23}^{+0.23}$   &   79.5, 98.6, 99.9    &   0.0, 0.0, 0.0  &   65.1, 94.1, 99.8   &    0.0, 0.0, 0.0  \\[3pt] 
 80.0  &  80.1 	& 81.1 & $_{-6.7}^{+6.0}$ 	& 43.5 & $_{-4.9}^{+4.9}$ &    1 & 46   &  0.99 & $_{-0.08}^{+0.07}$   & 	1.84 & $_{-0.21}^{+0.20}$   &   74.7, 97.7, 100.0    &   0.0, 0.0, 0.0  &   64.1, 93.4, 99.3   &    0.0, 0.0, 0.0  \\[3pt] 
 90.0  &  89.9 	& 91.0 & $_{-5.8}^{+4.2}$ 	& 48.6 & $_{-5.0}^{+5.0}$ &    1 & 46   &  0.99 & $_{-0.06}^{+0.05}$   & 	1.85 & $_{-0.19}^{+0.19}$   &   71.4, 98.8, 100.0    &   0.0, 0.0, 0.0  &   62.1, 96.4, 99.9   &    0.0, 0.0, 0.0  \\[3pt] 
 100.0  &  100.0 	& 99.6 & $_{-4.3}^{+0.3}$ 	& 54.2 & $_{-4.9}^{+4.9}$ &    0 & 46   &  1.00 & $_{-0.04}^{+0.00}$   & 	1.85 & $_{-0.17}^{+0.17}$   &   97.2, 99.4, 100.0    &   0.0, 0.0, 0.0  &   97.2, 99.4, 100.0   &    0.0, 0.0, 0.0  \\%[3pt] 
\bottomrule
\end{tabularx}
	\begin{tablenotes}[]\footnotesize
		\item[]\textit{Note} ---Uncertainties for $\bar{\theta}_{\mathrm{calc}}$ and $\bar{\theta}_{\mathrm{emit}}$ correspond to the mean values of the  68.3\% credible and confidence intervals, respectively.   These uncertainties are propagated for correction factors $\bar{f}_{\mathrm{calc}}$ and $\bar{f}_{\mathrm{calc}}$, which are calculated by dividing the true rate by $\bar{\theta}_{\mathrm{calc}}$ and $\bar{\theta}_{\mathrm{detect}}$, respectively. $\epsilon_{\text{calc}}$  and  $\epsilon_{\text{calc}}$  are the percent errors of the calculated mean occurrence and detection rates with respect to the true rate.
	\end{tablenotes}
\end{ThreePartTable}  
\end{table*}

%%%%%%%%%%%%%%%%%%%%%%%%%%%%%%%%%%%%%%%%%%%%%%%%%%

%%%%%%%%%%%%%%%%%%%%%%%%%%%%%%%%%%%%%%%%%%%
%%%%%%%%%%       Appendix      %%%%%%%%%%%%
%%%%%%%%%%%%%%%%%%%%%%%%%%%%%%%%%%%%%%%%%%%

%\clearpage
%\end{document}

% Don't change these lines
\bsp	% typesetting comment
\label{lastpage}
\end{document}